\theoremstyle{plain}
\newtheorem{theorem}{Theorem}[section]
\newtheorem{lemma}[theorem]{Lemma}
\newtheorem{proposition}[theorem]{Proposition}
\newcommand{\hide}[1]{}
\begin{document}
\title{Crystallization for a Brenner-like potential}
\author{Brittan Farmer\\ {\small School of Mathematics, University of Minnesota}
\and Selim Esedo\={g}lu\\  {\small Department of Mathematics, University of Michigan}
\and Peter Smereka\thanks{Deceased}\\ {\small Department of Mathematics, University of Michigan}
}
\date{\today}

\pagestyle{plain}

\maketitle
\thispagestyle{fancy}
\begin{abstract}
	Graphene is a carbon molecule with the structure of a honeycomb lattice. We show how this structure can arise in two dimensions as the minimizer of an interaction energy with two-body and three-body terms. In the engineering literature, the Brenner potential is commonly used to describe the interactions between carbon atoms. We consider a potential of Stillinger-Weber type that incorporates certain characteristics of the Brenner potential: the preferred bond angles are 180 degrees and all interactions have finite range. We show that the thermodynamic limit of the ground state energy per particle is the same as that of a honeycomb lattice. We also prove that, subject to periodic boundary conditions, the minimizers are translated versions of the honeycomb lattice.
\end{abstract}

%%% Introduction %%%
\section{Introduction}
Understanding why matter has a crystal structure at low temperature is a fundamental scientific question. At zero temperature, this problem can be treated mathematically by showing that the ground states of a given interaction potential have a periodic structure. This has been studied in one, two, and three dimensions for different choices of the potential. In this article, we consider a crystallization problem in two dimensions, building on earlier results by Heitmann and Radin \cite{heitmann80}, Radin \cite{radin81}, Theil \cite{theil06}, E and Li \cite{e09}, and Mainini and Stefanelli \cite{mainini14}. (Further references to results in one and three dimensions can be found in \cite{mainini14}.) In particular we wish to show that for an interaction potential consisting of a pair potential and a three-body term preferring 180\degree\ angles, a honeycomb lattice is the ground state. This problem is motivated by the use of the Brenner potential \cite{brenner90, brenner02} to describe carbon-carbon interactions. This potential was designed to model chemical bonding in small hydrocarbon molecules as well as graphite and diamond lattices \cite{brenner90}. The potential has been used to study the synthesis of carbon nanotubes \cite{maiti94, shibuta03, ding04, zhao05}, as well as the mechanics of carbon nanotubes \cite{yakobson96, zhang02}. Carbon in its graphene form has the structure of a honeycomb lattice, and the structure of carbon nanotubes is that of graphene wrapped around a cylinder. In this article, we connect the minimization of a Brenner-like potential in two dimensions to the formation of a honeycomb lattice structure.

The Brenner potential includes coupling between the dependence on the bond lengths and the bond angles. (See Appendix~\ref{sec:brenner} for more information about the Brenner potential.) However, we consider a potential of Stillinger-Weber type, which decouples the bond length and bond angle dependence into separate two-body and three-body terms. Nonetheless, our potential preserves certain features of the Brenner potential:
\begin{enumerate}
	\item the three-body contribution is minimized for bond angles of 180\degree, and
	\item the interactions have a cutoff distance beyond which they are zero.
\end{enumerate}

The formation of a honeycomb lattice by energy minimization in two dimensions has been studied previously by E and Li \cite{e09} and Mainini and Stefanelli \cite{mainini14}, under different assumptions on the potential. In both articles, the potential energy consists of a two-body term and a three-body term. In \cite{e09}, the two-body term includes long-distance interactions, whereas in \cite{mainini14}, it includes only first neighbor interactions. Long-distance interactions were included in \cite{e09} to allow potentials with Lennard-Jones-like decay. First neighbor interactions were used in \cite{mainini14} to represent the covalent bonds modeled by potentials like the Brenner potential. Our choice of first neighbor interactions in this article is motivated by the same reasons. As a result, we do not need to estimate any long-distance interactions, which was done in \cite{e09} and in earlier work by Theil \cite{theil06} using rigidity estimates, such as the one in \cite{friesecke02}.

In \cite{e09}, \cite{mainini14}, and the present work, the angular contributions to the three-body term are premultiplied by a cutoff function, so that contributions are only included if the central atom is less than a cutoff distance from the other two atoms. This cutoff radius is chosen to lie between the first nearest neighbors and second nearest neighbors in the honeycomb lattice. However, in both \cite{e09} and \cite{mainini14}, the three-body term prefers 120\degree\ bond angles, whereas our three-body term prefers 180\degree\ bond angles in order to match the angular contribution to the Brenner potential. This requires new arguments to explain the formation of a honeycomb lattice. In Lemma 2.6 of \cite{e09}, a uniform bond angle of 120\degree\ is enforced by taking the strength of the three-body interaction term high enough. However, if the three-body term is made sufficiently strong in our potential, then strings of atoms would have lower energy than the honeycomb lattice. The angular potential used by Mainini and Stefanelli \cite{mainini14} grows linearly from the minimum at 120\degree, unlike the quadratic Stillinger-Weber form used in \cite{e09}. In particular, this potential is non-differentiable at its minimum, which is a form of stickiness. In \cite{mainini14}, the authors also include conditions on the angular potential that ensure it is large for small angles, similar to our conditions \eqref{eq:V3_angle120} -- \eqref{eq:V3_angle72}. These assumptions suffice for them to prove that for a finite number of particles, the ground state of their potential is a subset of the honeycomb lattice. This is not true in general for our potential. For instance, in a system of three atoms, the configuration with a 180\degree\ bond angle would have a lower energy than the one with a 120\degree\ bond angle. The numerical studies of Kosimov et al. \cite{kosimov08, kosimov10} suggest that for the Brenner potential, this type of behavior is generic. For some cluster sizes, the ground state will be topologically equivalent to a honeycomb lattice, but for others, the ground states will contain a small number of defects. Even the topologically honeycomb structures are not true subsets of the honeycomb lattice, as the bond lengths and bond angles are not all equal, due to surface effects. Based on this work, it is reasonable to expect that for our Brenner-like potential, clusters of a variety of sizes will have ground states which are not subsets of the honeycomb lattice.

As a consequence, our results will be similar to those in Theil \cite{theil06} and E and Li \cite{e09}. In those works, long-range pair interactions made it possible that finite ground states could have a small number of defects, or relaxation of atomic positions at the boundary. Thus, they consider the minimal energy in an asymptotic limit, as well as the crystallization of infinite configurations subject to boundary conditions. We do the same in this article. However, the possibility of non-crystalline ground states is not due to long-distance interactions, as it was in \cite{theil06} and \cite{e09}, but instead due to the balance of short-range two-body and short-range three-body contributions. We obtain convergence of the per-particle ground-state energy to the same value as achieved by a honeycomb lattice, as the number of particles increases. If we enforce periodic boundary conditions, we obtain that ground states are honeycomb lattices. As in \cite{theil06} and \cite{e09}, our proofs involve obtaining lower and upper bounds on the ground state energy which, when normalized by $N$, converge to the same value as $N$ grows.

The structure of the article is as follows. In Section~\ref{sec:energy}, we define the potential function under consideration and give assumptions that will suffice to prove the formation of a honeycomb lattice. In Section~\ref{sec:estimates}, we prove an estimate on the per-particle energy in the thermodynamic limit and also give some results about the scaling of the excess surface energy. In Section~\ref{sec:honeycomb} we show that for periodic boundary conditions, the ground state of the energy is a honeycomb lattice.

%%% Energy %%%
\section{Energy of the system}
\label{sec:energy}
We consider a system of $N$ particles with index set $X_N$ and positions given by a map $y : X_N \rightarrow \mathbb{R}^2$. If $N$ is fixed, we will write $X = X_N$. We let $\theta_{x', x, x''}$ denote the angle from vector $y(x') - y(x)$ to vector $y(x'') - y(x)$, measured in the counterclockwise direction. See Figure~\ref{fig:angle}.

% Schematic
\begin{figure}[tb]
\centering
\includegraphics[width=.3\textwidth]{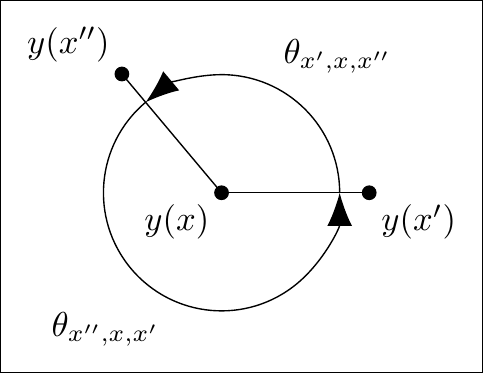}
\caption{Figure depicting our convention for $\theta_{x', x, x''}$.}
\label{fig:angle}
\end{figure}

The total energy of the system is:
\begin{align}
\label{eq:energy_full}
\begin{split}
	V( \{y\} ) = {} & \frac{1}{2} \sum_{x \in X} \sum_{\substack{x' \in X\\ x' \neq x}} V_2( |y(x') - y(x)| )\\
	& \quad + \frac{1}{2} \sum_{x \in X} \sum_{\substack{x', x'' \in X\\ x, x', x'' \text{ distinct}}} V_3 (\theta_{x', x, x''}) f( |y(x') - y(x)| ) f( |y(x'') - y(x)|).
\end{split}
\end{align}
The two-body term is the energy of the bond lengths between pairs of atoms. The three-body term is the energy of the bond angles formed by triples of atoms.

We will use the following short-hand for the energy of a pair or triple of atoms:
\begin{align*}
	e(\{ x, x' \}) &:= V_2( |y(x') - y(x)| ),\\
	a(x, x', x'') &:= V_3 (\theta_{x', x, x''}) f( |y(x') - y(x)| ) f( |y(x'') - y(x)|).
\end{align*}
Since the edge energy $e$ is independent of the ordering of the pair, we use the set notation for its argument. The factor of $\frac{1}{2}$ in the two-body term of \eqref{eq:energy_full} is present because each pair gets counted twice in the sum. For the angle energy $a$, the order of the arguments matters, as the first argument is the central atom in the bond angle and the second and third argument determine the direction in which we measure the angle. However, because our assumptions on $V_3(\theta)$ will imply that $V_3(\theta_{x'', x, x'}) = V_3(2\pi -  \theta_{x', x, x''}) = V_3(\theta_{x', x, x''})$, the angle energy $a$ is symmetric in the last two arguments, i.e. $a(x, x', x'') = a(x, x'', x')$, so each bond angle effectively gets counted twice in the three-body term in \eqref{eq:energy_full}. A factor of $\frac{1}{2}$ is used to accommodate for this.

% Assumptions on the potentials %
We will make assumptions on our potentials which are compatible with the thermodynamic limit of the per-particle ground state energy being that of the honeycomb lattice. The (normalized) honeycomb lattice can be defined, as in \cite{e09}, as
\begin{equation*}
	H = \{ \xi = m a_1 + n a_2 + \ell b : m,n \in \mathbb{Z}, \ell = 0 \text{ or } 1 \},
\end{equation*}
where $a_1 = (\sqrt{3}, 0)$, $a_2 = \left( \frac{\sqrt{3}}{2}, \frac{3}{2} \right)$, and $b = (\sqrt{3}, 1)$. See Figure~\ref{fig:honeycomb}.
Note that in the honeycomb lattice, first-neighbors are distance 1 apart and second-neighbors are $\sqrt{3}$ apart.

% Honeycomb lattice
\begin{figure}[tb]
\centering
\includegraphics[width=0.3\textwidth]{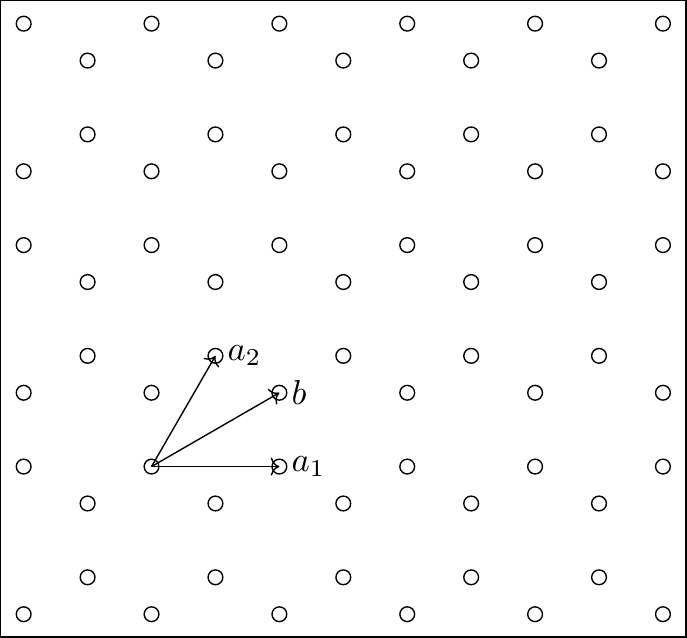}
\caption{A portion of the honeycomb lattice, with the basis vectors $a_1$ and $a_2$ and shift vector $b$.}
\label{fig:honeycomb}
\end{figure}

We make the following assumptions on the two-body potential. Let $0 < \alpha < \frac{1}{3}$ be a parameter. The two-body potential $V_2 = V_2(r; \alpha) : [0, \infty) \rightarrow (-\infty, \infty)$ satisfies the following assumptions:
\begin{align}
	V_2 &\in C^2(1 - \alpha, \infty), \label{eq:V2_smooth}\\
	V_2(r) &\geq \frac{1}{\alpha} \text{ for } r \leq 1-\alpha, \label{eq:V2_short_range}\\
	V_2''(r) &\geq 1\text{ for } r \in (1 - \alpha, 1 + \alpha), \label{eq:V2_convex}\\
	V_2(r) &\geq -\alpha \text{ for } r \in [1+\alpha, R), \label{eq:V2_mid_range}\\
	V_2(r) &= 0 \text{ for } r \geq R \text{, and} \label{eq:V2_long_range}\\
	\min_{r \geq 0} V_2(r) &= V_2(1) = -1 \label{eq:V2_normalized}.
\end{align}
Assumptions \eqref{eq:V2_smooth} -- \eqref{eq:V2_mid_range} were also used by Theil in \cite{theil06} (with $R = \frac{4}{3}$) and by E and Li \cite{e09} (with $R = \frac{3}{2}$). The pair potentials employed by Theil in \cite{theil06} and E and Li in \cite{e09} could be called Lennard-Jones-like, since their assumptions are compatible with the growth properties of the Lennard-Jones potential as $r \rightarrow 0$ and $r \rightarrow \infty$, but the well may be narrower than for the Lennard-Jones potential. Instead of the decay estimates on $V_2(r)$ used in \cite{theil06, e09}, we assume in \eqref{eq:V2_long_range} that there is zero interaction beyond a cutoff distance $R$. Thus, our pair potentials are similar to a truncated Lennard-Jones potential. As a result of the cutoff, we adopt the simpler normalization \eqref{eq:V2_normalized}.

We require that the cutoff $R$ satisfy $1 + \alpha < R < \sqrt{3}$ so that in a perfect honeycomb lattice, the only interactions are between first-neighbors. For the sake of definiteness, we choose $R = \frac{3}{2}$.\footnote{For the original Brenner potential \cite{brenner90}, the bond length for a honeycomb lattice having the lowest energy is either $1.42 \text{ \AA }$ or $1.45 \text{ \AA }$, depending on which of the two parameter sets is used. The Brenner potential is zero for interparticle distances greater than $2.0 \text{ \AA }$, which is about 1.4 times the optimal bond length for each of the two sets. This places the cutoff between the first and second neighbors in a perfect honeycomb lattice.} An example of a function $V_2$ satisfying assumptions \eqref{eq:V2_smooth} -- \eqref{eq:V2_normalized} is shown in Figure~\ref{fig:V2_example}. The parameter $\alpha$ plays several roles in determining the shape of the potential. As $\alpha$ decreases to zero, the close-range repulsion becomes stronger, the well becomes narrower, and the mid-range interaction becomes weaker.

% Example of the potential V_2
\begin{figure}[tb]
\centering
\includegraphics[width=.4\textwidth]{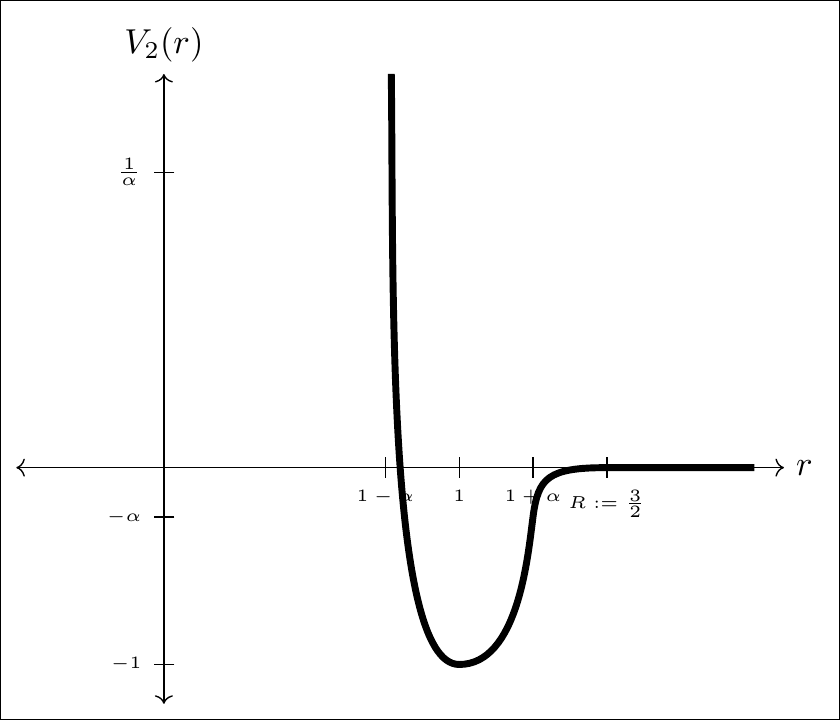}
\caption{Schematic of a function $V_2(r)$ satisfying the given assumptions.}
\label{fig:V2_example}
\end{figure}

The function $f$ in the three-body potential ensures that angular contributions are only included for triples where the last two particles are within a cutoff distance of the first. Various choices of this function are possible, but we use the cutoff function employed in the Brenner potential \cite{brenner90}. E and Li \cite{e09} and Mainini and Stefanelli \cite{mainini14} use cutoff functions that are different in detail, but share important qualitative features with ours (non-negativity, support on an interval bounded above by the second nearest neighbor distance, and strict positivity where the pair interaction is significant).

The cutoff function in the Brenner potential is parametrized by $R_1 < R_2$, which should both be chosen to be smaller than $\sqrt{3}$, so that in the honeycomb lattice, only bond angles between an atom and its first neighbors are included. We take $R_1 = \frac{3}{2}$ and $R_2 = \frac{5}{3}$. The cutoff function is:
\begin{equation*}
	f(r) =
	\begin{cases}
	1 & \text{if $r < R_1$}\\
	\frac{1}{2} \left( 1 + \cos( \pi (r - R_1)/(R_2 - R_1) ) \right) & \text{if $R_1 < r < R_2$}\\
	0 & \text{if $R_2 < r$}.
	\end{cases}
\end{equation*}
This $C^1(0,\infty)$ function has a smooth, monotonically decreasing transition from the value 1 to the value 0 on the interval $(R_1, R_2)$. Other forms of the cut-off function are possible for our results. Besides the generic properties mentioned in the previous paragraph, our proof of Lemma 3.12 requires that the cutoff be positive on the interval $[0, \frac{3}{2}]$, which is the entire interval were the pair interaction in non-zero. A different form of the cutoff function would require adjustments to some of the constants in our assumptions.

The angular potential $V_3 = V_3(\theta) : [0, 2\pi] \rightarrow [0, \infty)$ satisfies the following:
\begin{align}
	V_3 &\in C^2(0, 2\pi), \label{eq:V3_smooth}\\
	\text{there exists $m > 0$ such that } V_3''(\theta) &\geq m \text{ for all } \theta \in [0, 2\pi], \label{eq:V3_convex}\\
	V_3(\theta) &= V_3(2\pi - \theta) \text{ for all } \theta \in [0, 2\pi], \label{eq:V3_symmetric}\\
	\min_{\theta \in [0, 2\pi]} V_3(\theta) &= V_3(\pi) = 0. \label{eq:V3_normalized}
\end{align}
In our notation, the three-body interaction used by E and Li in \cite{e09} would correspond to $V_3(\theta) = \beta (\cos \theta + 1/2)^2$, with a different choice of cutoff function $f$. Assumptions~\eqref{eq:V3_smooth} and~\eqref{eq:V3_symmetric} would be satisfied by their potential as well. However, their angular function $V_3(\theta)$ attains a minimum of 0 at $\theta = \frac{2\pi}{3}$ instead of at $\theta = \pi$. The angular function employed by Mainini and Stefanelli in \cite{mainini14} is also minimized for angles of $\frac{2\pi}{3}$, but their function grows linearly out of the minimum.

Assumptions~\eqref{eq:V3_convex} and~\eqref{eq:V3_normalized} imply that the potential at different angles are ordered: $0 = V_3(\pi) < V_3(2\pi/3) < V_3(\pi/2) < V_3(2\pi/5) < V_3(\pi/3)$. Additionally, we suppose that
\begin{align}
	V_3(2\pi/3) &< \frac{1}{6}, \label{eq:V3_angle120}\\
	V_3(\pi/2) &> \frac{1}{8} + \frac{3}{4} V_3(2 \pi/3), \text{ and}\label{eq:V3_angle90}\\
	V_3(2\pi/5) &> 4. \label{eq:V3_angle72}
\end{align}
Note that assumptions~\eqref{eq:V3_angle120} and~\eqref{eq:V3_angle72} imply that
\begin{align}
	V_3(2\pi/5) >\frac{1}{5} + \frac{3}{5} V_3(2 \pi/3), \label{eq:V3_angle72b}
\end{align}
and assumption~\eqref{eq:V3_convex} implies that
\begin{align}
	V_3(\pi/3) > V_3(2\pi/5) > 4 > \frac{1}{4} + \frac{1}{2} V_3(2 \pi/3). \label{eq:V3_angle60}
\end{align}

Assumptions \eqref{eq:V3_angle120} -- \eqref{eq:V3_angle72} can be satisfied by taking, for example, $V_3(\frac{2\pi}{3}) = \frac{1}{8}$, $V_3(\frac{\pi}{2}) = \frac{1}{4}$, and $V_3(\frac{2\pi}{5}) = 5$. An example of a function $V_3$ satisfying these assumptions is shown in Figure~\ref{fig:V3_example}. Assumptions \eqref{eq:V3_angle120} -- \eqref{eq:V3_angle72}, particularly \eqref{eq:V3_angle72}, are similar to the assumption in \cite{mainini14} that $V_3 > 8$ on $(\theta_{\min} , \pi/2]$, where $\theta_{\min} := 2 \arcsin(1/(2\sqrt{2}))$. We need a greater number of assumptions since we make our estimates using different arguments.

% Example of the potential V_3
\begin{figure}[tb]
\centering
\includegraphics[width=.5\textwidth]{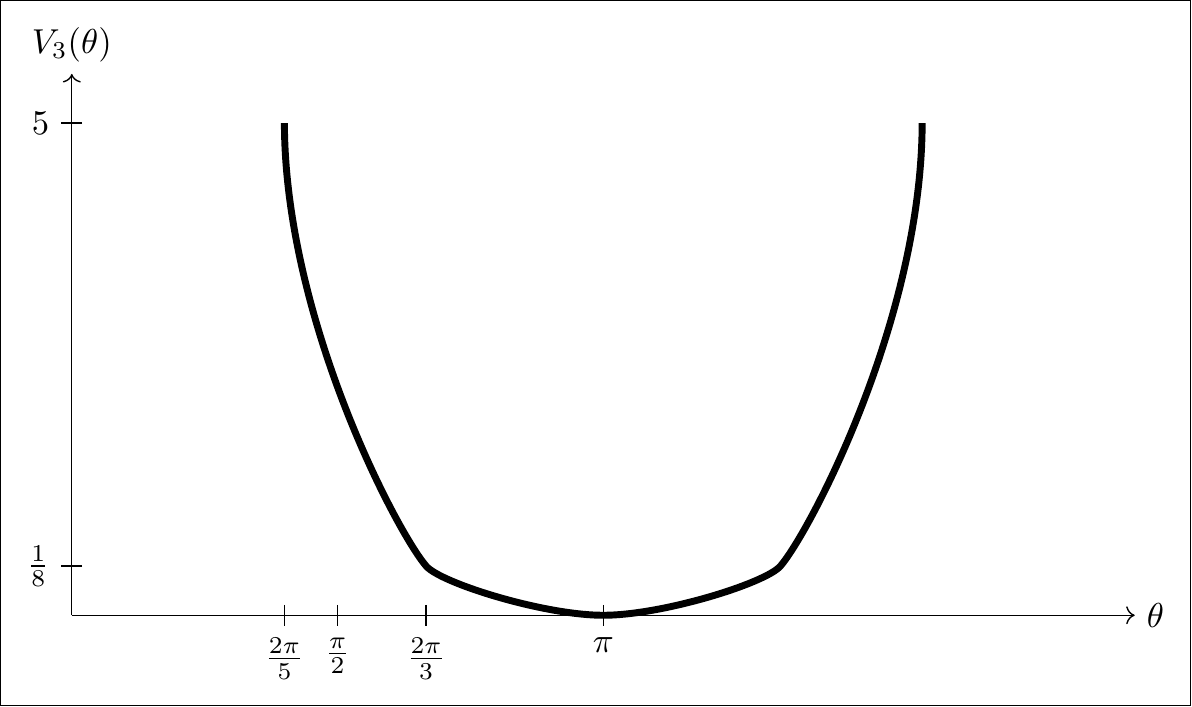}
\caption{Schematic of a function $V_3(\theta)$ satisfying the given assumptions.}
\label{fig:V3_example}
\end{figure}

%%% Estimates on the ground state energy %%%
\section{Estimates on the ground state energy}
Our main theorem states that the thermodynamic limit of the per-particle ground-state energy is the same as that of the honeycomb lattice $H$.
\label{sec:estimates}
\begin{theorem}
\label{thm:main}
There exists a constant $\alpha_0 \in (0, \frac{1}{3})$ such that for any $0 < \alpha < \alpha_0$, any potential of the form~\eqref{eq:energy_full}, with $V_2$ satisfying assumptions \eqref{eq:V2_smooth} -- \eqref{eq:V2_normalized} and $V_3$ satisfying assumptions \eqref{eq:V3_smooth} -- \eqref{eq:V3_angle72}, the following equation holds:
\begin{equation*}
	\lim_{N \rightarrow \infty} \min_{ y: X_N \rightarrow \mathbb{R}^2} \frac{1}{N} V(\{y\}) = -\frac{3}{2} + 3 V_3(2 \pi / 3).
\end{equation*}
\end{theorem}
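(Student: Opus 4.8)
The plan is to prove the identity by establishing matching lower and upper bounds on $\min_y V(\{y\})$ whose per-particle limits both equal $-\frac{3}{2}+3V_3(2\pi/3)$. For the upper bound I would use an explicit competitor: a large (say hexagonal) patch of the honeycomb lattice $H$ containing $N$ sites. Because first-neighbors sit at distance $1$ while second-neighbors sit at distance $\sqrt{3}$, which exceeds both cutoffs $R=\frac{3}{2}$ and $R_2=\frac{5}{3}$, every interior atom sees exactly three bonds with $V_2(1)=-1$ and three $120^\circ$ angles contributing $V_3(2\pi/3)f(1)^2=V_3(2\pi/3)$ each; by the counting conventions following \eqref{eq:energy_full} its per-atom energy is exactly $-\frac{3}{2}+3V_3(2\pi/3)$. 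The $O(\sqrt{N})$ atoms within bounded distance of the boundary each carry bounded energy, so the total is $N\bigl(-\frac{3}{2}+3V_3(2\pi/3)\bigr)+O(\sqrt{N})$, giving $\limsup_{N}\frac{1}{N}\min_y V \le -\frac{3}{2}+3V_3(2\pi/3)$.

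The heart of the matter is the lower bound, which I would obtain atom-by-atom. Writing $V(\{y\})=\sum_{x}E(x)$, where $E(x)$ collects the half-weighted two-body terms incident to $x$ together with the three-body terms centered at $x$, I would show $E(x)\ge -\frac{3}{2}+3V_3(2\pi/3)$ for every $x$, which yields $\frac{1}{N}V\ge -\frac{3}{2}+3V_3(2\pi/3)$ at once. Let $k$ be the number of atoms within the two-body cutoff $R$ of $x$. Since $V_2\ge -1$ everywhere by \eqref{eq:V2_normalized}, the two-body part of $E(x)$ is at least $-\frac{k}{2}$. Because $R=R_1$, each such neighbor has $f=1$, so the consecutive pairs of neighbors (ordered by angle about $x$) contribute $\sum_{i=1}^k V_3(g_i)$ to the three-body part, where $g_1,\dots,g_k\ge 0$ are the angular gaps with $\sum_i g_i=2\pi$; discarding the nonconsecutive pairs (all nonnegative) and applying Jensen's inequality to the convex $V_3$ (assumption \eqref{eq:V3_convex}) gives a three-body contribution of at least $k\,V_3(2\pi/k)$. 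Hence $E(x)\ge G(k):= -\frac{k}{2}+k\,V_3(2\pi/k)$.

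It then remains to verify $G(k)\ge G(3)=-\frac{3}{2}+3V_3(2\pi/3)$ for every $k$, and this is precisely what the angular hypotheses are engineered to deliver: for $k\le 2$ one has $E(x)\ge -1$, which exceeds the target because $V_3(2\pi/3)<\frac{1}{6}$ by \eqref{eq:V3_angle120} places the target below $-1$; and $G(4)\ge G(3)$, $G(5)\ge G(3)$, $G(6)\ge G(3)$ reduce, respectively, to \eqref{eq:V3_angle90}, \eqref{eq:V3_angle72b}, and \eqref{eq:V3_angle60}. For $k\ge 7$ the average gap $2\pi/k$ lies below $\frac{\pi}{3}$, so $V_3(2\pi/k)>V_3(\pi/3)>4$ by monotonicity of $V_3$ away from its minimum at $\pi$, forcing $G(k)>0$; the finiteness of $k$ needed to terminate the case list follows from the minimum-separation estimate that the short-range repulsion \eqref{eq:V2_short_range} imposes on any configuration of bounded energy.

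The step I expect to require the most care is the local decomposition underlying the lower bound. The three-body cutoff $R_2=\frac{5}{3}$ exceeds the two-body cutoff $R=\frac{3}{2}$, so atoms in the annulus $(R,R_2)$ enter the angle sum with $0<f<1$ while contributing nothing to the two-body energy; I must confirm that discarding these mixed terms (all nonnegative) is legitimate and leaves the gap–Jensen bound built from the $k$ genuine neighbors intact. Likewise, mid-range neighbors in $[1+\alpha,R)$ must be shown not to undercut $G(k)$: the mechanism is that, relative to three well-separated near-neighbors, any additional neighbor necessarily falls within $\frac{\pi}{3}$ of an existing one and is therefore charged an angle energy $V_3(\le\pi/3)>4$ that dwarfs its $O(\alpha)$ two-body benefit, and it is exactly here that the threshold $\alpha_0$ and the positivity of the cutoff on all of $[0,\frac{3}{2}]$ enter. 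Making this angular exclusion quantitative and uniform over all configurations is the crux; once it is secured, the lower and upper bounds coincide in the limit and the theorem follows.
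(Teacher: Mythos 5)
Your proposal is correct, and the lower bound --- the hard half of the theorem --- goes by a genuinely different and simpler route than the paper's. The paper defines neighbors at radius $1+\alpha$, so atoms in the mid-range annulus $(1+\alpha,\frac{3}{2})$ are invisible to its neighborhood energies; controlling their (slightly negative) pair energy forces the paper through ground-state-dependent machinery: the minimum-distance estimate (Lemma~\ref{lem:min_dist}), the rigidity result that regular atoms have no mid-range partners (Lemma~\ref{lem:rigidity}, proved by removing an atom to infinity and invoking $V_3(2\pi/5)\ge 4$), the count $\#\mathcal{M}\le C(N-\#G_\epsilon)$ (Lemma~\ref{lem:ub_midrange}), and a final absorption of $-\alpha\#\mathcal{M}$ into the defect gap by shrinking $\alpha_0$. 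You avoid all of this by taking the ``neighborhood'' to be the full two-body cutoff ball of radius $\frac{3}{2}$, charging each such atom the worst case $V_2\ge -1$ from \eqref{eq:V2_normalized}, and exploiting the structural coincidence $R=R_1=\frac{3}{2}$, which makes $f\equiv 1$ on that ball, so every atom capable of lowering the pair energy is fully charged angle energy; Jensen (the paper's Lemma~\ref{lem:lb_V3_sum}) then gives $E(x)\ge G(k)=-\frac{k}{2}+k\,V_3(2\pi/k)$, and the case check $G(k)\ge G(3)$ is exactly what \eqref{eq:V3_angle120}, \eqref{eq:V3_angle90}, \eqref{eq:V3_angle72b}, and \eqref{eq:V3_angle60} were engineered to deliver (the same four quantities appear in the paper's Lemma~\ref{lem:lb_defected}). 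Your bound is per-atom, holds for \emph{arbitrary} configurations rather than only ground states, and in fact needs no threshold $\alpha_0$ at all. Two remarks. First, your closing paragraph worries about a step that does not exist in your own formulation: mid-range atoms are already counted in $k$, so no separate ``angular exclusion'' argument, no smallness of $\alpha$, and no minimum-separation estimate (which in the paper is a ground-state property, not available for arbitrary configurations) is needed; likewise the case list terminates without any finiteness of $k$, since $G(k)>0$ uniformly for all $k\ge 7$. Second, what the paper's heavier machinery buys is the finer Theorem~\ref{thm:main_ineq} --- the defect term $\frac{\Delta}{2}(N-\#G)$ and the elastic terms --- which powers the surface-energy scaling and the periodic crystallization result (Theorem~\ref{thm:periodic}); your crude charge of $-1$ for mid-range atoms discards the information needed to identify the good set $G$ and extract that gap, so your argument proves Theorem~\ref{thm:main} cleanly but not its quantitative refinement.
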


One can prove that
\begin{equation}
\label{eq:energy_ub}
	\lim_{N \rightarrow \infty} \min_{ y: X_N \rightarrow \mathbb{R}^2} \frac{1}{N} V(\{y\}) \leq -\frac{3}{2} + 3 V_3(2 \pi / 3)
\end{equation}
by taking as a trial configuration a bijection from $X_N$ to the honeycomb lattice that does not create too much surface energy. See Proposition~\ref{prop:ub_Vmin} for a more quantitative upper bound on the ground state energy, based on the trial configurations constructed in \cite{mainini14}.

To prove that
\begin{equation*}
	\lim_{N \rightarrow \infty} \min_{ y: X_N \rightarrow \mathbb{R}^2} \frac{1}{N} V(\{y\}) \geq -\frac{3}{2} + 3 V_3(2 \pi / 3),
\end{equation*}
we will actually obtain a more complicated estimate, similar to those in \cite{theil06} and \cite{e09}, involving the minimum per-atom energy, the number of defects, and elastic corrections. We formulate this as a theorem.
\begin{theorem}
\label{thm:main_ineq}
There exists a constant $\alpha_0 \in (0, \frac{1}{3})$ and a constant $\Delta > 0$ such that for any ${0 < \alpha < \alpha_0}$ and any potential of the form~\eqref{eq:energy_full}, with $V_2$ satisfying assumptions \eqref{eq:V2_smooth} -- \eqref{eq:V2_normalized} and $V_3$ satisfying assumptions \eqref{eq:V3_smooth} -- \eqref{eq:V3_angle72}, and any ground state configuration $y: X_N \rightarrow \mathbb{R}^{\infty}$, there exists a subset $G \subset X_N$, containing only particles with three neighbors,\footnote{We shall later define the neighbors of particle $x \in X$ as those particles $x' \in X$ such that $|y(x') - y(x)| <  1+\alpha$.} such that the following inequality holds:
\begin{equation*}
\begin{split}
	V(\{y\}) &\geq
	\left( -\frac{3}{2} + 3 V_3(2\pi/3) \right) N + \frac{\Delta}{2} (N - \# G)\\
	& \qquad + \sum_{x \in G} \left( \sum_{j=1}^3 \frac{1}{2} \left| r_j(x) - 1 \right|^2 + \sum_{j=1}^3 \frac{m}{2} \left| \theta_j(x) - \frac{2\pi}{3} \right|^2 \right),
\end{split}
\end{equation*}
where $r_j(x)$ and $\theta_j(x)$ denote the bond lengths and bond angles between $x$ and its neighbors.
\end{theorem}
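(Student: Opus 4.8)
The plan is to prove the inequality by localizing the energy onto individual particles. Writing
\[
 V(\{y\}) = \sum_{x \in X_N} E(x), \qquad E(x) := \tfrac12 \sum_{x' \neq x} V_2(|y(x')-y(x)|) + \tfrac12 \sum_{\substack{x',x'' \\ \text{distinct}}} a(x,x',x''),
\]
so that each pair interaction is split evenly between its two endpoints while the entire angular energy of the triples centred at $x$ is assigned to $x$. The theorem then reduces to a single per-particle estimate: I will produce a threshold $\alpha_0$ and a gap $\Delta>0$ so that for every $x$ either (i) $x$ has exactly three neighbours in an admissible configuration, in which case $E(x) \geq \bigl(-\tfrac32 + 3V_3(2\pi/3)\bigr) + \sum_{j=1}^3 \tfrac12|r_j(x)-1|^2 + \sum_{j=1}^3\tfrac m2|\theta_j(x)-2\pi/3|^2$ and $x$ is placed in $G$, or (ii) $E(x) \geq \bigl(-\tfrac32 + 3V_3(2\pi/3)\bigr) + \tfrac\Delta2$ and $x$ is declared a defect. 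Summing over $x$ then yields the statement, the $\tfrac\Delta2$ terms contributing $\tfrac\Delta2(N-\#G)$.

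Before the case analysis I would record two reductions. First, assumption \eqref{eq:V2_short_range} gives a uniform bound on the number of atoms within any fixed radius of $x$ (a packing estimate: two atoms closer than $1-\alpha$ each pay at least $1/\alpha$, which for small $\alpha$ exceeds the whole energy budget), and it forces $E(x)$ to be enormous whenever another atom lies at distance $<1-\alpha$ from $x$; such $x$ are trivially defects. Second, atoms at distance in $[1+\alpha,\tfrac32)$ contribute at least $-\alpha$ each to the pair sum by \eqref{eq:V2_mid_range}, while every angular term is nonnegative by \eqref{eq:V3_normalized}; hence all of these effects can be absorbed into an $O(\alpha)$ error once $\alpha_0$ is taken small, and for the defect estimates only the ``neighbours'' (distance $<1+\alpha$) matter at leading order. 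Good particles will by definition carry exactly three atoms within the angular cutoff, so for them $E(x) = \tfrac12\sum_{j=1}^3 V_2(r_j) + \sum_{j=1}^3 V_3(\theta_j)$ exactly, with no $O(\alpha)$ loss.

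The heart of the argument is a case analysis on the number of neighbours $n(x)$. For $n\le 1$ the pair sum already gives $E(x)\ge -\tfrac12 - O(\alpha)$, a gap near $1$. For $n=2$ the two bonds give $\approx -1$ and the single bond angle $\theta$ contributes $V_3(\theta)\ge 0$, so $E(x)\ge -1 - O(\alpha)$; since $3V_3(2\pi/3)<\tfrac12$ by \eqref{eq:V3_angle120}, this beats the target by a fixed amount — this is exactly where the bound on $V_3(2\pi/3)$ is needed to keep straight chains from competing with the honeycomb. For $n=3$ I expand the two-body terms about $r=1$ using $V_2(1)=-1$, $V_2'(1)=0$ and $V_2''\ge 1$ from \eqref{eq:V2_convex}, obtaining a bond-length correction controlled below by a positive multiple of $|r_j-1|^2$; and I expand the three pairwise angular terms about $2\pi/3$ using $V_3''\ge m$ from \eqref{eq:V3_convex}. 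The key point is that the first-order angular term is controlled by the geometric constraint $\sum_j \theta_j \le 2\pi$ (equality when the three neighbours surround $x$ coplanarly) together with $V_3'(2\pi/3)<0$, so that $\sum_j V_3(\theta_j)\ge 3V_3(2\pi/3)+\tfrac m2\sum_j|\theta_j-2\pi/3|^2$ with the \emph{sharp} constant $\tfrac m2$, the full angular energy sitting at the central atom. This cancellation of the linear term is the mechanism producing the clean quadratic remainder, and because it uses only angle sums it is insensitive to the ambient dimension. Assembling the bond corrections (including slack donated by any defect endpoints) then yields the stated quadratic terms, and these three-neighbour particles go into $G$. For $n=4$ the pair sum is as favourable as $\approx -2$, but there are now $\binom42=6$ angular terms; the extremal configuration is the symmetric one with four right angles, giving angular energy $\ge 4V_3(\pi/2)$, and assumption \eqref{eq:V3_angle90} is precisely the inequality $4V_3(\pi/2) > \tfrac12 + 3V_3(2\pi/3)$ that converts the favourable pair energy into a net positive gap. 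For $n\ge 5$ (and for over-coordination in high dimensions) a pigeonhole argument forces some bond angle below $2\pi/5$, whence $V_3>4$ by \eqref{eq:V3_angle72}; more robustly, the $\binom n2$ pairwise angular terms grow faster than the $O(n)$ pair energy, so the gap is large. Taking $\Delta$ to be the minimum of the gaps from the $n=2$ and $n=4$ cases, which are the tightest, completes the estimate.

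The step I expect to be the main obstacle is the over-coordinated regime $n\ge 4$. There the two-body energy actively rewards extra bonds, so the argument rests entirely on the angular terms, and one must account for \emph{all} $\binom n2$ pairwise angles — not merely consecutive ones — and identify the genuinely worst configuration (the square at $n=4$) in order to see that the sharply calibrated constants in \eqref{eq:V3_angle90} and \eqref{eq:V3_angle72} still leave a strictly positive gap. Making this uniform in the ambient dimension, as the hypothesis $y:X_N\to\mathbb R^{\infty}$ demands, is what forces reliance on the quadratically growing pairwise-angle count rather than on any planar angle-sum identity, since three directions cannot all be pairwise near $\pi$. The remaining difficulty is bookkeeping: choosing $\alpha_0$ small enough that every $O(\alpha)$ correction is dominated by the smallest defect gap $\Delta$, and keeping the Taylor estimates in the $n=3$ case uniform over the admissible ranges of $r_j$ and $\theta_j$ so that the quadratic terms hold with the constants claimed.
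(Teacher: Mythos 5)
Your proposal gets the skeleton right (a per\hyphen particle/neighborhood decomposition plus a case analysis on coordination number, which is exactly the paper's Lemmas~\ref{lem:energy_decomp}--\ref{lem:lb_defected}), but it has a genuine gap in the treatment of mid-range pairs. For an atom placed in $G$ the required inequality has \emph{zero slack}: you must prove $E(x)\geq -\tfrac{3}{2}+3V_3(2\pi/3)+(\text{elastic terms})$ exactly, so a companion at distance in $(1+\alpha,\tfrac{3}{2})$, whose pair contribution is only bounded below by $-\alpha/2$ via \eqref{eq:V2_mid_range}, cannot be ``absorbed into an $O(\alpha)$ error''---good atoms have no gap to absorb it into. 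Your fallback, defining good atoms as having exactly three atoms within the angular cutoff, makes an atom with three perfect bonds plus one mid-range companion a \emph{defect}; but then its gap must come from the angular energy generated by the companion itself (a fourth direction in the region where $f\equiv 1$, so in the plane the four consecutive angles sum to $2\pi$ and cost at least $4V_3(\pi/2)>\tfrac{1}{2}+3V_3(2\pi/3)$ by \eqref{eq:V3_angle90}), and these are precisely the terms your reduction ``only the neighbours matter at leading order'' discards. Your case analysis then puts every three-neighbour atom into $G$, so this class of atoms satisfies neither alternative. The paper closes this hole differently and variationally: Lemma~\ref{lem:rigidity} shows a ground-state regular atom has \emph{no} mid-range companion at all, since deleting the atom gains at most $3+C\alpha$ in pair energy while the companion forces an angle $\leq 2\pi/5$ costing more than $4$ by \eqref{eq:V3_angle72}; this is what makes $\#\mathcal{M}\leq C(N-\#G_\epsilon)$ and lets the $-\alpha\#\mathcal{M}$ deficit be charged to defects. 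Your proposal uses the ground-state hypothesis only in the close-pair reduction, and even there the justification is circular: a uniform bound on the number of atoms within a fixed radius is not a consequence of \eqref{eq:V2_short_range} alone, but of the minimum-distance property, which the paper's Lemma~\ref{lem:min_dist} proves by a removal argument.

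Separately, the step you flag as the main obstacle does fail in the generality you insist on. The ``$\mathbb{R}^{\infty}$'' in the statement is a typo for $\mathbb{R}^2$ (the angles $\theta_{x',x,x''}$ are defined by a counterclockwise convention, and Theorem~\ref{thm:main} and every lemma are planar); read literally the theorem is false in $\mathbb{R}^3$, so no argument can close your $n=4$ case there. Concretely, place four neighbours at unit distance in the tetrahedral directions, so all six pairwise angles equal $\arccos(-1/3)\approx 1.91 < 2\pi/3$. The assumptions permit $V_3$ to be nearly flat there: take $V_3(2\pi/3)$ and $m$ tiny and $V_3(\pi/2)$ just above $\tfrac{1}{8}$; convexity caps $V_3(\arccos(-1/3))$ by the chord value $\approx 0.35\,V_3(\pi/2)$, giving $E(x)=-2+6V_3(\arccos(-1/3))\lesssim -1.7$, strictly below $-\tfrac{3}{2}+3V_3(2\pi/3)$, while the atom has four neighbours and so cannot be put in $G$. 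Thus the ``quadratically growing pairwise-angle count'' cannot substitute for the planar identity $\sum_i\theta_i=2\pi$: at $n=4$ there are six angular terms, but all six may sit where $V_3$ is essentially free. (Your $n=3$ device, using $\sum_j\theta_j\leq 2\pi$ together with $V_3'(2\pi/3)<0$, is correct and genuinely dimension-free, and in the plane it reduces to the paper's Lemma~\ref{lem:lb_V3_sum} applied to consecutive angles; but the over-coordinated cases are exactly where two-dimensionality is essential.)
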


We will prove this theorem after obtaining intermediate results in the next several subsections. We first show that for ground states, the particles are well-separated. We will define two classes of atoms: regular and defected. Lower bounds on the neighborhood energy of arbitrary atoms and defected atoms are given. It turns out that regular atoms have only short-range interactions, i.e. they are not part of any pairs with mid-range distances between $1+\alpha$ and $\frac{3}{2}$. These results are combined to give the lower bound in Theorem~\ref{thm:main_ineq}. Together with the upper bound in \eqref{eq:energy_ub}, this proves Theorem~\ref{thm:main}.

% Minimum distance result
\subsection{Estimates on the two- and three-body potentials and on the minimum inter-particle distance in the ground state}

First, for potentials $V_2$ and $V_3$ that satisfy our assumptions, we prove several estimates. The first is a quadratic lower bound on the pair potential in a neighborhood of its minimum.

\begin{lemma}[Estimate on potential $V_2$]
\label{lem:lb_V2}
	For a potential $V_2$ satisfying assumptions~\eqref{eq:V2_smooth},~\eqref{eq:V2_convex}, and~\eqref{eq:V2_normalized}, and $r \in (1 - \alpha, 1 + \alpha)$,
\begin{equation*}
	V_2(r) \geq -1 + \frac{1}{2} | r - 1 |^2.
\end{equation*}
\end{lemma}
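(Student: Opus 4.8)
The plan is to prove this Taylor-expansion estimate using the assumptions on $V_2$ near its minimum at $r=1$. The key observation is that \eqref{eq:V2_normalized} gives us both $V_2(1) = -1$ and the fact that $r=1$ is a global minimum. Since $V_2 \in C^2(1-\alpha, \infty)$ by \eqref{eq:V2_smooth}, and $1$ lies in the interior of this interval, the minimum at $r=1$ is an interior critical point, so $V_2'(1) = 0$.

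The main step is a second-order Taylor expansion of $V_2$ about $r=1$ with integral (or Lagrange) remainder. For $r \in (1-\alpha, 1+\alpha)$, I would write
\begin{equation*}
	V_2(r) = V_2(1) + V_2'(1)(r-1) + \frac{1}{2} V_2''(\xi)(r-1)^2
\end{equation*}
for some $\xi$ between $1$ and $r$, hence $\xi \in (1-\alpha, 1+\alpha)$. Substituting $V_2(1) = -1$ and $V_2'(1) = 0$, and then applying the convexity bound \eqref{eq:V2_convex}, which asserts $V_2''(\xi) \geq 1$ precisely on the interval $(1-\alpha, 1+\alpha)$ where $\xi$ lives, gives
\begin{equation*}
	V_2(r) = -1 + \frac{1}{2} V_2''(\xi)(r-1)^2 \geq -1 + \frac{1}{2}(r-1)^2 = -1 + \frac{1}{2}|r-1|^2,
\end{equation*}
which is the desired inequality.

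The only genuine subtlety — and the one point I would state carefully — is justifying $V_2'(1) = 0$. This requires that $r=1$ be an interior point of the differentiability interval $(1-\alpha,\infty)$, which holds since $\alpha > 0$ guarantees $1-\alpha < 1$. Because $V_2$ is $C^1$ near $1$ and attains its global minimum there by \eqref{eq:V2_normalized}, Fermat's stationary-point condition forces the first derivative to vanish. Everything else is a routine invocation of Taylor's theorem with the convexity lower bound plugged into the quadratic remainder term; there is no real obstacle beyond making sure the intermediate point $\xi$ stays inside the interval on which the hypothesis \eqref{eq:V2_convex} is valid, which the expansion guarantees automatically.
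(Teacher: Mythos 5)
Your proof is correct and follows essentially the same route as the paper: both establish $V_2'(1)=0$ from the interior minimum at $r=1$ and then invoke the second-derivative bound \eqref{eq:V2_convex} through a second-order Taylor expansion around $r=1$ (the paper states the resulting inequality directly; you make the Lagrange remainder and the location of the intermediate point $\xi$ explicit). No gaps.
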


\begin{proof}
	Since $V_2 \in C^2(1-\alpha, \infty)$ by assumption~\eqref{eq:V2_smooth} and it has a minimum of $-1$ at $r = 1$ by assumption \eqref{eq:V2_normalized}, we have $V_2'(1) = 0$. Since $V_2''(r) \geq 1$ on $(1 - \alpha, 1 + \alpha)$ by assumption \eqref{eq:V2_convex}, for $r \in (1 - \alpha, 1 + \alpha)$, we have
\begin{equation*}
	V_2(r) \geq V_2(1) + V_2'(1) (r-1) + \frac{1}{2} | r - 1 |^2 = -1 + 0 (r-1) + \frac{1}{2} | r - 1 |^2 = -1 + \frac{1}{2} | r - 1 |^2.
\end{equation*}
This proves the claim.
\end{proof}

Similarly, we also get a quadratic lower bound on the angular potential.

\begin{lemma}[Estimate on potential $V_3$]
\label{lem:lb_V3}
	For a potential $V_3$ satisfying assumptions~\eqref{eq:V3_smooth},~\eqref{eq:V3_convex}, and~\eqref{eq:V3_normalized}, and $\theta \in (0, 2\pi)$,
\begin{equation*}
	V_3(\theta) \geq \frac{m}{2} | \theta - \pi |^2.
\end{equation*}
\end{lemma}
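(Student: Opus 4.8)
The plan is to mimic exactly the argument just used for Lemma~\ref{lem:lb_V2}, since the hypotheses on $V_3$ are the angular analogue of those on $V_2$: smoothness \eqref{eq:V3_smooth}, a uniform lower bound on the second derivative \eqref{eq:V3_convex}, and a normalized minimum \eqref{eq:V3_normalized} placing the minimum value $0$ at $\theta = \pi$. The only structural difference is that the minimum value here is $0$ rather than $-1$, and the curvature bound is $V_3''(\theta) \geq m$ rather than $V_3''(r) \geq 1$, which is precisely why the claimed bound reads $\frac{m}{2}|\theta - \pi|^2$ with the factor $m$ and no additive constant.

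First I would extract the first-order condition at the minimizer. By assumption \eqref{eq:V3_smooth}, $V_3 \in C^2(0, 2\pi)$, and by assumption \eqref{eq:V3_normalized}, $V_3$ attains its minimum over $[0, 2\pi]$ at the interior point $\theta = \pi$. Since this is an interior minimum of a differentiable function, Fermat's condition gives $V_3'(\pi) = 0$. (One could alternatively note that $V_3'(\pi) = 0$ follows from the symmetry \eqref{eq:V3_symmetric}, which forces $V_3'$ to be odd about $\pi$, but invoking the interior-minimum argument is cleaner and uses only the three hypotheses cited in the lemma statement.)

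Next I would apply Taylor's theorem with the uniform convexity bound. Fix any $\theta \in (0, 2\pi)$. By Taylor expansion about $\pi$ with integral or Lagrange remainder, there exists $\xi$ strictly between $\theta$ and $\pi$ such that
\begin{equation*}
	V_3(\theta) = V_3(\pi) + V_3'(\pi)(\theta - \pi) + \frac{1}{2} V_3''(\xi) |\theta - \pi|^2.
\end{equation*}
Substituting $V_3(\pi) = 0$, $V_3'(\pi) = 0$, and the bound $V_3''(\xi) \geq m$ from \eqref{eq:V3_convex} (which holds on all of $[0, 2\pi]$, so in particular at $\xi$) yields
\begin{equation*}
	V_3(\theta) \geq 0 + 0 + \frac{m}{2} |\theta - \pi|^2 = \frac{m}{2} |\theta - \pi|^2,
\end{equation*}
which is the desired inequality.

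There is essentially no hard part here: the proof is a verbatim adaptation of the preceding lemma, and the main (minor) point to be careful about is that \eqref{eq:V3_convex} is stated on the entire closed interval $[0, 2\pi]$, so the intermediate point $\xi$ arising in the remainder term is automatically covered regardless of whether $\theta$ lies to the left or right of $\pi$. This is a genuine simplification over Lemma~\ref{lem:lb_V2}, whose curvature hypothesis \eqref{eq:V2_convex} held only on the bounded interval $(1-\alpha, 1+\alpha)$ and hence required restricting $r$ to that same interval; here the quadratic lower bound is valid for all $\theta \in (0, 2\pi)$ with no such localization.
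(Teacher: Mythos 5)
Your proof is correct and follows essentially the same route as the paper's: both establish $V_3'(\pi) = 0$ from the interior minimum and then apply the uniform curvature bound $V_3'' \geq m$ via a second-order Taylor expansion about $\pi$ to obtain the quadratic lower bound. Your explicit use of the Lagrange remainder and the remark about the curvature hypothesis holding on all of $[0,2\pi]$ are just slightly more detailed presentations of the same argument.
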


\begin{proof}
	Since $V_3 \in C^2(0, 2\pi)$ by assumption~\eqref{eq:V3_smooth} and it has a minimum of $0$ at $\theta = \pi$ by assumption \eqref{eq:V3_normalized}, we have $V_3'(\pi) = 0$. Since $V_3''(\theta) \geq m$ on $(0, 2\pi)$ by assumption \eqref{eq:V3_convex}, for $\theta \in (0, 2\pi)$, we have
\begin{equation*}
	V_3(\theta) \geq V_3(\pi) + V_3'(\pi) (\theta-\pi) + \frac{m}{2} | \theta - \pi |^2 = \frac{m}{2} | \theta - \pi |^2.
\end{equation*}
This proves the claim.
\end{proof}

The next estimate gives a lower bound on the sum of the angular potential for angles that partition the circle. This sum is minimized when all the bond angles are equal.

\begin{lemma}[Estimate on sums of potential $V_3$]
\label{lem:lb_V3_sum}
	Let $M \in \mathbb{N}$ and suppose that $\bm{\theta} = (\theta_1, \theta_2, \ldots, \theta_M) \in [0, 2\pi]^M$ satisfies $\sum_{i=1}^M \theta_i = 2\pi$. Then, for a potential $V_3$ satisfying assumption~\eqref{eq:V3_convex},
\begin{equation*}
	\sum_{i=1}^M V_3(\theta_i) \geq M V_3 \left( \frac{2\pi}{M} \right) + \frac{m}{2} \sum_{i=1}^M \left| \theta_i - \frac{2\pi}{M} \right|^2.
\end{equation*}
\end{lemma}

\begin{proof}
	Define $g: [0, 2\pi]^M \rightarrow \mathbb{R}$ by $g(\bm{\theta}) = \sum_{i=1}^M V_3(\theta_i)$. We have $\nabla g(\bm{\theta}) = (V_3'(\theta_1), \ldots, V_3'(\theta_M))$ and $\nabla^2 g(\bm{\theta}) = \mathrm{diag} (V_3''(\theta_1), \ldots, V_3''(\theta_M))$. By assumption~\eqref{eq:V3_convex}, we have $V_3''(\theta) \geq m > 0$, implying that $\nabla^2 g(\bm{\theta}) \succeq mI$. Therefore, for any $\bm{\theta}_0 \in [0, 2\pi]$, we have
\begin{equation}
\label{eq:g_ineq}
	g(\bm{\theta}) \geq g(\bm{\theta}_0) + \nabla g(\bm{\theta}_0)^T (\bm{\theta} - \bm{\theta}_0) + \frac{m}{2} \| \theta - \theta_0 \|^2.
\end{equation}

Take the point $\bm{\theta}_0 = \frac{2\pi}{M} \left(1, \ldots, 1 \right)$. We have $g(\bm{\theta}_0) = M V_3(\frac{2\pi}{M})$ and $\nabla g(\bm{\theta}_0) = V_3'(\frac{2\pi}{M}) (1, \ldots, 1)$. Then,
\begin{align*}
	g(\bm{\theta}) &\geq g(\bm{\theta}_0) + \nabla g(\bm{\theta}_0)^T (\bm{\theta} - \bm{\theta}_0) + \frac{m}{2} \| \theta - \theta_0 \|^2\\
	&= M V_3 \left( \frac{2\pi}{M} \right) + V_3'\left( \frac{2\pi}{M} \right) (1, \ldots, 1)^T (\bm{\theta} - \bm{\theta}_0) + \frac{m}{2} \| \theta - \theta_0 \|^2\\
	&= M V_3 \left( \frac{2\pi}{M} \right) + V_3'\left( \frac{2\pi}{M} \right) \sum_{i=1}^M \left( \theta_i - \frac{2\pi}{M} \right) + \frac{m}{2} \sum_{i=1}^M \left| \theta_i - \frac{2\pi}{M} \right|^2\\
	&= M V_3 \left( \frac{2\pi}{M} \right) + V_3'\left( \frac{2\pi}{M} \right) \left( 2\pi - 2\pi \right) + \frac{m}{2} \sum_{i=1}^M \left| \theta_i - \frac{2\pi}{M} \right|^2\\
	&= M V_3 \left( \frac{2\pi}{M} \right) + \frac{m}{2} \sum_{i=1}^M \left| \theta_i - \frac{2\pi}{M} \right|^2.
\end{align*}

This proves the claim.
\end{proof}

Next, we prove an estimate on the minimum inter-particle distance in ground states, as was done in~\cite{theil06} and~\cite{e09}. We introduce the notation $B(z,r)$ for the ball of radius $r > 0$ centered at $z \in \mathbb{R}^2$.
\begin{lemma}
\label{lem:min_dist}
	There exists a constant $\alpha_0 \in \left( 0, \frac{1}{3} \right)$ such that for all $\alpha \in (0, \alpha_0)$ and all potentials $V$ of form~\eqref{eq:energy_full}, with $V_2$ satisfying assumptions \eqref{eq:V2_smooth} -- \eqref{eq:V2_normalized} and $V_3$ satisfying assumptions \eqref{eq:V3_smooth} -- \eqref{eq:V3_angle72}, all ground states $y: X \rightarrow \mathbb{R}^2$ of $V(\cdot)$ satisfy
\begin{equation}
\label{eq:min_dist}
	\min_{x \neq x'} |y(x') - y(x)| > 1 - \alpha.
\end{equation}
\end{lemma}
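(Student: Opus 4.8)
My plan is to argue by contradiction through a single‑particle relocation, in the spirit of the minimum‑distance estimates of \cite{theil06} and \cite{e09}, but exploiting that here the three‑body terms are nonnegative. Suppose that in a ground state $y$ some pair lies at distance $\leq 1-\alpha$, and fix a particle $x$ possessing such a close neighbor. I would introduce the competitor $y'$ that agrees with $y$ except that $x$ is sent to a point $p$ with $|p - y(x'')| \geq R_2$ for every other particle $x''$ (possible since a finite configuration is bounded). At $p$ the relocated particle has $V_2 = 0$ and $f = 0$ on all its bonds, so it carries no energy and $V(\{y'\}) = V(\{y\}) - E_x$, where $E_x$ collects every term of \eqref{eq:energy_full} involving $x$. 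Since $y$ is a ground state, $E_x \leq 0$, and the whole lemma reduces to showing that in fact $E_x > 0$ for such an $x$.

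To bound $E_x$ from below I would discard the nonnegative three‑body terms in which $x$ is an outer atom and keep only the two‑body energy of $x$ together with the three‑body terms centered at $x$; both are governed by the neighbors of $x$ lying within $R = R_1 = \frac{3}{2}$, of which I call $M$. The two‑body part contains at least one bond of length $\leq 1-\alpha$, worth $\geq \frac{1}{\alpha}$ by \eqref{eq:V2_short_range}, while each of the other $M-1$ short bonds is $\geq -1$ by \eqref{eq:V2_normalized}; hence the two‑body energy of $x$ is $\geq \frac{1}{\alpha} - (M-1)$. For the central three‑body energy, the $M$ neighbors (all with $f = 1$ since they lie below $R_1$) subtend consecutive angular gaps $\theta_1,\dots,\theta_M$ summing to $2\pi$, and since every angular contribution is nonnegative I may retain only the consecutive pairs; Lemma~\ref{lem:lb_V3_sum} then gives a central energy $\geq \sum_i V_3(\theta_i) \geq M\, V_3\!\left(\frac{2\pi}{M}\right)$. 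Combining these, for $M \geq 2$ one gets $E_x \geq \frac{1}{\alpha} - (M-1) + M\, V_3\!\left(\frac{2\pi}{M}\right)$, while for $M = 1$ the two‑body term alone already gives $E_x \geq \frac{1}{\alpha}$.

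It then remains to verify positivity for every $M$. The case $M=2$ gives $E_x \geq \frac{1}{\alpha} - 1 > 0$. The delicate regime is moderately many neighbors, and this is exactly where the angular assumptions enter: for $M \geq 5$ one has $\frac{2\pi}{M} \leq \frac{2\pi}{5}$, so by monotonicity of $V_3$ on $(0,\pi)$ (a consequence of \eqref{eq:V3_convex} and \eqref{eq:V3_normalized}) together with \eqref{eq:V3_angle72}, $V_3\!\left(\frac{2\pi}{M}\right) > 4$, which makes the term $M\,V_3\!\left(\frac{2\pi}{M}\right)$ dominate $M-1$ outright; whereas for $M \in \{3,4\}$ the crude bound $E_x \geq \frac{1}{\alpha} - (M-1) \geq \frac{1}{\alpha} - 3$ already suffices whenever $\alpha < \frac{1}{3}$. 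Thus $E_x > 0$ in all cases, contradicting $E_x \leq 0$, and the claim holds for every $\alpha < \frac{1}{3}$, so any $\alpha_0 \in (0,\tfrac{1}{3})$ works. I expect the only real subtlety to be the bookkeeping of the central three‑body energy—justifying that it is bounded below by the sum over consecutive gaps and applying Lemma~\ref{lem:lb_V3_sum} correctly—together with checking the constants over the finite range $M \leq 4$; the large‑$M$ case, where a naive relocation argument would seem to fail because $x$ could carry many near‑optimal bonds, is precisely rescued by the strong penalty \eqref{eq:V3_angle72} on small angles.
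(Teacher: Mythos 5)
Your proof is correct, but it takes a genuinely different route from the paper's. The paper's proof never uses the three-body term beyond discarding it as nonnegative: it sets $M$ equal to the maximum number of particles in any ball of diameter $1-\alpha$, sends an entire maximal cluster to infinity, bounds the cluster's internal pair energy below by $\frac{1}{\alpha}M(M-1)$, and controls the interaction with the exterior by a covering argument in which each covering ball of radius $\frac{1}{3}$ contains at most $M$ particles by the maximality of $M$; this yields $(1-1/M) \leq 2C\alpha$ and hence $M=1$ once $\alpha < \frac{1}{4C}$, with $C$ an unspecified covering constant. You instead relocate a single particle and let the angular assumption do the work that the covering argument does in the paper: when $x$ has $M$ particles within distance $\frac{3}{2}$, the consecutive angular gaps sum to $2\pi$, so Lemma~\ref{lem:lb_V3_sum}, the monotonicity of $V_3$ on $[0,\pi]$ (from \eqref{eq:V3_convex} and \eqref{eq:V3_normalized}), and \eqref{eq:V3_angle72} give a centered three-body energy exceeding $4M$ for $M \geq 5$, which dominates the $-(M-1)$ pair-energy loss, while for $M \leq 4$ the repulsion $\frac{1}{\alpha} > 3$ alone suffices. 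This is in fact the same mechanism the paper deploys later, in Lemma~\ref{lem:rigidity}, to exclude mid-range neighbors of regular atoms. The trade-off: your argument is more elementary and gives an explicit threshold (the conclusion holds for every $\alpha \in (0,\frac{1}{3})$, so any $\alpha_0$ in that interval works), but it relies essentially on the strong small-angle penalty \eqref{eq:V3_angle72}, whereas the paper's covering argument (inherited from Lemma 2.3 of \cite{e09}) applies verbatim to pure pair potentials and thus survives weakening or removal of the angular assumptions. If you write your argument up, make two bookkeeping steps explicit: the identity by which each unordered consecutive pair contributes exactly $V_3(\theta_i)$ uses $f \equiv 1$ on $[0,\frac{3}{2}]$ together with the symmetry \eqref{eq:V3_symmetric}, exactly as in the proof of Lemma~\ref{lem:lb_all}; and the bound $M V_3\left(\frac{2\pi}{M}\right)$ degenerates at $M=2$ (there is only one distinct unordered pair), which is harmless since you treat $M \leq 2$ by the pair term alone.
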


\begin{proof}
	This lemma actually follows from the more general Lemma 2.3 in \cite{e09}. The proof is slightly simpler for our finite-range potential, so we include it here.

Define $M$ to be the maximum number of particles contained in a ball of diameter $1 - \alpha$, i.e. $M := \max_{\eta \in \mathbb{R}^2} \# \{ x : y(x) \in B(\eta, \frac{1}{2} (1-\alpha)) \}$. We wish to show that $M=1$. We assume without loss of generality that $\eta = 0$. Define $B_M := B(0, \frac{1}{2} (1-\alpha))$ and $\mathcal{A} := y^{-1}(B_M)$.

We will consider the change in energy if the particles contained in $B_M$ are sent to infinity such that their mutual distances diverge. The change in energy, which must be non-negative since we are in a global minimizer, is
\begin{equation*}
	- \frac{1}{2} \sum_{\substack{x, x' \in \mathcal{A}\\ x \neq x'}} e(\{x, x'\}) - \sum_{x \in \mathcal{A}, x' \in X_N \setminus \mathcal{A}} e(\{x, x'\}) - \frac{1}{2} \sum_{\substack{x, x', x'' \in X_N,\\ \{x, x', x''\}  \cap \mathcal{A} \neq \emptyset\\ x, x', x'' \text{ distinct}}} a(x, x', x'') \geq 0.
\end{equation*}
So, we have
\begin{equation}
\label{eq:ineq1_min_dist}
	-\sum_{\substack{x, x' \in \mathcal{A}\\ x \neq x'}} e(\{x, x'\}) - \sum_{\substack{x, x', x'' \in X_N,\\ \{x, x', x''\}  \cap \mathcal{A} \neq \emptyset\\ x, x', x'' \text{ distinct}}} a(x, x', x'')
	\geq 2 \sum_{x \in \mathcal{A}, x' \in X_N \setminus \mathcal{A}} e(\{x, x'\}).
\end{equation}
For atoms closer together than $1- \alpha$, the energy is at least $\frac{1}{\alpha}$ by assumption \eqref{eq:V2_short_range}. Thus, we have
\begin{equation*}
\sum_{\substack{x, x' \in \mathcal{A}\\ x \neq x'}} e(\{x, x'\}) \geq \frac{1}{\alpha} M (M-1).
\end{equation*}
Also, since $f$ and $V_3$ are non-negative, we have
\begin{equation*}
\sum_{\substack{x, x', x'' \in X_N,\\ \{x, x', x''\}  \cap \mathcal{A} \neq \emptyset\\ x, x', x'' \text{ distinct}}} a(x, x', x'') \geq 0.
\end{equation*}
Thus, inequality \eqref{eq:ineq1_min_dist} becomes
\begin{equation}
\label{eq:ineq2_min_dist}
	-\frac{1}{\alpha} M (M-1) \geq 2 \sum_{x \in \mathcal{A}, x' \in X_N \setminus \mathcal{A}} e(\{x, x'\}).
\end{equation}
If $x \in \mathcal{A}$ and $x' \in X_N \setminus \mathcal{A}$, $e(\{x, x'\})$ will only be non-zero if $x'$ is in the annulus
\begin{equation*}
	B \left( 0, \frac{3}{2} + \frac{1}{2}(1- \alpha) \right) \setminus B \left( 0, \frac{1}{2}(1- \alpha) \right).
\end{equation*}
Since $0 < \alpha < \frac{1}{3}$, this region is contained within $B(0, 2) \setminus B \left( 0, \frac{1}{3} \right)$. There exists a constant $C > 0$, independent of $\alpha_0$, such that this region can be covered by $C$ balls of radius $\frac{1}{3}$, each of which can contain at most $M$ particles. The interaction energy $e$ is bounded below by $-1$, so we have
\begin{equation*}
	\sum_{\substack{x \in \mathcal{A}\\ x' \in X_N \setminus \mathcal{A}}} e(\{x, x'\}) \geq - \sum_{\substack{x \in \mathcal{A}\\ x' \in y^{-1}(B(0, 2) \setminus B(0, \frac{1}{3}))}} 1 \geq - M \cdot CM = -CM^2.
\end{equation*}
Therefore,
\begin{equation*}
	-\frac{1}{\alpha} M (M-1) \geq -2 C M^2,
\end{equation*}
which is equivalent to
\begin{equation*}
	(1 - 1/M) \leq 2 C \alpha.
\end{equation*}
Clearly, this inequality holds for any $\alpha > 0$ if $M = 1$. However, if $M \geq 2$, then
\begin{equation*}
	1/2 \leq (1 - 1/M) \leq 2 C \alpha.
\end{equation*}
If $\alpha < \frac{1}{4C}$, this inequality does not hold. Thus, taking $\alpha_0 = \frac{1}{4C}$, we must have $M=1$, which proves the result.
\end{proof}

% Definition of neighbors, regular atoms, and defected atoms
\subsection{Definition of neighbors, regular atoms, and defected atoms}

For fixed $\alpha > 0$ and a configuration $y: X \rightarrow \mathbb{R}^2$, define the neighborhood of an atom as:
\begin{equation*}
	\mathcal{N}(x) := \{ x' \in X : x' \neq x \text{ and } |y(x') - y(x)| \leq 1 + \alpha \}.
\end{equation*}

Next, we show that for configurations satisfying an appropriate minimum distance property, all particles have at most six neighbors, as in \cite{theil06}.\footnote{Our definition differs from Theil's in that $x \notin \mathcal{N}(x)$.}

\begin{lemma}
\label{lem:n_nbrs}
	There exists a constant $\alpha_0 \in (0, \frac{1}{3})$ such that for all $\alpha$ with $0 < \alpha < \alpha_0$ and all configurations $y : X \rightarrow \mathbb{R}^2$ which satisfy \eqref{eq:min_dist}, we have
\begin{equation*}
	\# \mathcal{N}(x) \leq 6 \text{ for all } x \in X.
\end{equation*}
\end{lemma}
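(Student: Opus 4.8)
The plan is to use a packing argument of the type employed by Theil in \cite{theil06}: bound below the angle that any two neighbors of $x$ subtend at $y(x)$, and conclude that too many neighbors would force two of them closer than the minimum separation. First I would record the geometric setup. By hypothesis $y$ satisfies \eqref{eq:min_dist}, so any two distinct particles are more than $1-\alpha$ apart; in particular every $x' \in \mathcal{N}(x)$ satisfies $1-\alpha < |y(x')-y(x)| \le 1+\alpha$, so each neighbor lies in the annulus of inner radius $1-\alpha$ and outer radius $1+\alpha$ about $y(x)$, and any two neighbors are themselves separated by more than $1-\alpha$.

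Next I would convert the minimum-distance constraint into an angular-gap estimate. For two neighbors $x', x''$ at distances $r', r'' \in (1-\alpha, 1+\alpha]$ from $y(x)$ subtending an angle $\theta \in [0,\pi]$ there, the law of cosines gives $|y(x')-y(x'')|^2 = r'^2 + r''^2 - 2 r' r'' \cos\theta$. For fixed $\theta \le 2\pi/7$ and $\alpha$ small this is increasing in each of $r', r''$ on $[1-\alpha,1+\alpha]$ (its partial derivative in $r'$ equals $2(r' - r''\cos\theta) \ge 2\bigl(1-\alpha - (1+\alpha)\cos(2\pi/7)\bigr) > 0$), so it is maximized at $r' = r'' = 1+\alpha$, giving $|y(x')-y(x'')| \le (1+\alpha)\sqrt{2(1-\cos\theta)}$. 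Since $1-\cos\theta$ is increasing on $[0,\pi]$ and $2\pi/7 < \pi$, if $\theta \le 2\pi/7$ then $|y(x')-y(x'')| \le (1+\alpha)\sqrt{2(1-\cos(2\pi/7))}$. A direct numerical check gives $\sqrt{2(1-\cos(2\pi/7))} \approx 0.868 < 1$, so there is an $\alpha_0 \in (0,\tfrac13)$ (any value below roughly $0.07$ works) such that for all $\alpha < \alpha_0$ this bound is strictly less than $1-\alpha$, contradicting \eqref{eq:min_dist}. Hence no two neighbors subtend an angle $\le 2\pi/7$ at $y(x)$.

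Finally I would conclude by counting. Ordering the neighbors by polar angle about $y(x)$, the gaps between cyclically consecutive neighbors sum to $2\pi$, and each such gap, being at most $\pi$ when it is small, coincides with the subtended angle and is therefore strictly larger than $2\pi/7$ by the previous step. If there were $\#\mathcal{N}(x) \ge 7$ neighbors, the gaps would sum to more than $7 \cdot \tfrac{2\pi}{7} = 2\pi$, a contradiction; thus $\#\mathcal{N}(x) \le 6$. The only delicate point is the quantitative angle estimate—identifying $r' = r'' = 1+\alpha$ and $\theta = 2\pi/7$ as the extremal configuration via the monotonicity above, and verifying the resulting inequality against $1-\alpha$—while the rest is bookkeeping. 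The $\alpha_0$ produced here should be intersected with the constant from Lemma~\ref{lem:min_dist} so that both the minimum-distance property and this bound hold at once.
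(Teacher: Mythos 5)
Your proof is correct and is precisely the argument the paper has in mind: the paper's own ``proof'' is a one-line remark that the bound follows from Lemma~\ref{lem:min_dist} by a geometric argument (citing Theil), and your angular-separation/packing computation is exactly that argument spelled out. The quantitative details are sound --- the monotonicity of the law-of-cosines expression for small $\alpha$, the bound $2\sin(\pi/7) \approx 0.868 < 1$ forcing every subtended angle above $2\pi/7$ once $\alpha$ is small enough, and the cyclic-gap count (with gaps exceeding $\pi$ handled trivially) yielding $\#\mathcal{N}(x) \le 6$.
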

\begin{proof}
	As mentioned in \cite{theil06}, the proof follows from Lemma~\ref{lem:min_dist} by a geometric argument.
\end{proof}

We define the \emph{neighborhood energy} as
\begin{equation*}
	V_{\mathcal{N}} (x) := \frac{1}{2} \sum_{x' \in \mathcal{N}(x)} e(\{ x, x' \})
	+ \frac{1}{2} \sum_{\substack{x', x'' \in \mathcal{N}(x)\\ x' \neq x''}} a(x, x', x'').
\end{equation*}
Note that only half of the energy of a pairwise bond is associated with the atom $x$.

We define a subset of the particles which have three neighbors and angles close to $2\pi/3$. We will see later that these particles have a neighborhood energy very close to the optimal one. We call these \emph{regular atoms}:
\begin{equation*}
	G_{\epsilon} = \{ x \in X_N \ |\ \# \mathcal{N}(x) = 3 \text{ and } | \theta_{x', x, x''} - 2\pi/3 | < \epsilon \text{ for all } x', x'' \in \mathcal{N}(x) \}.
\end{equation*}

The choice of regular atoms is parametrized by $\epsilon > 0$, which we shall later take to be a fixed value. Atoms which are not regular will be called \emph{defected}.

% Decomposition of the energy
\subsection{Decomposition of the energy}
\label{sec:energy_decomp}

With these definitions, we decompose and estimate the energy.
\begin{lemma}
\label{lem:energy_decomp}
For any configuration $y: X \rightarrow \mathbb{R}^2$ and any potential of the form~\eqref{eq:energy_full}, with $V_3$ satisfying assumption \eqref{eq:V3_normalized}, we have
\begin{align}
\label{eq:energy_decomp}
	V(\{ y \}) &\geq \sum_{x \in G_\epsilon} V_{\mathcal{N}} (x)
	+ \sum_{x \notin G_\epsilon} V_{\mathcal{N}} (x)
	+ \frac{1}{2} \sum_{x \in X}\ \sum_{\substack{x' \notin \mathcal{N}(x)\\ x' \neq x}} e(\{ x, x' \}).
\end{align}
\end{lemma}

\begin{proof}
The potential has the form
\begin{align}
\label{eq:energy}
	V &= \frac{1}{2} \sum_{x \in X}\ \sum_{\substack{x' \in X\\ x' \neq x}} e(\{ x, x' \})
	+ \frac{1}{2} \sum_{x \in X}\ \sum_{\substack{x', x'' \in X\\ x, x', x'' \text{ distinct}}} a(x, x', x'').
\end{align}

First, we split the sum over pairs into sums over neighboring pairs and non-neighboring pairs:
\begin{align}
\label{eq:pair_decomp}
	\sum_{x \in X}\ \sum_{\substack{x' \in X\\ x' \neq x}} e(\{ x, x' \}) &=
	\sum_{x \in X}\ \sum_{x' \in \mathcal{N}(x)} e(\{ x, x' \})
	+ \sum_{x \in X}\ \sum_{\substack{x' \notin \mathcal{N}(x)\\ x' \neq x}} e(\{ x, x' \}).
\end{align}

Now, we split up the sum over triples in a similar way and estimate:
\begin{align}
	\sum_{x \in X}\ \sum_{\substack{x', x'' \in X\\ x, x', x'' \text{ distinct}}} a(x, x', x'') &=
	\sum_{x \in X}\ \sum_{\substack{x', x'' \in \mathcal{N}(x)\\ x' \neq x''}} a(x, x', x'') 
	+ \sum_{x \in X}\ \sum_{\substack{ \{x', x''\} \cap \mathcal{N}^c(x) \neq \emptyset \\ x, x', x'' \text{ distinct}}} a(x, x', x'') \\
	\label{eq:triple_decomp}
	& \geq \sum_{x \in X}\ \sum_{\substack{x', x'' \in \mathcal{N}(x)\\ x' \neq x''}} a(x, x', x''),
\end{align}
where we have used assumption~\eqref{eq:V3_normalized} that the angle energy is non-negative.

Now, substituting \eqref{eq:pair_decomp} and \eqref{eq:triple_decomp} into \eqref{eq:energy}, we have
\begin{align*}
	V &\geq
	\frac{1}{2} \sum_{x \in X}\ \sum_{x' \in \mathcal{N}(x)} e(\{ x, x' \})
	+ \frac{1}{2} \sum_{x \in X}\ \sum_{\substack{x' \notin \mathcal{N}(x)\\ x' \neq x}} e(\{ x, x' \})
	+ \frac{1}{2} \sum_{x \in X}\ \sum_{\substack{x', x'' \in \mathcal{N}(x)\\ x' \neq x''}} a(x, x', x'')\\
	&= \sum_{x \in X} \left[ \frac{1}{2} \sum_{x' \in \mathcal{N}(x)} e(\{ x, x' \})
	+ \frac{1}{2} \sum_{\substack{x', x'' \in \mathcal{N}(x)\\ x' \neq x''}} a(x, x', x'') \right]
	+ \frac{1}{2} \sum_{x \in X}\ \sum_{\substack{x' \notin \mathcal{N}(x)\\ x' \neq x}} e(\{ x, x' \})\\
	&= \sum_{x \in X} V_{\mathcal{N}} (x)
	+ \frac{1}{2} \sum_{x \in X}\ \sum_{\substack{x' \notin \mathcal{N}(x)\\ x' \neq x}} e(\{ x, x' \}),
\end{align*}
where we have used the definition of the neighborhood energy. Now, splitting the sum of the neighborhood energies into sums over regular and defected atoms, we have
\begin{align*}
	V &\geq \sum_{x \in G_\epsilon} V_{\mathcal{N}} (x)
	+ \sum_{x \notin G_\epsilon} V_{\mathcal{N}} (x)
	+ \frac{1}{2} \sum_{x \in X}\ \sum_{\substack{x' \notin \mathcal{N}(x)\\ x' \neq x}} e(\{ x, x' \}).
\end{align*}
This proves the claim.
\end{proof}

% Estimates on the neighborhood energy
\subsection{Estimates on the neighborhood energy}
\label{sec:nbhd_energy_est}

In this section, we consider a fixed $x \in X$ and prove several estimates on its neighborhood energy. First, we note that for $\alpha \in (0, \frac{1}{3})$, for any configuration $y: X \rightarrow \mathbb{R}^2$ satisfying the minimum distance inequality~\eqref{eq:min_dist}, we can enumerate $\mathcal{N}(x)$ in counterclockwise order around $x$, starting from the $e_1$-direction: $\mathcal{N}(x) = \{ x_i \}_{i=1}^{M}$, where $M = \# \mathcal{N}(x)$.\footnote{We can index the neighbors in this way because no two can be along the same ray. Assuming this were possible, the minimum distance property would imply that the further particle is at least distance $2 - 2 \alpha$ from $x$. For $\alpha < \frac{1}{3}$, we have $2 - 2 \alpha > 1 + \alpha$, implying that it is in fact not a neighbor, contradicting the original assumption.} We make the convention that $x_{M+1} = x_1$. We label the bond lengths $r_i := |y(x_i) - y(x)|$ and the consecutive angles $\theta_i := \theta_{x_i, x, x_{i+1}} = 2\pi - \theta_{x_{i+1}, x, x_{i}}$ for $i = 1, \ldots, M$. These angles satisfy the constraint $\sum_{i=1}^M \theta_i = 2\pi$. Recall that assumption~\eqref{eq:V3_symmetric} implies that $V_3(\theta_{x_i, x, x_{i+1}}) = V_3(\theta_{x_{i+1}, x, x_{i}})$.

% Lower bound on energy of any particle
We now prove a lower bound on the neighborhood energy of an arbitrary particle.

\begin{lemma}
\label{lem:lb_all}
	There exists $\alpha_0 > 0$ such that for all $\alpha \in (0, \alpha_0)$, for all potentials $V_2$ satisfying assumptions \eqref{eq:V2_smooth} -- \eqref{eq:V2_normalized} and potentials $V_3$ satisfying assumptions \eqref{eq:V3_convex} -- \eqref{eq:V3_angle72}, and all configurations $y: X \rightarrow \mathbb{R}^2$ satisfying \eqref{eq:min_dist}, the neighborhood energy $V_{\mathcal{N}}(x)$ is bounded below for all $x \in X$.
	
	Fix $x \in X$ and let $M := \# \mathcal{N}(x)$. By Lemma~\ref{lem:n_nbrs}, there exists $\alpha_0$ such that $M \leq 6$. For $0 \leq M \leq 1$, we have
\begin{equation*}
	V_{\mathcal{N}} (x) \geq -\frac{M}{2} + \frac{1}{4} \sum_{i = 1}^{M} | r_i - 1 |^2.
\end{equation*}

For $M = 2$, we have
\begin{equation*}
	V_{\mathcal{N}} (x) \geq -\frac{M}{2} + \frac{1}{4} \sum_{i = 1}^{M} | r_i - 1 |^2 + \frac{m}{2} \left| \theta_1 - \pi \right|^2.
\end{equation*}

For $3 \leq M \leq 6$, we have
\begin{equation*}
	V_{\mathcal{N}} (x) \geq -\frac{M}{2} + \frac{1}{4} \sum_{i = 1}^{M} | r_i - 1 |^2 + M V_3 \left( \frac{2\pi}{M} \right) + \frac{m}{2} \sum_{i=1}^M \left| \theta_i - \frac{2\pi}{M} \right|^2.
\end{equation*}
\end{lemma}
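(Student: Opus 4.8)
The plan is to treat each value of $M$ separately, in every case splitting the neighborhood energy into its edge-energy part and its angle-energy part and bounding them independently. The edge energy $\frac12\sum_{i} e(\{x,x_i\})$ is where the factor $\frac14$ on the $|r_i-1|^2$ terms must come from: by the minimum-distance property~\eqref{eq:min_dist} every neighbor satisfies $r_i \in (1-\alpha, 1+\alpha)$, so Lemma~\ref{lem:lb_V2} gives $V_2(r_i) \geq -1 + \frac12|r_i-1|^2$. Halving (because only half the bond energy is charged to $x$) yields $\frac12 e(\{x,x_i\}) \geq -\frac12 + \frac14|r_i-1|^2$, and summing over the $M$ neighbors produces the common terms $-\frac{M}{2} + \frac14\sum_i |r_i-1|^2$ that appear in all three cases. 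The remaining task is to bound the angle energy $\frac12\sum_{x'\neq x''} a(x,x',x'')$ from below by the stated $M$-dependent quantity.

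For the angular part I would argue as follows. For $M\leq 1$ there are no angles, so $V_3$ contributes nothing and the first bound is immediate. For $M=2$ there is a single pair of neighbors; the two ordered triples $a(x,x_1,x_2)$ and $a(x,x_2,x_1)$ are equal by the symmetry noted after~\eqref{eq:V3_symmetric}, so with the $\frac12$ prefactor the angle energy is $a(x,x_1,x_2) = V_3(\theta_1) f(r_1)f(r_2)$. Since $r_1,r_2 < 1+\alpha < \tfrac32 = R_1$ we have $f(r_1)=f(r_2)=1$, reducing the angle energy to $V_3(\theta_1)$, and Lemma~\ref{lem:lb_V3} gives $V_3(\theta_1)\geq \frac{m}{2}|\theta_1-\pi|^2$, which is exactly the extra term claimed. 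For $3\leq M\leq 6$ the same cutoff observation forces every relevant $f$-factor to equal $1$, so the angle energy equals $\frac12\sum_{x'\neq x''} V_3(\theta_{x',x,x''})$. The key reduction is that after halving, the doubly-counted ordered pairs collapse (again by~\eqref{eq:V3_symmetric}) to a sum over the $M$ consecutive angles $\theta_1,\dots,\theta_M$; here I must verify that the consecutive angles are the right ones to retain and that the $\binom{M}{2}$ non-consecutive angles can be discarded using only the nonnegativity of $V_3$ from~\eqref{eq:V3_normalized}. Once the angle energy is bounded below by $\sum_{i=1}^M V_3(\theta_i)$ with $\sum_i\theta_i = 2\pi$, Lemma~\ref{lem:lb_V3_sum} delivers $\sum_i V_3(\theta_i)\geq M V_3(2\pi/M) + \frac{m}{2}\sum_i|\theta_i-2\pi/M|^2$, completing the third bound.

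The main obstacle, and the step deserving the most care, is the combinatorial bookkeeping of the three-body sum when $3\leq M\leq 6$: one must show that discarding all non-consecutive angle contributions is legitimate and that the $\frac12$ prefactor together with the $V_3(\theta)=V_3(2\pi-\theta)$ symmetry exactly converts the sum over ordered distinct pairs $(x',x'')$ into the sum over the $M$ consecutive gap-angles $\theta_i$ (rather than, say, over all $\binom{M}{2}$ pairwise angles). Concretely, for each unordered pair $\{x_i,x_{i+1}\}$ of cyclically adjacent neighbors the two ordered triples contribute $V_3(\theta_i)+V_3(2\pi-\theta_i)=2V_3(\theta_i)$, and the prefactor $\frac12$ recovers a single $V_3(\theta_i)$; all other ordered pairs contribute nonnegative terms that are simply dropped. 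A subtlety to confirm is that the enumeration of $\mathcal N(x)$ in counterclockwise order is well-defined and that $\sum_{i=1}^M\theta_i=2\pi$, both of which are guaranteed by the minimum-distance footnote preceding Lemma~\ref{lem:lb_all}. I would also note that the hypothesis invokes only assumptions~\eqref{eq:V3_convex}--\eqref{eq:V3_angle72}, but in fact only~\eqref{eq:V3_convex} and~\eqref{eq:V3_normalized} are needed for these lower bounds; the stronger angle assumptions~\eqref{eq:V3_angle120}--\eqref{eq:V3_angle72} are not used here and presumably enter only in later lemmas that sharpen these bounds for defected atoms.
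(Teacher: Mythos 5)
Your proposal is correct and follows essentially the same route as the paper's proof: bound the edge energy via Lemma~\ref{lem:lb_V2} (halved to get the $\frac14$ factor), note that $f\equiv 1$ on neighbor distances, collapse the ordered triples over cyclically adjacent neighbors using the symmetry $V_3(\theta)=V_3(2\pi-\theta)$, discard the remaining triples by nonnegativity of $V_3$, and finish with Lemma~\ref{lem:lb_V3} (for $M=2$) or Lemma~\ref{lem:lb_V3_sum} (for $3\leq M\leq 6$). Your closing remark is also accurate in spirit: assumptions \eqref{eq:V3_angle120}--\eqref{eq:V3_angle72} are not used here (they enter only in Lemma~\ref{lem:lb_defected}), though note that \eqref{eq:V3_symmetric} and the smoothness in \eqref{eq:V3_smooth} are also quietly needed, the former for the ordered-pair collapse and the latter inside Lemma~\ref{lem:lb_V3}.
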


\begin{proof}
Let $x \in X$ and $M = \# \mathcal{N}(x)$. The neighborhood energy is
\begin{equation*}
	V_{\mathcal{N}} (x) = \frac{1}{2} \sum_{x' \in \mathcal{N}(x)} e(\{ x, x' \})
	+ \frac{1}{2} \sum_{\substack{x', x'' \in \mathcal{N}(x)\\ x' \neq x''}} a(x, x', x'').
\end{equation*}

% Edge energy
First, we estimate the edge energy. For $x_i \in \mathcal{N}(x)$, by definition of $\mathcal{N}(x)$ and by equation~\eqref{eq:min_dist}, we have $|y(x_i) - y(x)| \in (1 - \alpha, 1 + \alpha)$. Therefore, using Lemma~\ref{lem:lb_V2}, we have
\begin{align}
	\nonumber
	e(\{x, x_i\}) = V_2(|y(x_i) - y(x)|) \geq -1 + \frac{1}{2} | r_i - 1 |^2, \text{ and}\\
	\label{eq:lb_edge}
	\frac{1}{2} \sum_{x' \in \mathcal{N}(x)} e(\{ x, x' \}) \geq -\frac{M}{2} + \frac{1}{4} \sum_{i = 1}^{M} | r_i - 1 |^2
\end{align}

% Angle energy
Now, we treat the angle energy. Note that for any $x' \in \mathcal{N}(x)$, $|y(x') - y(x)| \leq 1+\alpha < \frac{3}{2}$, so by definition $f( |y(x') - y(x)| ) = 1$. Thus, for $x', x'' \in \mathcal{N}(x)$,
\begin{equation}
	a(x, x', x'') = V_3 (\theta_{x', x, x''}) f( |y(x') - y(x)| ) f( |y(x'') - y(x)|) = V_3 (\theta_{x', x, x''}).
\end{equation}

If $0 \leq M \leq 1$, there are no triples, and
\begin{equation}
	\label{eq:lb_angle_01}	
	\frac{1}{2} \sum_{\substack{x', x'' \in \mathcal{N}(x)\\ x' \neq x''}} a(x, x', x'') = 0.
\end{equation}

If $M = 2$, there are triples $(x, x_1, x_2)$ and $(x, x_2, x_1)$. Therefore,
\begin{equation}
	\label{eq:lb_angle_2}
	\frac{1}{2} \sum_{\substack{x', x'' \in \mathcal{N}(x)\\ x' \neq x''}} a(x, x', x'') = \frac{1}{2} ( V_3 (\theta_{x_1, x, x_2}) + V_3 (\theta_{x_2, x, x_1}) ) = \frac{1}{2} ( 2 V_3(\theta_1) ) = V_3(\theta_1) \geq \frac{m}{2} |\theta_1 - \pi|^2.
\end{equation}

If $3 \leq M \leq 6$, then the distinct triples $(x, x_{i}, x_{i+1})$ and $(x, x_{i+1}, x_i)$ are contained in the set of neighboring triples for $i=1, \ldots, M$. Since $V_3$ is nonnegative by assumption~\eqref{eq:V3_normalized}, we have
\begin{align*}
	\frac{1}{2} \sum_{\substack{x', x'' \in \mathcal{N}(x)\\ x' \neq x''}} a(x, x', x'')
	&\geq \sum_{i=1}^M \frac{1}{2} \left( a(x, x_{i}, x_{i+1}) + a(x, x_{i+1}, x_{i}) \right) \\
	&= \sum_{i=1}^M \frac{1}{2} \left( V_3 (\theta_{x_i, x, x_{i+1}}) + V_3 (\theta_{x_{i+1}, x, x_i}) \right)\\
	&= \sum_{i=1}^M V_3(\theta_i).
\end{align*}

Therefore, since $\sum_{i=1}^M \theta_i = 2\pi$, using Lemma~\ref{lem:lb_V3_sum},
\begin{align}
	\label{eq:lb_angle_36}
	\frac{1}{2} \sum_{\substack{x', x'' \in \mathcal{N}(x)\\ x' \neq x''}} a(x, x', x'')
	&\geq M V_3 \left( \frac{2\pi}{M} \right) + \frac{m}{2} \sum_{i=1}^M \left| \theta_i - \frac{2\pi}{M} \right|^2.
\end{align}

% Neighborhood energy
Finally, we combine the estimates on the edge energy and the angle energy to estimate the neighborhood energy. For $0 \leq M \leq 1$, by Equations~\eqref{eq:lb_edge} and~\eqref{eq:lb_angle_01}, we have
\begin{equation*}
	V_{\mathcal{N}} (x) \geq -\frac{M}{2} + \frac{1}{4} \sum_{i = 1}^{M} | r_i - 1 |^2.
\end{equation*}

For $M = 2$, by Equations~\eqref{eq:lb_edge} and~\eqref{eq:lb_angle_2}, we have
\begin{equation*}
	V_{\mathcal{N}} (x) \geq -\frac{M}{2} + \frac{1}{4} \sum_{i = 1}^{M} | r_i - 1 |^2 + \frac{m}{2} |\theta_1 - \pi|^2.
\end{equation*}

For $3 \leq M \leq 6$, by Equations~\eqref{eq:lb_edge} and~\eqref{eq:lb_angle_36} we have
\begin{equation*}
	V_{\mathcal{N}} (x) \geq -\frac{M}{2} + \frac{1}{4} \sum_{i = 1}^{M} | r_i - 1 |^2 + M V_3 \left( \frac{2\pi}{M} \right) + \frac{m}{2} \sum_{i=1}^M \left| \theta_i - \frac{2\pi}{M} \right|^2. \quad
\end{equation*}
\end{proof}

We recognize the quadratic terms in Lemma~\ref{lem:lb_all} as elastic terms. Therefore, we introduce the notation
\begin{align*}
	W_{e}(x) &= \frac{1}{4} \sum_{j=1}^{\# \mathcal{N}(x)} \left| r_j - 1 \right|^2 \text{ and}\\
	W_{a}(x) &= \frac{m}{2} \sum_{j=1}^{\# \mathcal{N}(x)} \left| \theta_j - \frac{2\pi}{\# \mathcal{N}(x)} \right|^2.
\end{align*}
Note that by their definitions, $W_{e}(x)$ and $W_{a}(x)$ are non-negative.

Rewriting the estimates in Lemma~\ref{lem:lb_all} using this notation, we have
\begin{align*}
	V_{\mathcal{N}} (x) & \geq -\frac{M}{2} + W_{e}(x) & & \text{for $0 \leq M \leq 1$,}\\
	V_{\mathcal{N}} (x) & \geq -\frac{M}{2} + W_{e}(x) + \frac{m}{2}|\theta_1 - \pi|^2 & & \text{for $M = 2$, and}\\
	V_{\mathcal{N}} (x) & \geq -\frac{M}{2} + M V_3 \left( \frac{2\pi}{M} \right) + W_{e}(x) + W_{a}(x) & & \text{for $3 \leq M \leq 6$.}
\end{align*}

% Lower bound on energy of a defected particle
We now obtain a lower bound on the neighborhood energy of a defected particle.

\begin{lemma}
\label{lem:lb_defected}
	There exists $\alpha_0 > 0$ such that for all $\alpha \in (0, \alpha_0)$, for all $\epsilon > 0$, for all potentials $V_2$ satisfying assumptions \eqref{eq:V2_smooth} -- \eqref{eq:V2_normalized} and potentials $V_3$ satisfying assumptions \eqref{eq:V3_smooth} -- \eqref{eq:V3_angle72}, and for all configurations $y: X \rightarrow \mathbb{R}^2$ satisfying~\eqref{eq:min_dist}, there exists $\Delta_\epsilon > 0$ such that
\begin{equation*}
	V_{\mathcal{N}}(x) \geq -\frac{3}{2} + 3 V_3(2 \pi/3) + \Delta_\epsilon \text{ for all } x \notin G_{\epsilon}.
\end{equation*}
\end{lemma}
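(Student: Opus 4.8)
The plan is to treat $x \notin G_\epsilon$ according to the number of neighbors $M := \#\mathcal{N}(x)$, which satisfies $0 \le M \le 6$ by Lemma~\ref{lem:n_nbrs}. In every case I invoke the corresponding lower bound from Lemma~\ref{lem:lb_all} and show that $V_{\mathcal{N}}(x)$ exceeds the regular value $-\tfrac32 + 3 V_3(2\pi/3)$ by a strictly positive amount. Since there are only finitely many cases, setting $\Delta_\epsilon$ equal to the minimum of these gaps yields the claim. The key observation is that assumptions \eqref{eq:V3_angle120}--\eqref{eq:V3_angle72}, together with their consequences \eqref{eq:V3_angle72b} and \eqref{eq:V3_angle60}, have been calibrated precisely so that the neighborhood energy of an atom with $M \ne 3$ neighbors is strictly larger than that of an ideal three-neighbor atom.

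For $M \le 2$ I discard the non-negative elastic terms $W_e(x)$ and $W_a(x)$ and use $V_{\mathcal{N}}(x) \ge -M/2$, so the gap above the regular value is at least $(3-M)/2 - 3 V_3(2\pi/3) \ge \tfrac12 - 3 V_3(2\pi/3)$, which is positive by \eqref{eq:V3_angle120}. The case $M = 2$ is the tightest of these and is exactly why \eqref{eq:V3_angle120} is imposed; note that every atom with $M \le 2$ is automatically defected, so no angular information is needed here.

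For $M \in \{4,5,6\}$ I again drop the elastic terms, so that $V_{\mathcal{N}}(x) \ge -M/2 + M V_3(2\pi/M)$. The required strict inequality $-M/2 + M V_3(2\pi/M) > -\tfrac32 + 3 V_3(2\pi/3)$ is, after dividing by $M$ and rearranging, precisely \eqref{eq:V3_angle90} for $M = 4$ (where $2\pi/M = \pi/2$), precisely \eqref{eq:V3_angle72b} for $M = 5$, and precisely \eqref{eq:V3_angle60} for $M = 6$ (where $2\pi/M = \pi/3$); each therefore contributes a fixed positive gap independent of $\epsilon$.

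The remaining and only genuinely analytic case is $M = 3$. Here the bound from Lemma~\ref{lem:lb_all} reads $V_{\mathcal{N}}(x) \ge -\tfrac32 + 3 V_3(2\pi/3) + W_e(x) + W_a(x)$, so the regular value is already matched at leading order and the entire gap must come from the elastic angle term. Since $x \notin G_\epsilon$ while $M = 3$, at least one consecutive angle satisfies $|\theta_j - 2\pi/3| \ge \epsilon$, whence $W_a(x) = \tfrac{m}{2}\sum_j |\theta_j - 2\pi/3|^2 \ge \tfrac{m}{2}\epsilon^2 > 0$; this is the term responsible for the dependence of $\Delta_\epsilon$ on $\epsilon$. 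Taking $\Delta_\epsilon$ to be the minimum of the seven gaps completes the argument. The main point to verify carefully is simply that each assumption on $V_3$ lines up with the case it is meant to control, the $M = 2$ case (delicate because its margin $\tfrac12 - 3 V_3(2\pi/3)$ saturates \eqref{eq:V3_angle120}) and the $M = 3$ case (the only one requiring the parameter $\epsilon$) being the subtle ones.
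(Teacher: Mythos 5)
Your proposal is correct and follows essentially the same route as the paper: a case analysis on $M = \#\mathcal{N}(x)$ using Lemma~\ref{lem:lb_all}, with the assumptions \eqref{eq:V3_angle120}--\eqref{eq:V3_angle72} (and their consequences \eqref{eq:V3_angle72b}, \eqref{eq:V3_angle60}) supplying a uniform positive gap $\Delta_b$ for $M \neq 3$, and the elastic angle term $\tfrac{m}{2}\epsilon^2$ supplying the $\epsilon$-dependent gap when $M = 3$, followed by taking $\Delta_\epsilon$ as the minimum. Your identification of which assumption controls which value of $M$ matches the paper's $\Delta_b$ computation exactly; the paper merely packages the $M \neq 3$ cases into a single minimum rather than listing them separately.
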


\begin{proof}
If $x \notin G_\epsilon$, then either $M \neq 3$ or $M = 3$ and $|\theta_i - \frac{2\pi}{3}| \geq \epsilon$ for at least one $i$.

\textbf{Case 1.} Suppose $M \neq 3$. By Lemma~\ref{lem:lb_all}, for $0 \leq M \leq 2$, we have
\begin{equation*}
	V_{\mathcal{N}} (x) \geq -\frac{M}{2} + W_{e}(x) \geq -\frac{M}{2} \geq -1.
\end{equation*}
	Also by Lemma~\ref{lem:lb_all}, for $4 \leq M \leq 6$, we have
\begin{equation*}
	V_{\mathcal{N}} (x) \geq -\frac{M}{2} + M V_3 \left( \frac{2\pi}{M} \right) + W_{e}(x) + W_{a}(x)
	\geq -\frac{M}{2} + M V_3 \left( \frac{2\pi}{M} \right).
\end{equation*}

Therefore, for $M \neq 3$, we have
\begin{equation}
\label{eq:nbhd_neq3}
    	V_{\mathcal{N}}(x) \geq \min \left\{ -1, -2 + 4 V_3(\pi/2), -\frac{5}{2} + 5 V_3(2 \pi/5),  -3 + 6 V_3(\pi/3) \right\}.
\end{equation}
Define
\begin{equation*}
    	\Delta_b := \min \left\{ -1, -2 + 4 V_3(\pi/2), -\frac{5}{2} + 5 V_3(2 \pi/5),  -3 + 6 V_3(\pi/3) \right\} - \left( -\frac{3}{2} + 3 V_3(2 \pi/3) \right).
    \end{equation*}
    The assumptions~\eqref{eq:V3_angle120} -- \eqref{eq:V3_angle72} and equations~\eqref{eq:V3_angle72b} and~\eqref{eq:V3_angle60} imply that $\Delta_b > 0$. (Recall that~\eqref{eq:V3_angle72b} and~\eqref{eq:V3_angle60} followed from assumptions~\eqref{eq:V3_smooth} -- \eqref{eq:V3_angle72}.) By definition of $\Delta_b$ and Equation~\eqref{eq:nbhd_neq3}, we have
\begin{equation*}
    	V_{\mathcal{N}}(x) \geq -\frac{3}{2} + 3 V_3(2 \pi/3) + \Delta_b.
\end{equation*}
    
\textbf{Case 2.} Suppose $M = 3$ and $|\theta_i - \frac{2\pi}{3}| \geq \epsilon$ for at least one $i$. Then by Lemma~\ref{lem:lb_all},
    \begin{equation*}
	V_{\mathcal{N}} (x) \geq -\frac{3}{2} + 3 V_3 \left( \frac{2\pi}{3} \right) + W_{e}(x) + \frac{m}{2} \sum_{i=1}^M \left| \theta_i - \frac{2\pi}{M} \right|^2 \geq -\frac{3}{2} + 3 V_3 \left( \frac{2\pi}{3} \right) + \frac{m}{2} \epsilon^2.
    \end{equation*}
    
    Now, define $\Delta_\epsilon := \min \{ \Delta_b, \frac{m}{2} \epsilon^2 \}$. If $x \notin G_{\epsilon}$, then $V_{\mathcal{N}}(x) \geq -\frac{3}{2} + 3 V_3(2 \pi/3) + \Delta_\epsilon$.
\end{proof}

% Estimates on the mid-range interactions
\subsection{Estimates on the mid-range interactions}
We now estimate the sum of edge energies over pairs which are not neighbors. First, denote the set of mid-range pairs as
\begin{equation*}
	\mathcal{M} := \left\{ \{x, x'\} \subset X : 1+\alpha < |y(x) - y(x')| < \frac{3}{2} \right\}.
\end{equation*}

\begin{lemma}
\label{lem:mid-range}
For any potential $V_2$ satisfying assumptions~\eqref{eq:V2_mid_range} and~\eqref{eq:V2_long_range},
\begin{align*}
	\frac{1}{2} \sum_{x \in X} \sum_{\substack{x' \notin \mathcal{N}(x)\\ x' \neq x}} e(\{ x, x' \}) \geq -\alpha \# \mathcal{M}.
\end{align*}
\end{lemma}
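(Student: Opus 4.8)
The plan is to rewrite the ordered double sum as a sum over unordered pairs, exploiting the symmetry of the non-neighbor relation, and then to classify the surviving pairs by their interparticle distance. First I would observe that the condition $x' \notin \mathcal{N}(x)$ is equivalent to $|y(x') - y(x)| > 1 + \alpha$, which is symmetric in $x$ and $x'$; hence each unordered pair $\{x, x'\}$ with $x' \notin \mathcal{N}(x)$ is counted exactly twice in the double sum, once as the ordered pair $(x, x')$ and once as $(x', x)$. The factor $\frac{1}{2}$ therefore collapses the expression into a single sum of $e(\{x, x'\})$ over unordered non-neighbor pairs.

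Next I would partition these unordered non-neighbor pairs according to distance. Since $e(\{x, x'\}) = V_2(|y(x') - y(x)|)$ and every such pair satisfies $|y(x') - y(x)| > 1 + \alpha$, there are exactly two possibilities: either $|y(x') - y(x)| \geq R = \frac{3}{2}$, in which case \eqref{eq:V2_long_range} forces $e(\{x, x'\}) = 0$; or $1 + \alpha < |y(x') - y(x)| < \frac{3}{2}$, which is precisely the defining condition for $\{x, x'\} \in \mathcal{M}$, and here \eqref{eq:V2_mid_range} gives $e(\{x, x'\}) \geq -\alpha$. Discarding the vanishing far-range contributions and bounding each of the $\#\mathcal{M}$ mid-range contributions below by $-\alpha$ yields the claimed bound $-\alpha\, \#\mathcal{M}$.

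The argument is entirely elementary, and there is no substantive obstacle. The only points requiring care are the double-counting bookkeeping that cancels the prefactor $\frac{1}{2}$, and the verification of the boundary case $r = \frac{3}{2}$: such a pair is a non-neighbor but does not lie in $\mathcal{M}$, whose defining inequality is strict, and it contributes zero since \eqref{eq:V2_long_range} applies for all $r \geq R$. With these observations in hand the estimate follows immediately.
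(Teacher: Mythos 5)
Your proposal is correct and follows essentially the same argument as the paper: restrict the non-neighbor sum to mid-range pairs (since pairs at distance at least $\frac{3}{2}$ contribute zero by \eqref{eq:V2_long_range}), bound each mid-range contribution below by $-\alpha$ via \eqref{eq:V2_mid_range}, and use the double-counting of unordered pairs to absorb the factor $\frac{1}{2}$. The only cosmetic difference is the order of bookkeeping (you collapse the $\frac{1}{2}$ first, the paper does it last), and your explicit treatment of the boundary case $r = \frac{3}{2}$ is a careful touch consistent with the paper's strict inequalities.
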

\begin{proof}
If $x \in X$, $x' \notin \mathcal{N}(x)$, and $x' \neq x$, then, by the definition of $\mathcal{N}(x)$, $|y(x) - y(x')| > 1 + \alpha$. If also $|y(x) - y(x')| < \frac{3}{2}$, then $e(\{ x, x' \}) \geq -\alpha$ by assumption~\eqref{eq:V2_mid_range}, otherwise $e(\{ x, x' \}) = 0$ by assumption~\eqref{eq:V2_long_range}. Using these estimates on $e(\{ x, x' \})$, we have
\begin{align*}
	\frac{1}{2} \sum_{x \in X} \sum_{\substack{x' \notin \mathcal{N}(x)\\ x' \neq x}} e(\{ x, x' \})
	&= \frac{1}{2} \sum_{x \in X} \sum_{\substack{x' \in X\\ 1 + \alpha < |y(x') - y(x)| < \frac{3}{2}}} e(\{ x, x' \})\\
	&\geq -\frac{\alpha}{2} \sum_{x \in X} \sum_{\substack{x' \in X\\ 1 + \alpha < |y(x') - y(x)| < \frac{3}{2}}}  1\\
	&= -\alpha \# \mathcal{M},
\end{align*}
where the last equality holds because the sum double-counts pairs in $\mathcal{M}$. This proves the claim.
\end{proof}

The next result states that in the ground state a regular atom has no mid-range interactions. The proof has some similarities to the proof of Lemma~3.1 in \cite{mainini14}, which gives an upper bound on the number of neighbors a particle has in a ground state configuration.
\begin{lemma}
\label{lem:rigidity}
There exists $\alpha_0 \in \left( 0, \frac{1}{4} \right)$ and $\epsilon > 0$, such that
\begin{itemize}
	\item for all $\alpha \in (0, \alpha_0)$,
	\item for all potentials $V$ of form~\eqref{eq:energy_full}, with $V_2$ satisfying assumptions \eqref{eq:V2_smooth} -- \eqref{eq:V2_normalized} and $V_3$ satisfying assumptions \eqref{eq:V3_smooth} -- \eqref{eq:V3_angle72},
	\item for all ground states $y : X \rightarrow \mathbb{R}^2$ of $V(\cdot)$, and
	\item for all $x \in G_\epsilon, x' \in X$ with $x' \neq x$,
\end{itemize}
we have
\begin{itemize}
	\item $|y(x') - y(x)| \leq 1 + \alpha$, or
	\item $|y(x') - y(x)| \geq \frac{3}{2}$.
\end{itemize}
\end{lemma}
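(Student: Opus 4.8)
\emph{Proof sketch.} The plan is to argue by contradiction using the finite range of the interactions: if $x\in G_\epsilon$ had a partner $x'$ at mid-range distance $d=|y(x')-y(x)|\in(1+\alpha,\tfrac32)$, I would send $x'$ off to infinity, exactly as in the proof of Lemma~\ref{lem:min_dist}. Since the potential has finite range, this removes precisely the total energy $E_{x'}$ attributable to $x'$, and because $y$ is a ground state we must have $E_{x'}\le 0$. I will contradict this by showing $E_{x'}>0$, which forces the mid-range annulus of every regular atom to be empty. Throughout I keep $\epsilon$ and $\alpha_0$ as free small parameters to be fixed at the end.

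The decisive gain comes from the three-body term. Because $d<\tfrac32=R_1$ we have $f(d)=1$, so $x'$ participates fully in the angular energy centered at $x$; this is the role of the requirement that the cutoff be positive on all of $[0,\tfrac32]$. Enumerate the three neighbors $x_1,x_2,x_3$ of $x$ counterclockwise; since $x\in G_\epsilon$ the consecutive angles are each within $\epsilon$ of $\tfrac{2\pi}{3}$, so the three corresponding directions split the circle into three sectors, and the direction of $x'$ lies in exactly one of them, bounded by two neighbors subtending a gap $\gamma\in(\tfrac{2\pi}{3}-\epsilon,\tfrac{2\pi}{3}+\epsilon)$. Writing $\psi$ and $\gamma-\psi$ for the two angles from $x'$ to these bounding neighbors, and using the symmetry \eqref{eq:V3_symmetric}, the angular terms with center $x$ and $x'$ as an endpoint contribute at least $V_3(\psi)+V_3(\gamma-\psi)$. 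Convexity \eqref{eq:V3_convex} gives $V_3(\psi)+V_3(\gamma-\psi)\ge 2V_3(\gamma/2)$ (the minimum over the split is at the midpoint), and since $V_3(\pi/3)>4$ by \eqref{eq:V3_angle60}, continuity of $V_3$ lets me fix $\epsilon$ small enough that $2V_3(\gamma/2)>7$ whenever $|\gamma-\tfrac{2\pi}{3}|<\epsilon$.

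It remains to control the possibly favorable two-body energy of $x'$. All other angular terms in $E_{x'}$ are nonnegative by \eqref{eq:V3_normalized}, so $E_{x'}\ge \sum_{x''\neq x'}V_2(|y(x')-y(x'')|)+7$. By Lemma~\ref{lem:min_dist} the configuration satisfies the minimum-distance bound, so by Lemma~\ref{lem:n_nbrs} the atom $x'$ has at most six neighbors, each contributing at least $-1$ since $\min V_2=-1$ by \eqref{eq:V2_normalized}; a packing argument as in Lemma~\ref{lem:min_dist} bounds the number of mid-range partners of $x'$ by an absolute constant $C$, each contributing at least $-\alpha$ by \eqref{eq:V2_mid_range}, while all farther pairs vanish. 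Hence $\sum_{x''}V_2\ge -6-C\alpha$ and $E_{x'}\ge 1-C\alpha>0$ once $\alpha<1/C$; shrinking $\alpha_0$ below $\tfrac14$ if necessary then yields the contradiction $E_{x'}\le 0 < E_{x'}$.

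The crux is the balance in this last step: a well-bonded $x'$ can carry as much as $-6$ in attractive pair energy, so the argument only closes because the single angular defect it creates at $x$ already costs more than $6$. This is exactly what the strong lower bound $V_3(\pi/3)>4$ (equivalently $2V_3(\pi/3)>8$) in \eqref{eq:V3_angle72}--\eqref{eq:V3_angle60} buys, and it is why $\epsilon$ must be taken small enough to keep $\gamma/2$ in the region where $V_3$ is large. The only other care needed is the uniform (in $\alpha$) packing bound $C$ on mid-range partners, which follows just as in Lemma~\ref{lem:min_dist}.
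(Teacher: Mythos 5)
Your argument is correct, but it takes a genuinely different route than the paper: both proofs delete one atom from the ground state and derive a contradiction from the sign of the resulting energy change, yet they delete \emph{different} atoms. The paper removes the regular atom $x$ rather than the mid-range partner: since $x \in G_\epsilon$ has exactly three neighbors, the pair energy it surrenders is at least $-3 - C\alpha$, so a \emph{single} angular term suffices for the contradiction. The paper produces it by pigeonhole, fixing $\epsilon = \pi/15$ so that the sector containing your $x'$ (the paper's $p$) has opening less than $4\pi/5$, forcing one angle $\theta_{p,x,p'} \leq 2\pi/5$ and hence a cost $V_3(2\pi/5) > 4 > 3 + C\alpha$. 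You instead remove the mid-range partner $x'$, so you must beat a pair budget of $-6 - C\alpha$ (a generic atom may have six neighbors by Lemma~\ref{lem:n_nbrs}); you recover the larger margin by charging \emph{both} angles of the split sector and using convexity (Jensen at the midpoint), $V_3(\psi) + V_3(\gamma - \psi) \geq 2 V_3(\gamma/2) > 7$. This is sound, and can even be made cleaner: since \eqref{eq:V3_convex} and \eqref{eq:V3_normalized} make $V_3$ decreasing on $[0,\pi]$, taking $\epsilon \leq 2\pi/15$ gives $\gamma/2 \leq 2\pi/5$ and thus $2V_3(\gamma/2) \geq 2V_3(2\pi/5) > 8$, with no appeal to continuity. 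Your bookkeeping of the deleted energy is also right: each unordered triple carries full weight once the factor $\frac{1}{2}$ and the symmetry \eqref{eq:V3_symmetric} are accounted for, and $f \equiv 1$ at all relevant distances since they lie below $R_1 = \frac{3}{2}$. One marginal case you pass over silently---$x'$ collinear with a bounding neighbor of $x$---is harmless for your route, since the Jensen bound holds with $\psi = 0$; the paper instead excludes collinearity outright using \eqref{eq:min_dist} and $\alpha_0 < \frac{1}{4}$. In short, the paper's choice of deleted atom minimizes the pair budget (three bonds) at the price of fixing $\epsilon$ and exhibiting one small angle, while yours pays a doubled pair budget but extracts a doubled angular penalty from convexity; both close comfortably under assumption \eqref{eq:V3_angle72}.
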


\begin{proof}
% Summary
Let $\alpha_0 > 0$ be such that the inequality~\eqref{eq:min_dist} in Lemma~\ref{lem:min_dist} holds. Let $V$ satisfy the given assumptions, and let $y : X \rightarrow \mathbb{R}^2$ be a ground state. Let $x \in G_\epsilon$. To show that all particles are either closer than $1 + \alpha$ or further than $\frac{3}{2}$ away from $x$, we will use a proof by contradiction. Suppose that $p$ is such that $1 + \alpha < |y(p) - y(x)| < \frac{3}{2}$. We first show that $p$ is part of a triple $(x, p, p')$ with a bond angle less than $\frac{2\pi}{5}$. We then show that this contradicts the fact that the configuration is a ground state.

% Geometric argument
Label $\mathcal{N}(x)$ as $x_1$, $x_2$, and $x_3$ and label the bond angles as $\theta_1$, $\theta_2$, and $\theta_3$, as described in Section~\ref{sec:nbhd_energy_est}. Since $x \in G_{\epsilon}$, we have $\theta_i \in \left( \frac{2\pi}{3} - \epsilon, \frac{2\pi}{3} + \epsilon \right)$. If $\alpha_0 < \frac{1}{4}$, then we have that $p$ can not be along the same ray as any of the $x_i$. Otherwise, by~\eqref{eq:min_dist}, it would be further than $2 - 2\alpha > \frac{3}{2}$ away from $x$. So, $p$ and the $x_i$ can be placed in a counterclockwise ordering around $x$ beginning from the $e_1$-direction. Suppose without loss of generality that in this ordering $p$ follows $x_1$ and precedes $x_2$. See Figure~\ref{fig:mid_range}. Then we have $\theta_{x_1, x, p} + \theta_{p, x, x_2} + \theta_2 + \theta_3 = 2\pi$. Let $\epsilon = \frac{\pi}{15}$. Then $\theta_2, \theta_3 > \frac{2\pi}{3} - \frac{\pi}{15}$. Thus, $\theta_2 + \theta_3 > \frac{4\pi}{3} - \frac{2\pi}{15} = \frac{6\pi}{5}$. Therefore,
\begin{equation*}
	\theta_{x_1, x, p} + \theta_{p, x, x_2} = 2\pi - (\theta_2 + \theta_3) < 2\pi - \frac{6\pi}{5} = \frac{4\pi}{5}.
\end{equation*}
Thus, it is impossible that both $\theta_{x_1, x, p} \geq \frac{2\pi}{5}$ and $\theta_{p, x, x_2} \geq \frac{2\pi}{5}$. So, there exists a triple $(x, p, p')$ with $p' \in \{ x_1, x_2 \} \subset \mathcal{N}(x)$ and bond angle $\theta_{p, x, p'} \leq \frac{2\pi}{5}$.

% Schematic
\begin{figure}[tb]
\centering
\includegraphics[width=0.30\textwidth]{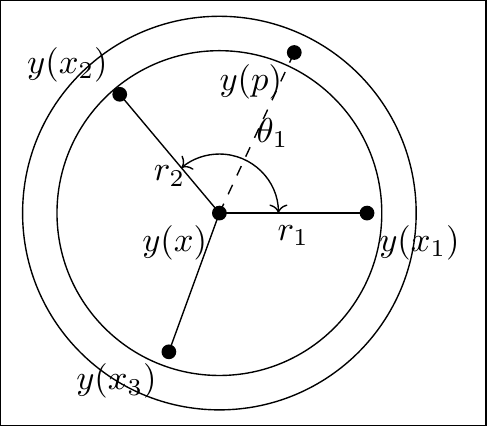}
\caption{Figure depicting particle $x$, and the particles in $B \left( y(x), \frac{3}{2} \right)$. The circles show the boundary of the annulus $B \left( y(x), \frac{3}{2} \right) \setminus B \left( y(x), 1 + \alpha \right)$.}
\label{fig:mid_range}
\end{figure}

% Energetic argument
% Change in energy
The change in energy upon moving the particle $x$ to infinity must be non-negative since we are in a ground state. This change in energy is
\begin{equation}
\label{eq:energyChange}
	-\sum_{\substack{x' \in X\\ x' \neq x}} e(\{x, x'\}) - \frac{1}{2} \sum_{\substack{x', x'' \in X\\ x, x', x'' \text{ distinct}}} a(x, x', x'') - \sum_{\substack{x', x'' \in X\\ x, x', x'' \text{ distinct}}} a(x', x, x'') \geq 0.
\end{equation}

% Estimate on edge energy
Since $x \in G_{\epsilon}$, it has three neighbors closer than $1 + \alpha$, which have edge energy $-1$ or greater. All other particles $x'$ such that $e(\{x, x'\})$ is non-zero are in the annulus centered at $x$ with inner radius $1+\alpha$ and outer radius $\frac{3}{2}$. Since $\alpha > 0$, this annulus is contained in another with inner radius $1$ and outer radius $\frac{3}{2}$. We can infer from the minimum distance property~\eqref{eq:min_dist} that there exists $C >0$, independent of $\alpha$, such that there can be at most $C$ atoms in this annulus, by covering this region with balls of radius $\frac{2}{3}$ as in Lemma~\ref{lem:min_dist}. Thus, using assumptions~\eqref{eq:V2_mid_range} and~\eqref{eq:V2_normalized} on the potential $V_2$,
\begin{equation}
\label{eq:rigidity_edge}
	\sum_{\substack{x' \in X\\ x'\neq x}} e(\{x, x'\}) \geq -3 - C\alpha.
\end{equation}

% Estimates on angle energy
Also, since the angle energy is non-negative, we have
\begin{equation}
\label{eq:rigidity_angle1}
	\sum_{\substack{x', x'' \in X\\ x, x', x'' \text{ distinct}}} a(x', x, x'') \geq 0
\end{equation}
and because there is at least one bond angle $\theta_{p, x, p'} \leq \frac{2\pi}{5}$ with $|y(p) - y(x)| < \frac{3}{2}$ and $|y(p') - y(x)| < \frac{3}{2}$, by assumption~\eqref{eq:V3_angle72}, we have
\begin{equation}
\label{eq:rigidity_angle2}
	\frac{1}{2} \sum_{\substack{x', x'' \in X\\ x, x', x'' \text{ distinct}}} a(x, x', x'') \geq \frac{1}{2} \left( a(x, p, p') + a(x, p', p) \right) = V_3(\theta_{p, x, p'}) \geq V_3 \left( \frac{2\pi}{5} \right) \geq 4.
\end{equation}

% Estimate on change in energy
Therefore, substituting \eqref{eq:rigidity_edge}, \eqref{eq:rigidity_angle1}, and \eqref{eq:rigidity_angle2} into \eqref{eq:energyChange}, we have
\begin{equation*}
	0 \leq \left( 3 + C\alpha \right) + 0 - 4 = -1 + C\alpha.
\end{equation*}

% Contradiction
Choosing $\alpha_0$ to be the lesser of the previous value and $\frac{1}{2C}$, we have for $0 < \alpha < \alpha_0$ that $-\frac{1}{2} > 0$, which is a contradiction. Thus, for a ground state, there can be no such point $p$ with $1 + \alpha \leq |y(p) - y(x)| \leq \frac{3}{2}$. This proves the claim.
\end{proof}

As a result, we get the following upper bound on the number of mid-range interactions.
\begin{lemma}
\label{lem:ub_midrange}
There exists $\alpha_0, \epsilon, C > 0$ such that
\begin{itemize}
	\item for all $\alpha$ with $0 < \alpha < \alpha_0$,
	\item for all potentials $V$ of form~\eqref{eq:energy_full}, with $V_2$ satisfying assumptions \eqref{eq:V2_smooth} -- \eqref{eq:V2_normalized} and $V_3$ satisfying assumptions \eqref{eq:V3_smooth} -- \eqref{eq:V3_angle72}, and
	\item for all ground states $y : X \rightarrow \mathbb{R}^2$ of $V(\cdot)$,
\end{itemize}
we have
\begin{equation*}
	\# \mathcal{M} \leq C (N - \# G_{\epsilon})
\end{equation*}
\end{lemma}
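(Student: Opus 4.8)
The plan is to combine Lemma~\ref{lem:rigidity} with the minimum-distance estimate of Lemma~\ref{lem:min_dist} via a straightforward double-counting argument. The crucial observation is that a mid-range pair cannot involve a regular atom. Indeed, if $\{x, x'\} \in \mathcal{M}$, then $1 + \alpha < |y(x) - y(x')| < \frac{3}{2}$; but Lemma~\ref{lem:rigidity} guarantees that, in a ground state, every particle other than a regular atom $x \in G_\epsilon$ lies either within distance $1 + \alpha$ or beyond distance $\frac{3}{2}$ of it. Hence neither endpoint of a mid-range pair can be regular, so both $x$ and $x'$ lie in $X \setminus G_\epsilon$.

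Next I would bound, for a fixed atom $x$, the number of particles $x'$ that form a mid-range pair with it. Any such $x'$ lies in the annulus $B(y(x), \frac{3}{2}) \setminus B(y(x), 1 + \alpha)$, which is contained in $B(y(x), \frac{3}{2})$. By the minimum-distance property~\eqref{eq:min_dist}, the particles are separated by more than $1 - \alpha$, so a packing argument identical to those in the proofs of Lemma~\ref{lem:min_dist} and Lemma~\ref{lem:n_nbrs} yields a constant $C' > 0$, independent of $\alpha$ for $\alpha$ small, bounding the number of particles in any ball of radius $\frac{3}{2}$. Thus each atom participates in at most $C'$ mid-range pairs.

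Finally I would count mid-range pairs from their endpoints. Writing
\begin{equation*}
	\# \mathcal{M} = \frac{1}{2} \sum_{x \in X} \# \{ x' \in X : \{x, x'\} \in \mathcal{M} \}
\end{equation*}
and using that a regular atom contributes a zero inner count (so only defected atoms appear), this equals $\frac{1}{2} \sum_{x \notin G_\epsilon} \# \{ x' : \{x, x'\} \in \mathcal{M} \}$, which is at most $\frac{C'}{2}(N - \# G_\epsilon)$ by the per-atom bound. Taking $C = C'/2$ then gives the claimed inequality.

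I do not anticipate a serious obstacle: the entire content is carried by Lemma~\ref{lem:rigidity}, and the remaining steps are elementary geometric packing and bookkeeping. The only point needing mild care is verifying that the packing constant $C'$ is uniform in $\alpha$ over the admissible range, which follows exactly as in the earlier lemmas, and confirming the factor $\frac{1}{2}$ from the fact that each unordered pair is counted once from each endpoint.
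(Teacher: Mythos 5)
Your proposal is correct and follows essentially the same route as the paper's proof: both use the contrapositive of Lemma~\ref{lem:rigidity} to conclude that every pair in $\mathcal{M}$ has both endpoints defected, then bound the number of mid-range partners of each defected atom by a packing argument based on the minimum-distance property~\eqref{eq:min_dist}, and finish with the same double-counting (factor $\frac{1}{2}$) over endpoints. The only cosmetic difference is that the paper phrases the first step as the set identity $\mathcal{M} = \mathcal{M}_d$ and absorbs the factor $2$ into its packing constant, whereas you keep $C = C'/2$ explicit.
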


\begin{proof}
Choose $\alpha_0, \epsilon > 0$ such that the hypotheses of Lemma~\ref{lem:rigidity} hold. We first claim that
\begin{equation*}
	\mathcal{M} = \mathcal{M}_d := \left\{ \{x, x'\} \subset G_{\epsilon}^{c} : 1+\alpha < |y(x) - y(x')| < \frac{3}{2} \right\},
\end{equation*}
where $\mathcal{M}_d$ are the mid-range pairs of defected atoms. The inclusion $\mathcal{M}_d \subseteq \mathcal{M}$ follows directly from the definition of $\mathcal{M}$. The opposite direction can be proved using the previous lemma. Suppose that $\{ x, x' \} \in \mathcal{M}$, i.e.\ $x, x' \in X$ with $1 + \alpha \leq |y(x') - y(x)| \leq \frac{3}{2}$. Then, by the contrapositive of Lemma~\ref{lem:rigidity}, $x, x' \in G_{\epsilon}^c$. Thus, $\mathcal{M} \subseteq \mathcal{M}_d$.

Therefore, we can enumerate the mid-range interactions by using a sum to count all ordered pairs where both atoms are defected and the second atom is in the annulus of inner radius $1+\alpha$ and outer radius $\frac{3}{2}$ centered at the first atom. A factor of $\frac{1}{2}$ will be used to normalize for the double counting:
\begin{align*}
	\# \mathcal{M}
	&= \frac{1}{2} \sum_{x \in G_{\epsilon}^c} \sum_{\substack{x' \in G_{\epsilon}^c\\ 1+\alpha < |y(x')-y(x)| < \frac{3}{2}}} 1.
\end{align*}

As in the proof of Lemma~\ref{lem:rigidity}, by the minimum distance property~\eqref{eq:min_dist}, there exists a constant $C > 0$, independent of $\alpha$, such that there at most $2 C$ particles in an annulus of inner radius $1+\alpha$ and outer radius $\frac{3}{2}$. Thus, we have
\begin{align*}
	\# \mathcal{M}
	& \leq \frac{1}{2} \sum_{x \in G_{\epsilon}^c} 2C\\
	& = C \# G_{\epsilon}^c\\
	& = C (N - \# G_{\epsilon}).
\end{align*}
This proves the claim.
\end{proof}

% Final estimate on the potential
\subsection{Final estimate on the potential}
We now gather the results of the previous sections to prove our main theorem (Theorem~\ref{thm:main}), first obtaining the estimate from Theorem~\ref{thm:main_ineq}.

\begin{proof}[Proof of Theorems~\ref{thm:main} and~\ref{thm:main_ineq}]
Let the potential $V$ satisfy the given assumptions. Let $\{ y \}$ be a ground state configuration. Let $\alpha_0$ be such that the hypotheses of Lemma~\ref{lem:min_dist} are satisfied. Then, the configuration $\{y\}$ satisfies the minimum distance inequality~\eqref{eq:min_dist}. If necessary, reduce $\alpha_0$ such that the hypotheses of Lemmas~\ref{lem:lb_all} and~\ref{lem:lb_defected} hold.

We begin with the estimate \eqref{eq:energy_decomp} from Lemma~\ref{lem:energy_decomp}:
\begin{align}
	V &\geq
	\sum_{x \in G_\epsilon} V_{\mathcal{N}} (x)
	+ \sum_{x \notin G_\epsilon} V_{\mathcal{N}} (x)
	+ \frac{1}{2} \sum_{x \in X}\ \sum_{\substack{x' \notin \mathcal{N}(x)\\ x' \neq x}} e(\{ x, x' \}).
\end{align}

Using Lemma~\ref{lem:mid-range} to estimate the sum over non-neighbor pairs, we have
\begin{align*}
	V &\geq
	\sum_{x \in G_\epsilon} V_{\mathcal{N}} (x)
	+ \sum_{x \notin G_\epsilon} V_{\mathcal{N}} (x)
	- \alpha \# \mathcal{M}.
\end{align*}

Now, using Lemma~\ref{lem:lb_all} to estimate the sum over the regular atoms, this becomes
\begin{align*}
	V &\geq
	\sum_{x \in G_\epsilon} \left( -\frac{3}{2} + 3 V_3(2\pi/3) + W_{e}(x) + W_{a}(x) \right)
	+ \sum_{x \notin G_\epsilon} V_{\mathcal{N}} (x)
	- \alpha \# \mathcal{M}.
\end{align*}

By Lemma~\ref{lem:lb_defected}, we can estimate the sum over defected atoms. Thus, there exists $\Delta_{\epsilon} > 0$ such that
\begin{align*}
	V &\geq
	\sum_{x \in G_\epsilon} \left( -\frac{3}{2} + 3 V_3(2\pi/3) + W_{e}(x) + W_{a}(x) \right)
	+ \sum_{x \notin G_\epsilon} \left( -\frac{3}{2} + 3 V_3(2\pi/3) + \Delta_\epsilon \right)
	- \alpha \# \mathcal{M}.
\end{align*}
	
Adding up the constant parts of the sums and simplifying, we have
\begin{align*}
	\begin{split}
	V &\geq
	\# G_{\epsilon} \left( -\frac{3}{2} + 3 V_3(2\pi/3) \right) + \sum_{x \in G_\epsilon} \left( W_{e}(x) + W_{a}(x) \right)\\
		& \qquad \qquad + (N - \# G_{\epsilon}) \left( -\frac{3}{2} + 3 V_3(2\pi/3) + \Delta_\epsilon \right) - \alpha \# \mathcal{M}
	\end{split}\\
	&= \left( -\frac{3}{2} + 3 V_3(2\pi/3) \right) N + \Delta_\epsilon (N - \# G_{\epsilon}) + \sum_{x \in G_\epsilon} \left( W_{e}(x) + W_{a}(x) \right) - \alpha \# \mathcal{M}.
\end{align*}

Using the estimate on the number of mid-range pairs from Lemma~\ref{lem:ub_midrange}, there exists $\epsilon > 0, C > 0$ such that this becomes
\begin{align*}
	V &\geq
	\left( -\frac{3}{2} + 3 V_3(2\pi/3) \right) N + \Delta_\epsilon (N - \# G_{\epsilon}) + \sum_{x \in G_\epsilon} \left( W_{e}(x) + W_{a}(x) \right) - C \alpha (N - \# G_{\epsilon}).
\end{align*}
If necessary, reduce $\alpha_0$ to $\alpha_0 = \frac{1}{2C}\Delta_\epsilon$ so that $C \alpha < \frac{1}{2}\Delta_\epsilon$. Then, we have
\begin{align}
	\label{eq:main_ineq}
	V &\geq
	\left( -\frac{3}{2} + 3 V_3(2\pi/3) \right) N + \frac{\Delta_\epsilon}{2} (N - \# G_{\epsilon}) + \sum_{x \in G_\epsilon} \left( \sum_{j=1}^3 \frac{1}{2} \left| r_j - 1 \right|^2 + \sum_{j=1}^3 \frac{m}{2} \left| \theta_j - \frac{2\pi}{3} \right|^2 \right)\\
	\label{eq:energy_lb}
	&\geq \left( -\frac{3}{2} + 3 V_3(2\pi/3) \right) N.
\end{align}
The inequality~\eqref{eq:main_ineq} is the claim of Theorem~\ref{thm:main_ineq}, with $G = G_{\epsilon}$. Along with the upper bound~\eqref{eq:energy_ub}, the lower bound~\eqref{eq:energy_lb} proves Theorem~\ref{thm:main}.
\end{proof}

% Lower and upper bounds on the number of defected atoms
\subsection{Estimates on the number of defected atoms and the excess surface energy}
We next obtain a lower bound on the number of defected atoms by showing that it is not possible for all particles to have three bonds with bond angles $2 \pi/3$; there must be some particles on the boundary that contribute surface energy. The number of such particles must grow at least as fast as $N^{1/2}$.

First we prove a geometric covering result.

\begin{lemma}
\label{lem:covering}
	For any $\theta_{\max} \in (0, \pi)$ and $\alpha \in \left( 0, \frac{1}{3} \right)$, there exists $R_b > 4/3$ such that for any configuration $y : X \rightarrow \mathbb{R}^2$ which satisfies \eqref{eq:min_dist}, we have the following implication for a particle $x \in X$ and its neighborhood $\mathcal{N}(x) =: \{ x_i \}$:

If $x \in X$ has three or more neighbors $\{x_i\}$ with consecutive bond angles $\theta_i := \theta_{x_i, x, x_{i+1}} < \theta_{\max} \text{ for } i = 1, \ldots, M$, then the balls of radius $R_b$ centered at the $\{y(x_i)\}$ will completely cover the ball of radius $R_b$ centered at $y(x)$, i.e.
	\begin{equation*}
		B(y(x), R_b) \subset \bigcup_{i=1}^M B(y(x_i), R_b),
	\end{equation*}
\end{lemma}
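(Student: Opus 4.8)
The plan is to reduce the covering statement to a one-dimensional convexity estimate after translating $x$ to the origin. I set $y(x) = 0$ and write $p_i := y(x_i)$, $r_i := |p_i|$, and $\hat p_i := p_i / r_i$ for the $M \geq 3$ neighbors in counterclockwise order. By the definition of $\mathcal{N}(x)$ we have $r_i \leq 1 + \alpha$, and (as recorded in Section~\ref{sec:nbhd_energy_est}) the consecutive bond angles satisfy $\sum_{i=1}^M \theta_i = 2\pi$ with each $\theta_i < \theta_{\max}$. I would then \emph{choose}
\[
	R_b := \max\left\{ \tfrac{3}{2},\ \frac{1+\alpha}{2\cos(\theta_{\max}/2)} \right\},
\]
which is $> \tfrac{4}{3}$ and finite since $\theta_{\max}/2 < \pi/2$, and which satisfies both $R_b \geq 1 + \alpha$ and $2 R_b \cos(\theta_{\max}/2) \geq 1 + \alpha$.

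The first key step is a purely angular observation: because the $M$ consecutive gaps $\theta_i$ sum to $2\pi$ and each is strictly less than $\theta_{\max}$, every unit vector $\hat u$ falls in some gap between consecutive neighbor directions $\hat p_i, \hat p_{i+1}$, and hence lies within angular distance at most $\theta_i/2 < \theta_{\max}/2$ of one of them. So for any $\hat u$ there is an index $i$ with $\phi := \angle(\hat u, \hat p_i) \leq \theta_{\max}/2$, giving $\cos\phi \geq \cos(\theta_{\max}/2) > 0$. Given an arbitrary $z \in B(0, R_b)$ with $z \neq 0$, I apply this to $\hat u = z/|z|$ and the corresponding $p_i$, and compute
\[
	|z - p_i|^2 = |z|^2 - 2|z|\, r_i \cos\phi + r_i^2.
\]
Viewing the right-hand side as a function $g(s) = s^2 - 2 s\, r_i \cos\phi + r_i^2$ of $s = |z| \in [0, R_b]$, it is convex, so its maximum over $[0, R_b]$ is attained at an endpoint. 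At $s = 0$ we get $g(0) = r_i^2 \leq (1+\alpha)^2 \leq R_b^2$; at $s = R_b$ we get $g(R_b) = R_b^2 - 2 R_b r_i \cos\phi + r_i^2 \leq R_b^2$ exactly when $r_i \leq 2 R_b \cos\phi$, which holds since $r_i \leq 1 + \alpha \leq 2 R_b \cos(\theta_{\max}/2) \leq 2 R_b \cos\phi$. Hence $|z - p_i| \leq R_b$, i.e. $z \in B(y(x_i), R_b)$; the point $z = 0$ is covered trivially since $r_i < R_b$. This establishes the desired inclusion.

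I expect the only genuine content to be the angular covering fact, namely recognizing that the hypothesis $\theta_i < \theta_{\max} < \pi$ on \emph{all} consecutive gaps forces the neighbor directions to be angularly dense enough that no direction is farther than $\theta_{\max}/2$ from a neighbor; everything afterward is a one-variable convex-function check together with the clean choice of $R_b$. Notably the estimate uses only the upper bound $r_i \leq 1+\alpha$ on bond lengths, while the minimum-distance hypothesis~\eqref{eq:min_dist} enters merely to make the counterclockwise enumeration (and hence the angles $\theta_i$) well-defined. As a sanity check one verifies monotonicity in $R_b$: enlarging $R_b$ only weakens the condition $r_i \leq 2 R_b \cos\phi$ and preserves $r_i \leq R_b$, so any value at least as large as the one above works, confirming that a radius $R_b > \tfrac{4}{3}$ can always be selected.
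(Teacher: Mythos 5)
Your proof is correct, and its mechanics differ from the paper's in a worthwhile way. The paper argues sector by sector: for each gap between consecutive neighbors it considers the extreme point $p$ at distance exactly $R_b$ from both $y(x_i)$ and $y(x_{i+1})$, reduces covering of that sector to showing $D := |p - y(x)| > R_b$, and obtains this from $\theta_i < \theta_{\max} < 2\cos^{-1}\left(\frac{2}{3R_b}\right) < \beta_0 + \gamma_0$ with $\beta_0 = \cos^{-1}\left(\frac{r_1}{2R_b}\right)$, where the equivalences linking $D$, $\beta_0$, $\gamma_0$ are asserted ``by geometric reasoning.'' You instead argue ray by ray: the pigeonhole on the gaps puts every direction within $\theta_i/2 < \theta_{\max}/2$ of a \emph{single} neighbor direction, and then the law of cosines plus convexity of $g(s) = s^2 - 2 s r_i \cos\phi + r_i^2$ reduces everything to checking the two endpoints $s = 0$ and $s = R_b$. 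The two arguments rest on the same threshold inequality --- your condition $r_i \leq 2R_b \cos\phi$ is exactly $\phi \leq \cos^{-1}\left(\frac{r_i}{2R_b}\right)$, the paper's $\beta_0$ --- but yours covers each direction with one ball instead of each sector with two, is fully self-contained (no appeal to unproved geometric equivalences about circle intersections), and produces an explicit admissible $R_b = \max\left\{\frac{3}{2}, \frac{1+\alpha}{2\cos(\theta_{\max}/2)}\right\}$, whereas the paper only asserts existence via monotonicity of $R \mapsto 2\cos^{-1}\left(\frac{2}{3R}\right)$. One loose end worth noting: you conclude $|z - p_i| \leq R_b$, which places $z$ only in the closed ball; but since $\phi \leq \theta_i/2 < \theta_{\max}/2$ is strict, you in fact have $r_i \leq 1+\alpha \leq 2R_b\cos(\theta_{\max}/2) < 2R_b\cos\phi$, so both endpoint values of $g$ are strictly below $R_b^2$ and the inclusion holds for open balls as well; the paper's own proof is no more careful on this point.
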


\begin{proof}
Let $\theta_{\max} \in (0, \pi)$ and $\alpha \in  \left( 0, \frac{1}{3} \right)$. Choose $R_b > 4/3$ such that $\theta_{max} < 2 \cos^{-1} \left( \frac{2}{3 R_b} \right)$. This is possible because $\theta_{max} < \pi$, $h(R) := 2 \cos^{-1} \left( \frac{2}{3R} \right)$ is an increasing function, and $\lim_{R \rightarrow \infty} h(R) = \pi$.

Now, consider a configuration $y : X \rightarrow \mathbb{R}^2$ which satisfies \eqref{eq:min_dist}. Suppose the particle $x$ has three or more neighbors $\{x_i\}$ with consecutive bond angles $\theta_i < \theta_{\max}$. We wish to show that the balls of radius $R_b$ centered at $\{y(x_i)\}$ will completely cover the ball of radius $R_b$ centered at $y(x)$. We consider the different sectors of the circle corresponding to each bond angle $\theta_i$. First, consider the sector between the vectors $y(x_1) - y(x)$ and $y(x_2) - y(x)$, corresponding to angle $\theta_1$. See the schematic in Figure~\ref{fig:hex_nbhd}, where $x$ has three neighbors. Consider the point $p$ that is distance $R_b$ from both $y(x_1)$ and $y(x_2)$ and which lies in the sector defined by angle $\theta_1$. Denote the distance from $y(x)$ to $p$ by $D$. Since $r_i < 4/3 < R_b$, the sector will be covered as long as $D > R_b$. We decompose $\theta_1$ as the sum of two angles. Define $\beta$ as the angle between the vectors $y(x_1) - y(x)$ and $p - y(x)$ and define $\gamma$ as the angle between the vectors $p - y(x)$ and $y(x_2) - y(x)$. Then, $\theta_1 = \beta + \gamma$. Define $\beta_0 := \cos^{-1} \left( \frac{r_1}{2 R_b} \right)$ and $\gamma_0 := \cos^{-1} \left( \frac{r_2}{2 R_b} \right)$. By geometric reasoning, we have:
\begin{itemize}
	\item $D = R_b \iff \beta = \beta_0 \iff \gamma = \gamma_0$,
	\item $D < R_b \iff \beta > \beta_0 \iff \gamma > \gamma_0$, and
	\item $D > R_b \iff \beta < \beta_0 \iff \gamma < \gamma_0$. 
\end{itemize}

% Figure of a defect-free neighborhood
\begin{figure}[tb]
\centering
\includegraphics[scale=1.0]{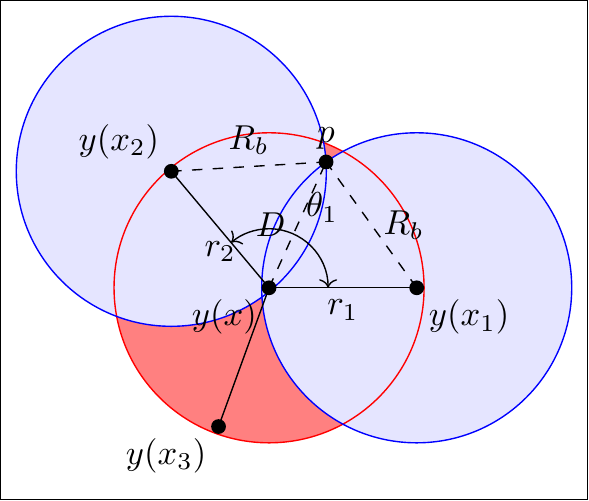}
\includegraphics[scale=1.0]{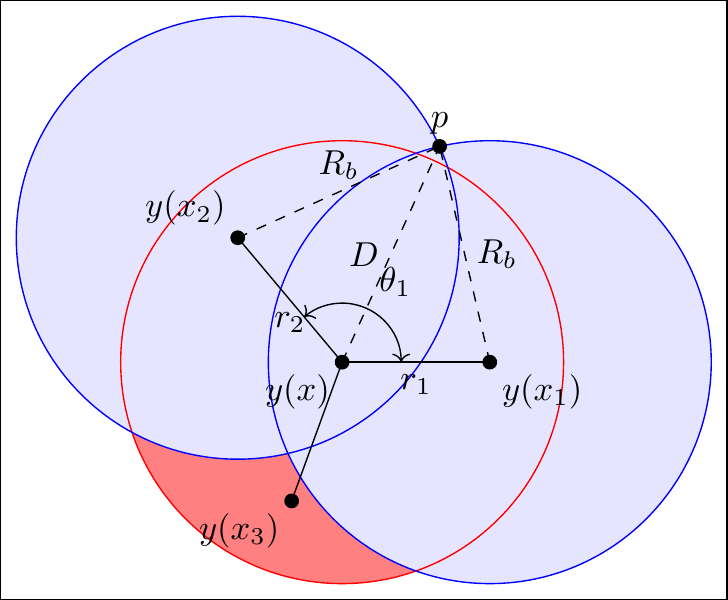}
\caption{Figures depicting the balls $B(y(x), R_b)$, $B(y(x_1), R_b)$, and $B(y(x_2), R_b)$. Left: a case where $R_b$ is such that $D < R_b$. Right: a case where $R_b$ is such that $D > R_b$.}
\label{fig:hex_nbhd}
\end{figure}

We want to show that our choice of $R_b$ guarantees that $D > R_b$. We have that $r_i := {|y(x_i)-y(x)|} \leq 1 + \alpha < 4/3$, so $\frac{r_i}{2 R_b} < \frac{2}{3R_b}$. Since $\cos^{-1}(c)$ is a decreasing function of $c$, this implies that
\begin{align*}
	\beta_0 &= \cos^{-1} \left( \frac{r_1}{2 R_b} \right) > \cos^{-1} \left( \frac{2}{3 R_b} \right) \text{, and}\\
	\gamma_0 &= \cos^{-1} \left( \frac{r_2}{2 R_b} \right) > \cos^{-1} \left( \frac{2}{3 R_b} \right).
\end{align*}
Therefore,
\begin{equation*}
	\beta + \gamma = \theta_1 < \theta_{max} < 2 \cos^{-1} \left( \frac{2}{3 R_b} \right) < \beta_0 + \gamma_0.
\end{equation*}

Now, since $\beta + \gamma < \beta_0 + \gamma_0$, we must have either $\beta < \beta_0$ or $\gamma < \gamma_0$. By the reasoning above, this implies $D > R_b$. Therefore the sector of the circle $B(y(x), R_b)$ between the directions $y(x_1) - y(x)$ and $y(x_2) - y(x)$ is covered by the balls $B(y(x_1), R_b)$ and $B(y(x_2), R_b)$. Since $\theta_i \leq \theta_{max}$ for all $i$, we can show that the other sectors are covered for the same $R_b$, using the same argument. Thus, the entire circle will be covered for this choice of $R_b$.
\end{proof}

\begin{proposition}
	For all $\epsilon \in (0, \pi/3)$, there exists $R_{\epsilon} > 4/3$ such that for any configuration $y : X_N \rightarrow \mathbb{R}^2$ which satisfies \eqref{eq:min_dist} with $\alpha \in \left( 0, \frac{1}{3} \right)$, we have
	\begin{equation*}
    		\frac{1}{3 R_{\epsilon}} N^{1/2} \leq N - \# G_{\epsilon}.
    	\end{equation*}
\end{proposition}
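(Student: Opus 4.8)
The plan is to apply the covering Lemma~\ref{lem:covering} to the regular atoms and then run a planar isoperimetric argument on the union of balls of radius $R_\epsilon$ centred at all the particles. Fix $\epsilon \in (0, \pi/3)$ and set $\theta_{\max} := \frac{2\pi}{3} + \epsilon$, which lies in $(0,\pi)$ precisely because $\epsilon < \pi/3$. Let $R_\epsilon := R_b > 4/3$ be the radius produced by Lemma~\ref{lem:covering} for this $\theta_{\max}$ and the given $\alpha$. If $x \in G_\epsilon$, then $x$ has exactly three neighbours and all of its consecutive bond angles satisfy $\theta_i < \frac{2\pi}{3} + \epsilon = \theta_{\max}$, so the hypotheses of Lemma~\ref{lem:covering} hold and $B(y(x), R_\epsilon) \subset \bigcup_i B(y(x_i), R_\epsilon)$, with the strict margin $D > R_\epsilon$ established in its proof. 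Note that $R_\epsilon$ is uniform over all regular atoms, since the radius in Lemma~\ref{lem:covering} depends only on $\theta_{\max}$ (hence on $\epsilon$) and not on the particular atom.

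Introduce the bounded open set $U := \bigcup_{x \in X} B(y(x), R_\epsilon)$; for $N = 0$ the inequality is trivial, so assume $N \geq 1$, making $U$ nonempty with nonempty boundary. The first key step is to bound the perimeter $\mathcal{H}^1(\partial U)$ from above using only the defected atoms. Since $U$ is a finite union of balls, $\partial U \subseteq \bigcup_{x \in X} \partial B(y(x), R_\epsilon)$. For a regular atom $x$, the strict covering $D > R_\epsilon$ forces every point of the sphere $\partial B(y(x), R_\epsilon)$ to lie in the interior of some neighbouring ball $B(y(x_i), R_\epsilon) \subseteq U$, with the possible exception of finitely many points (the at most three points equidistant from two consecutive neighbours). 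Hence the contribution of the regular spheres to $\partial U$ has one-dimensional measure zero, and consequently $\mathcal{H}^1(\partial U) \leq \sum_{x \notin G_\epsilon} \mathcal{H}^1(\partial B(y(x), R_\epsilon)) = 2\pi R_\epsilon (N - \# G_\epsilon)$.

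The second step is to bound the area of $U$ from below. By the minimum-distance property~\eqref{eq:min_dist}, the balls $B\!\left(y(x), \tfrac{1}{2}(1-\alpha)\right)$ are pairwise disjoint, and each is contained in $U$ because $\tfrac{1}{2}(1-\alpha) < R_\epsilon$; therefore $|U| \geq N \pi \left( \tfrac{1}{2}(1-\alpha) \right)^2$. Applying the planar isoperimetric inequality $\mathcal{H}^1(\partial U)^2 \geq 4\pi |U|$ and combining with the perimeter bound gives $\left( 2\pi R_\epsilon (N - \# G_\epsilon) \right)^2 \geq \pi^2 (1-\alpha)^2 N$, that is, $N - \# G_\epsilon \geq \frac{1-\alpha}{2 R_\epsilon} N^{1/2}$. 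Since $\alpha < \tfrac{1}{3}$ we have $\tfrac{1}{2}(1-\alpha) > \tfrac{1}{3}$, so the desired inequality $\frac{1}{3 R_\epsilon} N^{1/2} \leq N - \# G_\epsilon$ follows.

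The main obstacle is the perimeter estimate, specifically the claim that regular atoms contribute nothing of positive length to $\partial U$. This is exactly where the strict inequality $D > R_\epsilon$ in Lemma~\ref{lem:covering}, rather than a mere $D \geq R_\epsilon$, is essential: it is what pushes the boundary spheres of regular atoms into the open interior of the union, so that any leftover contribution is confined to finitely many points and is $\mathcal{H}^1$-null. The remaining ingredients—the disjoint-small-ball area lower bound from \eqref{eq:min_dist} and the isoperimetric inequality—are routine.
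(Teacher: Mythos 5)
Your proof is correct and follows essentially the same route as the paper's: the covering lemma with $\theta_{\max} = \frac{2\pi}{3} + \epsilon$, an upper bound on the perimeter of the union of balls of radius $R_\epsilon$ using only defected atoms, a lower bound on its area via disjoint small balls from the minimum-distance property (you use radius $\frac{1}{2}(1-\alpha)$ where the paper uses $\frac{1}{3}$, an immaterial difference), and the planar isoperimetric inequality. Your explicit justification that regular atoms' spheres lie in the open interior of the union (via the strict inequality $D > R_\epsilon$) is a welcome elaboration of a step the paper asserts without detail.
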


\begin{proof}
    Define $\theta_{\max} := 2\pi/3 + \epsilon < \pi$. Choose $R_{\epsilon} > 4/3$ as in Lemma~\ref{lem:covering}.
    
    Let $y : X_N \rightarrow \mathbb{R}^2$ be a configuration which satisfies \eqref{eq:min_dist} with $\alpha \in (0, \frac{1}{3})$. Consider the set $S := \cup_{x \in X_N} B(y(x), R_{\epsilon})$. We have $\cup_{x \in X_N} B(y(x), 1/3) \subset S$, and since the particles are at least distance $1 - \alpha$ apart, by \eqref{eq:min_dist}, and $\alpha < 1/3$, this set is also a disjoint union. Therefore,
    \begin{equation}
    \label{eq:area_S}
    	\mathrm{Area}(S) \geq N \pi/9.
    \end{equation}
    
    Now, consider a particle $x \in G_{\epsilon}$ and its neighbors $\mathcal{N}(x) = \{ x_i \}_{i=1}^3$. By the definition of $G_{\epsilon}$ and $\mathcal{N}(x)$, $\max_i \theta_i < 2\pi/3 + \epsilon = \theta_{\max} < \pi$. Then, by Lemma~\ref{lem:covering}, for our choice of $R_{\epsilon}$, the balls of radius $R_{\epsilon}$ centered at the three neighbors of $x$ will completely cover the ball of radius $R_{\epsilon}$ centered at $y(x)$. Thus, the ball around $x$ does not contribute to the perimeter of $S$.  Since $x \in G_{\epsilon}$ was arbitrary, none of the balls around the particles in $G_{\epsilon}$ contribute to the perimeter of $S$. Therefore,
    \begin{equation*}
    	\mathrm{Per}(S) \leq  \sum_{x \notin G_{\epsilon}} \mathrm{Per} (B(y(x), R_{\epsilon})) = (N - \# G_{\epsilon}) 2\pi R_{\epsilon}.
    \end{equation*}
    
    By \eqref{eq:area_S} and the isoperimetric inequality for the plane, we have
    \begin{equation*}
    	4 N \pi^2/9 \leq 4\pi  \mathrm{Area}(S) \leq \left( \mathrm{Per}(S) \right)^2 \leq  (N - \# G_{\epsilon})^2 4 \pi^2 R_{\epsilon}^2,
    \end{equation*}
    or 
    \begin{equation*}
    	\frac{1}{3R_{\epsilon}} N^{1/2} \leq N - \# G_{\epsilon},
    \end{equation*}
    i.e. that the number of defected atoms is bounded below by a constant times $N^{1/2}$.
\end{proof}

This can be combined with our main estimate \eqref{eq:main_ineq} to yield
\begin{align}
	V
	&\geq \left( -\frac{3}{2} + 3 V_3(2\pi/3) \right) N + \frac{\Delta_\epsilon}{6R_{\epsilon}} N^{1/2} + \sum_{x \in G_\epsilon} \left( \sum_{j=1}^3 \frac{1}{2} \left| r_j - 1 \right|^2 + \sum_{j=1}^3 \frac{m}{2} \left| \theta_j - \frac{2\pi}{3} \right|^2 \right)\\
	&\geq \left( -\frac{3}{2} + 3 V_3(2\pi/3) \right) N + \frac{\Delta_\epsilon}{6R_{\epsilon}} N^{1/2}.
\end{align}

We can prove a corresponding upper bound on the ground state energy by considering trial configurations.

\begin{proposition}
\label{prop:ub_Vmin}
	For any potential of the form~\eqref{eq:energy_full}, with $V_2$ satisfying assumptions \eqref{eq:V2_smooth} -- \eqref{eq:V2_normalized} and $V_3$ satisfying assumptions \eqref{eq:V3_smooth} -- \eqref{eq:V3_angle72}, the following equation holds:
\begin{equation*}
	\min_{ y: X_N \rightarrow \mathbb{R}^2} V(\{y\}) \leq \left( -\frac{3}{2} + 3 V_3 \left( 2\pi / 3 \right) \right) N + \sqrt{\frac{3}{2}} N^{1/2} + 1.
\end{equation*}
\end{proposition}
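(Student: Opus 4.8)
The plan is to produce an explicit trial configuration realizing the bound, namely a bijection of $X_N$ onto a well-chosen finite patch $\Lambda \subset H$ of the honeycomb lattice, and to evaluate its energy exactly. The first observation is that inside $H$ only first-neighbor interactions are active: neighboring sites are at distance $1$ while the next distance is $\sqrt 3$, and since $\sqrt 3 > \frac32 = R$ and $\sqrt 3 > \frac53 = R_2$, both $V_2$ and the cutoff $f$ vanish at distance $\sqrt 3$ and beyond by \eqref{eq:V2_long_range} and the definition of $f$. Consequently, in any patch the two-body energy is exactly $-b$, where $b$ is the number of bonds (pairs at distance $1$), each contributing $V_2(1) = -1$ by \eqref{eq:V2_normalized}, and the three-body energy involves only triples both of whose outer atoms are first-neighbors of the center.

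Writing $c := V_3(2\pi/3)$ and letting $N_d$ be the number of sites of $\Lambda$ with exactly $d$ first-neighbors in $\Lambda$ (so that $N = N_1 + N_2 + N_3$ for a patch with no isolated sites), I would next evaluate the three-body sum site by site. Every bond angle in $H$ equals $2\pi/3$ or its reflex $4\pi/3$, and $V_3(4\pi/3) = V_3(2\pi/3) = c$ by \eqref{eq:V3_symmetric}; hence a degree-$2$ site contributes $c$ and a degree-$3$ site contributes $3c$, giving a three-body energy of $c(N_2 + 3N_3)$. Combining this with $2b = N_1 + 2N_2 + 3N_3$ and $N = N_1 + N_2 + N_3$, a short rearrangement gives the exact identity
\begin{equation*}
	V(\{y\}) = \left( -\frac32 + 3c \right) N + (1 - 3c)\, N_1 + \left( \frac12 - 2c \right) N_2 .
\end{equation*}
Assumption \eqref{eq:V3_angle120} gives $c < \frac16$, so both surface coefficients are positive and are bounded above by $1$ and $\frac12$ respectively; thus it suffices to exhibit a patch with $N_1 + \frac12 N_2 \le \sqrt{\frac32}\, N^{1/2} + 1$.

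For this I would use the hexagonal ``flower'' patches of \cite{mainini14}: let $D_n$ be the union of a central hexagonal cell and $n$ surrounding rings of cells. An Euler-characteristic count (the outer boundary is a cycle of $12n+6$ edges) yields $N = 6(n+1)^2$ and $N_1 = 0$, and a bond count gives $N_2 = 6(n+1) = \sqrt{6N}$. For these $N$ the surface term is at most $\frac12 N_2 = \frac12\sqrt{6N} = \sqrt{\frac32}\, N^{1/2}$, which is exactly the claimed leading term. For arbitrary $N$ I would take the largest flower with $6(n+1)^2 \le N$ and place the remaining $N - 6(n+1)^2$ sites along the boundary so as to begin the next ring while keeping every added site at least two-bonded; this keeps $N_1 = 0$ and, by a boundary count as in \cite{mainini14}, yields the bound $N_2 \le \sqrt{6N} + 2$ needed to conclude, whence $\frac12 N_2 \le \sqrt{\frac32}\, N^{1/2} + 1$ and the proposition follows.

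The routine parts are the energy identity and the evaluation for the perfect flowers, where the constant $\sqrt{3/2} = \tfrac12\sqrt6$ appears exactly. The main obstacle is the completion step for general $N$: one must control the number $N_2$ of two-bonded sites as the outer ring is filled in one site at a time, since $N_2$ is not monotone during this process, and simultaneously avoid creating any one-bonded site. The additive $+1$ is precisely what absorbs the rounding from the incomplete outer ring, and the strict inequalities $1-3c < 1$ and $\frac12 - 2c < \frac12$ supply extra slack if the boundary count is pushed to its worst case.
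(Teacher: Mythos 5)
Your overall route coincides with the paper's: use the Mainini--Stefanelli daisy configurations as trial states, observe that for subsets of $H$ only first-neighbor pairs and first-neighbor angles contribute, and reduce the claim to a boundary count. Your exact identity $V = \left(-\tfrac32+3c\right)N + (1-3c)N_1 + \left(\tfrac12-2c\right)N_2$ is correct and is slightly sharper bookkeeping than the paper's (which bounds the number of triples crudely by $3N$), and since the bond number satisfies $b = \tfrac32 N - N_1 - \tfrac12 N_2$, your target inequality $N_1 + \tfrac12 N_2 \le \sqrt{3/2}\,N^{1/2}+1$ is exactly equivalent to the bond-count estimate \eqref{eq:lb_n_pairs2} that the paper imports from \cite[Proposition 5.1]{mainini14}. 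The computation for the perfect flowers $N = 6(n+1)^2$ is also correct.

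The genuine gap is in your completion step for general $N$. The claim that the leftover $N - 6(n+1)^2$ sites can be placed ``keeping every added site at least two-bonded,'' so that $N_1 = 0$ throughout, is false. Take $N = 7$, or any $N = 6(n+1)^2+1$: the honeycomb lattice is bipartite with girth $6$, and two second-neighbors in $H$ have exactly one common first-neighbor (which lies between them on their hexagonal cell), so \emph{every} $7$-point subset of $H$ contains a point of degree at most $1$ --- the first atom of any new ring is necessarily a pendant. Consequently the construction you describe cannot be carried out, and your asserted bound ``$N_1 = 0$ and $N_2 \le \sqrt{6N}+2$'' does not hold as stated; a pendant atom costs up to a full $1$ in the surface term, consuming the entire additive $+1$ of slack, so the accounting of $N_1 + \tfrac12 N_2$ during ring completion must be redone jointly and carefully. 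This accounting is precisely the nontrivial combinatorial content of \cite[Proposition 5.1]{mainini14}; it is not a routine boundary count. The clean repair is to do what the paper does: cite that proposition, which bounds the number of bonds of the interpolating configurations (pendant atoms allowed) by $b \ge \left\lfloor \tfrac32 N - \sqrt{\tfrac32 N}\right\rfloor$, and which via $N_1 + \tfrac12 N_2 = \tfrac32 N - b$ yields exactly your target inequality; otherwise you must supply that proof yourself.
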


\begin{proof}
We obtain the upper bound on the ground state energy by considering the trial configurations from the work of Mainini and Stefanelli \cite{mainini14}. For $N = 6k^2$, $k$ an integer, these configurations are highly symmetric states which Mainini and Stefanelli call ``daisies.''  For other values of $N$, they are a geometric interpolation between two daisy configurations. All of these configurations are subsets of the honeycomb lattice $H$. As a result, only first-neighbor pairs and first-neighbor triples contribute to the energy. To clarify how these pairs and triples are to be counted, we note that
\begin{align}
	\text{\# of first-neighbor pairs} &= \frac{1}{2} \sum_{x \in X_N} \# \mathcal{N}(x) \text{, and}\\
	\text{\# of (non-equivalent) first-neighbor triples} &= \frac{1}{2} \sum_{x \in X_N} (\# \mathcal{N}(x)) (\# \mathcal{N}(x) - 1).
\end{align}

Denote $\left \lfloor w \right \rfloor := \max \{ n \in \mathbb{Z} : n \leq w \}$. The configurations $\{y_N\}$ constructed by Mainini and Stefanelli in \cite[Proposition 5.1]{mainini14} satisfy the following estimate:
\begin{equation}
\label{eq:lb_n_pairs2}
	\text{\# of first-neighbor pairs} \geq \left \lfloor \frac{3}{2} N - \sqrt{\frac{3}{2}} N^{1/2} \right \rfloor
	\geq \frac{3}{2} N - \sqrt{\frac{3}{2}} N^{1/2} - 1.
\end{equation}

For our energy, since the configurations $\{y_N\}$ are a subset of the honeycomb lattice, we have
\begin{align*}
	V(\{ y_N \})
	&= (-1) \left( \text{\# of first-neighbor pairs} \right)
	+ V_3(2\pi/3) \left( \text{\# of first-neighbor triples} \right).
\end{align*}
Since each atom has at most 3 first-neighbors, the number of first-neighbor triples is less than $3N$. Combining this with~\eqref{eq:lb_n_pairs2}, we have
\begin{align*}
	V (\{y_N\} )
	& \leq -\left( \frac{3}{2} N - \sqrt{\frac{3}{2}} N^{1/2} -1 \right) + 3 V_3 \left(2\pi / 3 \right) N\\
	& = \left( -\frac{3}{2} + 3 V_3 \left( 2\pi / 3 \right) \right) N + \sqrt{\frac{3}{2}} N^{1/2} + 1.
\end{align*}
Since the ground state energy must be less than our equal to the energy of this trial configuration, this proves the claim.
\end{proof}

Combining the above result with the inequality~\eqref{eq:main_ineq}, neglecting the elastic terms, we have for a ground state configuration $y_{\min}: X_N \rightarrow \mathbb{R}^2$:
\begin{equation}
\begin{split}
	\left( -\frac{3}{2} + 3 V_3(2\pi/3) \right) N + \frac{\Delta_\epsilon}{2} (N - \# G_{\epsilon})
	&\leq V \left(\{ y_{\min} \} \right)\\
	&\leq \left( -\frac{3}{2} + 3 V_3 \left( 2\pi / 3 \right) \right) N + \sqrt{\frac{3}{2}} N^{1/2} + 1.
\end{split}	
\end{equation}

Subtracting the first term from the right and left of this inequality and simplifying, we have an upper bound for the number of defected atoms in a ground state configuration:
\begin{align}
	N - \# G_{\epsilon}
	\leq \frac{2}{\Delta_\epsilon} \left( \sqrt{\frac{3}{2}} N^{1/2} + 1 \right) = \frac{\sqrt{6}}{\Delta_{\epsilon}} N^{1/2} + \frac{2}{\Delta_\epsilon}.
\end{align}

%%% Formation of a honeycomb lattice
\section{Formation of a honeycomb lattice}
\label{sec:honeycomb}
Subject to periodic boundary conditions, the minimizer of the energy is a honeycomb lattice. This can be derived from a new version of estimate \eqref{eq:main_ineq} in a similar fashion as in \cite{theil06, e09}.

\subsection{Definitions and theorem}

First, we define what is meant by ``periodic boundary conditions.'' This entails an infinite number of particles, so we re-define the energy to include only contributions of particles contained in a ``reference cell,'' which is repeated periodically to form the configuration. Let $L \in \mathbb{N}$. Define $U = \frac{1}{2} \begin{pmatrix} 2\sqrt{3} & \sqrt{3}\\ 0 & 3 \end{pmatrix} Q$, where $Q = [0,1) \times [0, 1) \subset \mathbb{R}^2$ is the semi-open unit cell.

The energy of the configuration $y : H \rightarrow \mathbb{R}^2$ is defined as
\begin{align}
\label{eq:energy_periodic}
	V_L^{per}( \{y\} ) = \frac{1}{2} \sum_{x \in H \cap LU} \sum_{\substack{x' \in H\\ x' \neq x}} e( \{x, x'\} ) + \frac{1}{2} \sum_{x \in H \cap LU} \sum_{\substack{x', x'' \in H\\ x, x', x'' \text{ distinct}}} a(x, x', x'').
\end{align}

Let
$A_2 = \frac{1}{2} \begin{pmatrix} 2 & 1\\ 0 & \sqrt{3} \end{pmatrix} \mathbb{Z}^2 \subset \mathbb{R}^2$.
Thus, $\sqrt{3} A_2$ is the triangular lattice generated by the vectors $a_1$ and $a_2$ used to define the honeycomb lattice. The constraint is
\begin{equation*}
	y \in Y_L^{per} := \{ y : H \rightarrow \mathbb{R}^2 \ | \ y(x) - y(x') = x - x' \text{ if } x - x' \in L \sqrt{3} A_2 \}.
\end{equation*}

Using these definitions, we can show in the following theorem that the ground state is a translated honeycomb lattice.
\begin{theorem}
\label{thm:periodic}
There exists a constant $\alpha_0 \in (0, \frac{1}{3})$ such that for any $0 < \alpha < \alpha_0$, $L \in \mathbb{N}$, any potential $V_L^{per}$ of the form~\eqref{eq:energy_periodic}, with $V_2$ satisfying assumptions \eqref{eq:V2_smooth} -- \eqref{eq:V2_normalized} and $V_3$ satisfying assumptions \eqref{eq:V3_smooth} -- \eqref{eq:V3_angle72}, and any ground state $y_{\min} : H \rightarrow \mathbb{R}^2$ of $V_L^{per} (\cdot)$ subject to $y \in Y_L^{per}$, there exists a translation vector $\tau \in \mathbb{R}^2$ such that
\begin{equation*}
	\{y_{\min} (x) + \tau \ : \ x \in H\} = H.
\end{equation*}
\end{theorem}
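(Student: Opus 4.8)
The plan is to pin down the exact value of $\min V_L^{per}$ by matching a lower and an upper bound, and then to exploit the rigidity built into the lower bound: every term beyond the leading one is non-negative, so a minimizer must annihilate all of them, which forces the configuration to be a perfect honeycomb. Throughout I write $N := \#(H \cap LU) = 2L^2$ for the number of atoms in the reference cell (since $LU$ is a fundamental domain for $L\sqrt{3}A_2$ and $H$ carries two atoms per $\sqrt{3}A_2$-cell).

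First I would re-establish the local estimates of Section~\ref{sec:estimates} in the periodic setting. The minimum-distance bound \eqref{eq:min_dist} transfers: the argument of Lemma~\ref{lem:min_dist} goes through once one perturbs an atom together with all of its periodic images (sending the whole orbit off so that its mutual and external interactions vanish), since $V_L^{per}$ counts only one representative per orbit. With \eqref{eq:min_dist} in hand, Lemma~\ref{lem:n_nbrs} again bounds the number of neighbors by six, and the decomposition of Lemma~\ref{lem:energy_decomp}, the neighborhood bounds (Lemmas~\ref{lem:lb_all} and~\ref{lem:lb_defected}), the mid-range bound (Lemma~\ref{lem:mid-range}), and the rigidity bound on mid-range interactions (Lemmas~\ref{lem:rigidity} and~\ref{lem:ub_midrange}) are all purely per-atom statements that carry over unchanged. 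The outcome is the periodic analogue of \eqref{eq:main_ineq}:
\begin{equation*}
	V_L^{per}(\{y\}) \geq \left( -\tfrac{3}{2} + 3 V_3(2\pi/3) \right) N + \tfrac{\Delta_\epsilon}{2}(N - \# G_\epsilon) + \sum_{x \in G_\epsilon} \left( \sum_{j=1}^3 \tfrac{1}{2} |r_j - 1|^2 + \sum_{j=1}^3 \tfrac{m}{2} \Big| \theta_j - \tfrac{2\pi}{3} \Big|^2 \right),
\end{equation*}
the sum over $x \in G_\epsilon$ now ranging over regular atoms in $H \cap LU$. Crucially, unlike in the finite case there is no surface, so no isoperimetric lower bound is imposed on the defect count $N - \# G_\epsilon$.

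Next I would produce the matching upper bound. The identity embedding $y_0(x) = x$ lies in $Y_L^{per}$, realizes $H$ exactly, surrounds every atom by three first-neighbors at distance $1$ with all bond angles $2\pi/3$, and creates no mid- or long-range interactions; hence $V_L^{per}(\{y_0\}) = \left( -\tfrac{3}{2} + 3 V_3(2\pi/3) \right) N$ and therefore $\min V_L^{per} = \left( -\tfrac{3}{2} + 3 V_3(2\pi/3) \right) N$. Feeding this equality into the lower bound, and using $\Delta_\epsilon > 0$ together with the non-negativity of the elastic terms, forces $N - \# G_\epsilon = 0$ and all elastic contributions to vanish. Consequently, in any periodic ground state every atom has exactly three neighbors, every bond length equals $1$, and every bond angle equals $2\pi/3$; by Lemma~\ref{lem:rigidity} no atom has a neighbor in the mid-range annulus, so the only interactions present are first-neighbor bonds.

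Finally I would run a rigidity/propagation argument. Fixing one atom $o$ and setting $\tau := -y_{\min}(o)$, the three bond directions at $o$ are unit vectors mutually at angle $2\pi/3$, i.e.\ a rotated copy of the standard honeycomb vertex star; at each neighbor the third bond direction is determined by the other two through the exact $2\pi/3$ and unit-length constraints, so walking along the bond graph determines every position. The configuration is thus globally a honeycomb lattice up to a single rigid motion, and the periodic constraint $y_{\min}(x) - y_{\min}(x') = x - x'$ for $x - x' \in L\sqrt{3}A_2$ forces the associated rotation to be a symmetry of $H$, so that $\{y_{\min}(x) + \tau : x \in H\} = H$. I expect this last step to be the main obstacle: I must rule out disconnected components and check that the locally-forced honeycomb closes up consistently with the imposed periodicity. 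Disconnection is excluded because a finite honeycomb patch would have boundary atoms with fewer than three neighbors, contradicting $3$-regularity, while the minimum-distance bound together with the absence of mid-range neighbors prevents two distinct honeycomb sheets from coexisting; commensurability with $L\sqrt{3}A_2$ then follows from $y \in Y_L^{per}$.
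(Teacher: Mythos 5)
Your outline follows the paper's broad strategy (periodic analogue of \eqref{eq:main_ineq}, comparison with the identity map, then rigidity), but it has a genuine gap at precisely the step you dispose of in one sentence: the transfer of Lemma~\ref{lem:min_dist} and Lemma~\ref{lem:rigidity} to the periodic setting. Both are variational statements about ground states, and their proofs in Section~\ref{sec:estimates} consist of sending a particle (or cluster) to infinity so that all of its interactions vanish. In the periodic problem there is no ``infinity'': a competitor must remain in $Y_L^{per}$, so displacing the orbit $\Pi x$ leaves one representative in every fundamental cell, and for the displaced atoms' interactions to vanish you would need a point of the torus $\mathbb{R}^2 / L\sqrt{3}A_2$ at distance greater than $\frac{5}{3}$ (the support radius of $f$) from all remaining atoms. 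Such a point need not exist: the torus has area $\frac{3\sqrt{3}}{2}L^2 \approx 2.6\,L^2$, while the interaction disks of the other orbits have total area $\approx 8.7\,(2L^2-1)$, so they can cover it entirely; indeed the honeycomb itself has covering radius $1$, so in any near-honeycomb configuration there is no interaction-free spot at all. (Relatedly, your classification of Lemma~\ref{lem:rigidity} as a ``purely per-atom statement'' that ``carries over unchanged'' is incorrect: it is a statement about ground states whose proof is exactly such a variational displacement.) This is why the paper introduces the extended energy \eqref{eq:energy_periodic_subset}, defined for pairs $(X, \{y\})$ with $X$ an $L$-periodic \emph{subset} of $H$: removal of $\Pi x$ or $\Pi \mathcal{A}$ then becomes an admissible competitor, the change-of-energy formulas \eqref{eq:energy_change_A} and \eqref{eq:energy_change_x} replace ``sending to infinity'' (self-interactions of an orbit vanish because nonzero vectors of $L\sqrt{3}A_2$ have length at least $\sqrt{3} > \frac{5}{3}$), and Lemmas~\ref{lem:min_dist_periodic} and~\ref{lem:rigidity_periodic} are proved for minimizers of this enlarged problem; only after the comparison with $(H,\mathrm{id})$ does one learn a posteriori that the minimizing $X$ is all of $H$. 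Without some such enlargement of the competitor class, your first step does not go through, and everything downstream of it is unsupported.

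A second, smaller problem is your closing claim that the constraint $y \in Y_L^{per}$ forces the rotation produced by your propagation argument to be a symmetry of $H$. This is not justified, and it is in fact false for certain $L$: a rotation $R_\theta$ is compatible with the boundary conditions as soon as $L\sqrt{3}A_2 \subset R_\theta(\sqrt{3}A_2)$, which (identifying $A_2$ with the Eisenstein integers $\mathbb{Z}[\omega]$) holds whenever $Le^{-i\theta} \in \mathbb{Z}[\omega]$ has norm $L^2$. For $L = 7$ one may take $z = 8 + 3\omega$, giving $\theta \approx 21.8^{\circ}$, not a multiple of $60^{\circ}$; the corresponding equivariant bijection of $H$ onto $R_\theta H$ lies in $Y_L^{per}$, has every bond of length $1$ and every bond angle $\frac{2\pi}{3}$, hence is also a ground state, yet its image is not a translate of $H$. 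So the strongest conclusion your rigidity step (or any argument) can deliver is ``a honeycomb lattice, up to a rigid motion compatible with the periodicity'' --- which is all the paper's own proof establishes as well, since its final sentence asserts only that the configuration ``must be a honeycomb lattice.'' You should not expect to remove the rotation; rather, the propagation and connectedness argument you sketch (which is more detailed than what the paper provides) should end there.
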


The proof is based on a modified version of the inequality~\eqref{eq:main_ineq}. This inequality required Lemma~\ref{lem:min_dist}, a minimum inter-particle distance result, and Lemma~\ref{lem:rigidity}, a result stating that regular atoms cannot have mid-range interactions. When we proved these results earlier in this article, we considered the change in energy when a particle or set of particles was moved to infinity such that their mutual interactions went to zero. Since the configurations are infinite in the periodic case, this is no longer an option. Now, we need to establish these results when a particle and its periodic images are removed. We need to re-define the energy to allow for the removal of $L$-periodic sets.

Let $L \in \mathbb{N}$. A set $X \subset H$ is called $L$-periodic if $X + L \sqrt{3} A_2 = X$. We introduce an equivalence relation $\sim$ on subsets of an $L$-periodic set $X$ such that: two subsets $\omega, \omega' \subset X$ satisfy $\omega \sim \omega'$ if there is a vector $\tau \in L \sqrt{3} A_2$ for which $\omega' = \omega + \tau$. We say a map $y: X \rightarrow \mathbb{R}^2$ is $L$-periodic if $y(x + \tau) = y(x) + \tau$ for all $x \in X$, $\tau \in L \sqrt{3} A_2$. (This is equivalent to the previous definition of $Y_L^{per}$.) For an $L$-periodic map, the set of regular atoms $G_{\epsilon}$ and the set of midrange pairs $\mathcal{M}$ are periodic sets, and we can define the quotient sets $\widetilde{X} := X / \sim$, $\widetilde{G}_{\epsilon} := G_{\epsilon} / \sim$, and $\widetilde{\mathcal{M}} := \mathcal{M} / \sim$.

The new energy, defined for an $L$-periodic set $X \subset H$ is
\begin{align}
\label{eq:energy_periodic_subset}
	V_L^{per}(X, \{y\} ) = \frac{1}{2} \sum_{x \in X \cap LU} \sum_{\substack{x' \in X\\ x' \neq x}} e( \{x, x' \} ) + \frac{1}{2} \sum_{x \in X \cap LU} \sum_{\substack{x', x'' \in X\\ x, x', x'' \text{ distinct}}} a(x, x', x'').
\end{align}

As there are $2L^2$ particles in the reference cell $H \cap LU$, there are only $2^{2L^2}$ possible $L$-periodic sets $X$. Thus, a minimizer $(X_{\min}, y_{\min})$ of $V_L^{per}$ exists.

\subsection{Minimum distance result and non-existence of mid-range interactions for regular atoms}

We formulate versions of Lemmas~\ref{lem:min_dist} and~\ref{lem:rigidity} which apply to the energy $V_L^{per} (\cdot, \cdot)$. Before we state and prove these results rigorously, we describe how the change in the pair potential upon removing periodic subsets differs from the non-periodic case. For $\mathcal{F}, \mathcal{G} \subset X$, we shall use the notation $\Pi \mathcal{F} := \mathcal{F} + L \sqrt{3} A_2$ for the periodization of the set $\mathcal{F}$,\footnote{For simplicity, for a singleton $\{ x \}$, we will denote $\Pi \{ x \}$ by $\Pi x$.} and
\begin{equation*}
	e(\mathcal{F}, \mathcal{G}) := \sum_{x \in \mathcal{F}} \sum_{x' \in \mathcal{G} \setminus \{ x \}} e( \{ x, x' \} )
\end{equation*}
for the pair energy where the first sum runs over indices $\mathcal{F}$ and the second sum runs over indices $\mathcal{G}$.

For the minimum distance result, in the non-periodic case, we decomposed the total pair potential as
\begin{align*}
	\frac{1}{2} e(X, X) &= \frac{1}{2} e(\mathcal{A}, X)
		+ \frac{1}{2} e(X \setminus \mathcal{A}, X)\\
		&= \frac{1}{2} e(\mathcal{A}, \mathcal{A})
		+ \frac{1}{2} e(\mathcal{A}, X \setminus \mathcal{A})
		+ \frac{1}{2} e(X \setminus \mathcal{A}, \mathcal{A})
		+ \frac{1}{2} e(X \setminus \mathcal{A}, X \setminus \mathcal{A})\\
	&= \frac{1}{2} e(\mathcal{A}, \mathcal{A})
		+ e(\mathcal{A}, X \setminus \mathcal{A})
		+ \frac{1}{2} e(X \setminus \mathcal{A}, X \setminus \mathcal{A})
\end{align*}
because the middle two terms in the first equation can be combined by symmetry. Now, in the periodic case, we will assume that $\mathcal{A} \subset X \cap LU$. Then, we will decompose $\frac{1}{2} e(X \cap LU, X)$ as
\begin{equation}
\begin{split}
\label{eq:pair_decomp_periodic}
	\frac{1}{2} e(X \cap LU, X) = {}& \frac{1}{2} e(\mathcal{A}, X) + \frac{1}{2} e((X \cap LU) \setminus \mathcal{A}, X)\\
		= {}& \frac{1}{2} e(\mathcal{A}, \mathcal{A})
		+ \frac{1}{2} e(\mathcal{A}, (\Pi \mathcal{A}) \setminus \mathcal{A})
		+ \frac{1}{2} e(\mathcal{A}, X \setminus (\Pi \mathcal{A})) \\
		& + \frac{1}{2} e((X \cap LU) \setminus \mathcal{A}, \Pi \mathcal{A}) \\
		& + \frac{1}{2} e((X \cap LU) \setminus \mathcal{A}, X \setminus (\Pi \mathcal{A})).
\end{split}
\end{equation}

We claim that $e((X \cap LU) \setminus \mathcal{A}, \Pi \mathcal{A})$ and $e(\mathcal{A}, X \setminus (\Pi \mathcal{A}))$ are equal. To see this, note that for all $x \in (X \cap LU) \setminus \mathcal{A}, x' \in \Pi \mathcal{A}$, there exists $\tau \in L\sqrt{3} A_2$ with $x' - \tau \in \mathcal{A}$ and $x - \tau \in X \setminus (\Pi \mathcal{A})$. Since $y$ is $L$-periodic, we have $e(\{x' - \tau, x - \tau\}) = e(\{x, x'\})$. Therefore, $e((X \cap LU) \setminus \mathcal{A}, \Pi \mathcal{A}) = e(\mathcal{A}, X \setminus (\Pi \mathcal{A}))$. As a result, \eqref{eq:pair_decomp_periodic} becomes
\begin{equation*}
\begin{split}
	\frac{1}{2} e(X \cap LU, X) = {}& \frac{1}{2} e(\mathcal{A}, \mathcal{A})
		+ \frac{1}{2} e(\mathcal{A}, (\Pi \mathcal{A}) \setminus \mathcal{A})
		+ e(\mathcal{A}, X \setminus (\Pi \mathcal{A})) \\
		& + \frac{1}{2} e((X \cap LU) \setminus \mathcal{A}, X \setminus (\Pi \mathcal{A})).
\end{split}
\end{equation*}

If we remove the $L$-periodic set $\Pi \mathcal{A}$ from $X$, then the total pair energy is
\begin{equation*}
	\frac{1}{2} e((X \setminus (\Pi \mathcal{A})) \cap LU, X \setminus (\Pi \mathcal{A}))
	= \frac{1}{2} e((X \cap LU) \setminus \mathcal{A}, X \setminus (\Pi \mathcal{A})).
\end{equation*}

Therefore, the change in the pair energy is
\begin{equation}
\label{eq:energy_change_A}
\begin{split}
	&\frac{1}{2} e((X \setminus (\Pi \mathcal{A})) \cap LU, X \setminus (\Pi \mathcal{A})) - \frac{1}{2} e(X \cap LU, X)\\
	& \quad = -\frac{1}{2} e(\mathcal{A}, \mathcal{A})
		- \frac{1}{2} e(\mathcal{A}, (\Pi \mathcal{A}) \setminus \mathcal{A})
		- e(\mathcal{A}, X \setminus (\Pi \mathcal{A})).
\end{split}
\end{equation}

For the proof of the non-existence of mid-range interactions for regular atoms, we consider similar decompositions where the set $\mathcal{A}$ is replaced by the single particle $x$. In the non-periodic case, we decomposed the pair potential as
\begin{align*}
	\frac{1}{2} e(X, X) = e(x, X \setminus \{x\})
		+ \frac{1}{2} e(X \setminus \{x\}, X \setminus \{x\}). 
\end{align*}
Now, in the periodic case, we decompose it as
\begin{align*}
	\frac{1}{2} e(X \cap LU, X) &= \frac{1}{2} e(x, X) + \frac{1}{2} e((X \cap LU) \setminus \{x\}, X)\\
		&= \frac{1}{2} e(x, (\Pi x) \setminus \{x\})
		+ e(x, X \setminus (\Pi x))\\
		& \quad + \frac{1}{2} e((X \cap LU) \setminus \{x\}, X \setminus (\Pi x)). 
\end{align*}
The change in energy upon removing the set $\Pi x$ is
\begin{equation}
\label{eq:energy_change_x}
\begin{split}
	\frac{1}{2} e((X \setminus (\Pi x)) \cap LU, X \setminus (\Pi x)) - \frac{1}{2} e(X \cap LU, X)
	= -\frac{1}{2} e(x, (\Pi x) \setminus \{x\})
		- e(x, X \setminus (\Pi x)).
\end{split}
\end{equation}

We now state and prove the periodic version of the minimum distance result.
\begin{lemma}
\label{lem:min_dist_periodic}
	There exists a constant $\alpha_0 \in \left( 0, \frac{1}{3} \right)$ such that for all $\alpha \in (0, \alpha_0)$, all $L \in \mathbb{N}$, and all potentials of form~\eqref{eq:energy_periodic_subset}, with $V_2$ satisfying assumptions \eqref{eq:V2_smooth} -- \eqref{eq:V2_normalized} and $V_3$ satisfying assumptions \eqref{eq:V3_smooth} -- \eqref{eq:V3_angle72}, all ground states $(X_{\min}, y_{\min})$ of $V_L^{per} (\cdot, \cdot)$, where $y_{\min}$ is $L$-periodic, satisfy
\begin{equation}
\label{eq:min_dist_periodic}
	\min_{x \neq x'} |y_{\min}(x') - y_{\min}(x)| > 1 - \alpha.
\end{equation}
\end{lemma}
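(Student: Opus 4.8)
The plan is to mimic the non-periodic minimum-distance proof (Lemma~\ref{lem:min_dist}), but replace ``send the particles in a small ball to infinity'' with ``delete one $L$-periodic family of particles,'' using the energy-change bookkeeping already set up in~\eqref{eq:energy_change_A}. The key conceptual point is that in the periodic setting we cannot move finitely many atoms off to infinity, so instead we exploit the finiteness of the configuration space (there are only $2^{2L^2}$ periodic sets, so a minimizer $(X_{\min}, y_{\min})$ exists) and compare $V_L^{per}(X_{\min}, \{y_{\min}\})$ against $V_L^{per}(X_{\min}\setminus(\Pi\mathcal{A}), \{y_{\min}\})$, which must not be smaller.

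First I would, exactly as before, let $M$ be the maximum number of particles of $X_{\min}$ contained in a ball of diameter $1-\alpha$, center the ball at a point $\eta$ achieving the max, set $\mathcal{A} := y_{\min}^{-1}(B(\eta,\tfrac12(1-\alpha)))$ (choosing a representative window so that $\mathcal{A}\subset X_{\min}\cap LU$, which the decomposition~\eqref{eq:energy_change_A} assumes), and aim to show $M=1$. The removal of the periodic set $\Pi\mathcal{A}$ changes the \emph{pair} energy by the right-hand side of~\eqref{eq:energy_change_A}, namely $-\tfrac12 e(\mathcal{A},\mathcal{A}) - \tfrac12 e(\mathcal{A},(\Pi\mathcal{A})\setminus\mathcal{A}) - e(\mathcal{A}, X\setminus(\Pi\mathcal{A}))$; the three-body energy only decreases upon removing atoms, since $V_3, f \geq 0$, so that contribution to the energy change is $\leq 0$ and can be dropped from the inequality we need. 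Optimality of the minimizer forces the total energy change to be $\geq 0$, which after discarding the non-positive angular part yields
\begin{equation*}
	-\tfrac12 e(\mathcal{A},\mathcal{A}) - \tfrac12 e(\mathcal{A},(\Pi\mathcal{A})\setminus\mathcal{A}) - e(\mathcal{A}, X\setminus(\Pi\mathcal{A})) \geq 0.
\end{equation*}
The terms $\tfrac12 e(\mathcal{A},\mathcal{A})$ and $\tfrac12 e(\mathcal{A},(\Pi\mathcal{A})\setminus\mathcal{A})$ count (self- and periodic-image) pairs at distance $\leq 1-\alpha$, each contributing at least $\tfrac{1}{\alpha}$ by~\eqref{eq:V2_short_range}, giving a lower bound of order $\tfrac{1}{\alpha}M(M-1)$; the cross term $e(\mathcal{A}, X\setminus(\Pi\mathcal{A}))$ is bounded below by $-CM^2$ exactly as in Lemma~\ref{lem:min_dist}, by covering the relevant annulus around each atom of $\mathcal{A}$ with $O(1)$ balls of radius $\tfrac13$, each holding at most $M$ particles, and using $V_2 \geq -1$.

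Assembling these gives $\tfrac{1}{\alpha}M(M-1) \leq 2CM^2$, i.e. $(1-1/M)\leq 2C\alpha$, and choosing $\alpha_0 = \tfrac{1}{4C}$ forces $M=1$, which is the claim~\eqref{eq:min_dist_periodic}. The main obstacle I expect is bookkeeping rather than analysis: one must verify that the periodic self-interaction term $\tfrac12 e(\mathcal{A},(\Pi\mathcal{A})\setminus\mathcal{A})$ genuinely contributes \emph{nonnegatively} to the left-hand side (it does, since any two atoms within $\mathcal{A}$ or between $\mathcal{A}$ and its images at distance $< 1-\alpha$ carry energy $\geq \tfrac{1}{\alpha} > 0$, so dropping it only weakens the inequality in the safe direction), and that the annulus for the cross term is still covered by $O(1)$ balls despite the periodic images — but since the images of $\mathcal{A}$ sit at distance $\geq L\sqrt3\,|a_1|$ away, for the relevant $L$ they lie outside the interaction range $R=\tfrac32$ and do not inflate the covering constant $C$ beyond the non-periodic value. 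Thus the only real care is in matching the minimizer's window so that $\mathcal{A}\subset X_{\min}\cap LU$ and the decomposition~\eqref{eq:energy_change_A} applies verbatim.
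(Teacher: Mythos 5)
Your overall strategy is exactly the paper's: remove the periodized set $\Pi\mathcal{A}$, use the decomposition~\eqref{eq:energy_change_A} together with the ground-state property, bound the in-ball pairs below by $\frac{1}{\alpha}M(M-1)$ via~\eqref{eq:V2_short_range}, bound the cross term below by $-CM^2$ via the covering argument of Lemma~\ref{lem:min_dist}, and conclude $M=1$ for $\alpha$ small. The genuine gap is in your treatment of the periodic self-interaction term $\tfrac12 e(\mathcal{A},(\Pi\mathcal{A})\setminus\mathcal{A})$. Your stated justification --- that pairs ``at distance $<1-\alpha$ carry energy $\geq \tfrac1\alpha$, so dropping it only weakens the inequality in the safe direction'' --- is a non sequitur: pairs between $\mathcal{A}$ and its periodic images are \emph{not} at distance $\leq 1-\alpha$ (the images are translates by vectors of length at least $L\sqrt3$), and an atom-to-image pair landing at a mid-range distance in $(1-\alpha,\tfrac32)$ has energy that can be as negative as $-1$. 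So nonnegativity of this term does not follow from~\eqref{eq:V2_short_range}, and folding it into the $\tfrac1\alpha M(M-1)$ bound is incorrect. What is actually true, and what the paper proves as~\eqref{eq:periodic_interaction}, is that the term is identically \emph{zero} once $L\geq 3$: by $L$-periodicity $y(x'+\xi)=y(x')+\xi$, so the reverse triangle inequality gives $|y(x)-y(x')-\xi|\geq|\xi|-|y(x)-y(x')|\geq L-(1-\alpha)\geq 2>\tfrac32$, and the cutoff~\eqref{eq:V2_long_range} kills every summand.

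Your fallback remark (the images ``lie outside the interaction range $R=\tfrac32$ \ldots for the relevant $L$'') is the right idea but does not close the gap. The relevant distance is not the lattice translation length but the atom-to-image distance, which is only bounded below by $L\sqrt3-(1-\alpha)$; for $L=1$ this is about $0.73+\alpha<\tfrac32$, so image interactions genuinely can be in range and negative, and your inequality $\tfrac1\alpha M(M-1)\leq 2CM^2$ is then not established. Since the lemma is claimed for all $L\in\mathbb{N}$, this case must be addressed: the paper does so by reducing without loss of generality to $L\geq 3$ (the cases $L'\in\{1,2\}$ are run through $L=3L'$), after which the image term vanishes exactly. Alternatively, you could repair $L=1$ directly by treating the image term like the cross term --- it is bounded below by $-C'M^2$ by the same covering argument, which only changes the constant defining $\alpha_0$ --- but some such argument must be supplied; as written, the proof rests on an incorrect sign claim for the image term and is incomplete for small $L$.
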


\begin{proof}
	This lemma actually follows from the more general Lemma 7.1 in \cite{e09}. As with Lemma~\ref{lem:min_dist}, the proof is slightly simpler for our potential, so we include it here.

Without loss of generality, assume that $L \geq 3$. The cases $L' \in \{ 1, 2 \}$ can be treated by $L = 3 L'$. Define $M := \max_{\eta \in \mathbb{R}^2} \# \{ x : y(x) \in B(\eta, \frac{1}{2} (1-\alpha)) \}$. We wish to show that $M=1$.

We assume without loss of generality that $\eta = \frac{L}{2} a_1 + \frac{L}{2} a_2$, i.e.\ $\eta$ is at the center of the reference cell $LU$. Define $B_M := B(\eta, \frac{1}{2} (1-\alpha))$ and $\mathcal{A} := y^{-1}(B_M)$. Because $B_M$ is centered in the reference cell $LU$ and $L \geq 3$, $\mathcal{A} \subset X \cap LU$. Therefore, using \eqref{eq:energy_change_A} for the pair energy, the total change in energy, which must be non-negative since we are in a global minimizer, is
\begin{equation*}
	- \frac{1}{2} e(\mathcal{A}, \mathcal{A})
	- \frac{1}{2} e(\mathcal{A}, (\Pi \mathcal{A}) \setminus \mathcal{A})
	- e(\mathcal{A}, X \setminus (\Pi \mathcal{A}))
	- \frac{1}{2} \sum_{\substack{x \in X \cap LU, x', x'' \in X,\\ \{x, x', x''\}  \cap (\Pi \mathcal{A}) \neq \emptyset\\ x, x', x'' \text{ distinct}}} a(x, x', x'')
	\geq 0.
\end{equation*}

So, we have
\begin{equation}
\label{eq:min_dist_ineq_periodic}
	- e(\mathcal{A}, \mathcal{A})
	- e(\mathcal{A}, (\Pi \mathcal{A}) \setminus \mathcal{A})
	- \sum_{\substack{x \in X \cap LU; x', x'' \in X,\\ \{x, x', x''\}  \cap (\Pi \mathcal{A}) \neq \emptyset\\ x, x', x'' \text{ distinct}}} a(x, x', x'')
	\geq  2 e(\mathcal{A}, X \setminus (\Pi \mathcal{A})).
\end{equation}

As in the proof of Lemma~\ref{lem:min_dist}, we have
\begin{equation*}
e(\mathcal{A}, \mathcal{A}) = \sum_{x, x' \in \mathcal{A}} e(\{x, x'\}) \geq \frac{1}{\alpha} M (M-1).
\end{equation*}

Also, since $f$ and $V_3$ are non-negative, we have
\begin{equation*}
\sum_{\substack{x \in X \cap LU, x', x'' \in H,\\ \{x, x', x''\}  \cap (\Pi \mathcal{A}) \neq \emptyset\\ x, x', x'' \text{ distinct}}} a(x, x', x'') \geq 0.
\end{equation*}

Now, we estimate the interaction between the particles in $\mathcal{A}$ and their periodic images. We have
\begin{align*}
e(\mathcal{A}, (\Pi \mathcal{A}) \setminus \mathcal{A})
= \sum_{x \in \mathcal{A}} \sum_{x' \in (\Pi \mathcal{A}) \setminus \mathcal{A}} e(\{x, x'\})
& = \sum_{x \in \mathcal{A}} \sum_{\xi \in L \sqrt{3} A_2 \setminus \{ 0 \} } \sum_{x' \in \mathcal{A} } e(\{x, x' + \xi \})
\end{align*}
By the periodic boundary conditions, for $\xi \in L \sqrt{3} A_2 \setminus \{ 0 \} $, $y(x' + \xi) = y(x') + \xi$. Therefore,
\begin{equation*}
	e(\{x, x' + \xi \}) = V_2(|y(x) - y(x' + \xi)|) = V_2(|y(x) - y(x') - \xi|).
\end{equation*}
Since $x, x' \in \mathcal{A}$, we have $y(x), y(x') \in B_M$, and $|y(x) - y(x')| \leq 1 - \alpha \leq 1$. Then, by the reverse triangle inequality, for $L \geq 3$,
\begin{equation*}
	|y(x) - y(x') - \xi| \geq |\xi| - |y(x) - y(x')| \geq L - (1 - \alpha) \geq L - 1 \geq 2.
\end{equation*}
Because of assumption \eqref{eq:V2_long_range}, $V_2(|y(x) - y(x' + \xi)|) = 0$. Thus, we have
\begin{align}
\label{eq:periodic_interaction}
e(\mathcal{A}, (\Pi \mathcal{A}) \setminus \mathcal{A})
& = \sum_{x \in \mathcal{A}} \sum_{\xi \in L \sqrt{3} A_2 \setminus \{ 0 \} } \sum_{x' \in \mathcal{A} } e(\{x, x' + \xi \}) = 0.
\end{align}

Combining these results, the inequality~\eqref{eq:min_dist_ineq_periodic} becomes
\begin{equation*}
	-\frac{1}{\alpha} M (M-1) \geq 2 e(\mathcal{A}, X \setminus (\Pi \mathcal{A}),
\end{equation*}
similar to the inequality~\eqref{eq:ineq2_min_dist} in the proof of Lemma~\ref{lem:min_dist}.

As in that proof, we can argue that there exists $C>0$, independent of $\alpha_0$, such that
\begin{equation*}
	e(\mathcal{A}, X \setminus (\Pi \mathcal{A})) = \sum_{x \in \mathcal{A}, x' \in X \setminus (\Pi \mathcal{A})} e(\{x, x'\}) \geq -CM^2.
\end{equation*}

Therefore,
\begin{equation*}
	-\frac{1}{\alpha} M (M-1) \geq -2 C M^2,
\end{equation*}
which is equivalent to
\begin{equation*}
	(1 - 1/M) \leq 2 C \alpha.
\end{equation*}
Taking $\alpha_0 = \frac{1}{4C}$, if this inequality holds for $\alpha < \alpha_0$, then we must have $M=1$. This proves the result.
\end{proof}

Now, we state and prove the periodic version of the result that regular atoms do not have mid-range interactions.
\begin{lemma}
\label{lem:rigidity_periodic}
There exists $\alpha_0 \in \left( 0, \frac{1}{4} \right)$ and $\epsilon > 0$, such that
\begin{itemize}
	\item for all $\alpha$ with $0 < \alpha < \alpha_0$,
	\item for all potentials $V$ of form~\eqref{eq:energy_periodic_subset}, with $V_2$ satisfying assumptions \eqref{eq:V2_smooth} -- \eqref{eq:V2_normalized} and $V_3$ satisfying assumptions \eqref{eq:V3_smooth} -- \eqref{eq:V3_angle72},
	\item for all $L \in \mathbb{N}$,
	\item for all ground states $(X_{\min}, y_{\min})$ of $V_L^{per}(\cdot)$, where $y_{\min}$ is $L$-periodic, and
	\item for all $x \in G_\epsilon, x' \in X_{\min}$ with $x' \neq x$,
\end{itemize}
we have
\begin{itemize}
	\item $|y(x') - y(x)| \leq 1 + \alpha$, or
	\item $|y(x') - y(x)| \geq \frac{3}{2}$.
\end{itemize}
\end{lemma}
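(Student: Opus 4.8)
The plan is to mirror the non-periodic argument of Lemma~\ref{lem:rigidity} almost verbatim, the one conceptual change being that the energetic comparison ``move $x$ to infinity'' is replaced by ``remove the $L$-periodic orbit $\Pi x$,'' using the pair-energy bookkeeping already recorded in~\eqref{eq:energy_change_x}. First I would fix $\alpha_0$ and $\epsilon$ so that the periodic minimum-distance result Lemma~\ref{lem:min_dist_periodic} applies (taking $\epsilon = \pi/15$ and $\alpha_0 < \tfrac14$ as before), and, using periodicity of $G_\epsilon$, choose a representative $x \in G_\epsilon \cap LU$. Arguing by contradiction, I would suppose there is a $p$ with $1+\alpha < |y(p)-y(x)| < \tfrac32$. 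The purely geometric step showing that $p$ together with one of the three neighbors $p' \in \{x_1,x_2\} \subset \mathcal{N}(x)$ forms a triple $(x,p,p')$ with $\theta_{p,x,p'} \le \tfrac{2\pi}{5}$ is identical to the non-periodic case: it uses only that $x \in G_\epsilon$ has three neighbors with angles within $\epsilon$ of $\tfrac{2\pi}{3}$ and that no two particles lie on a common ray, a consequence of~\eqref{eq:min_dist_periodic} for $\alpha_0 < \tfrac14$.

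The energetic step is where the periodic structure enters. Since $(X_{\min}, y_{\min})$ minimizes $V_L^{per}$ over $L$-periodic sets and $X_{\min}\setminus\Pi x$ is again $L$-periodic, the configuration obtained by deleting $\Pi x$ is an admissible competitor, so the change in energy on removal is non-negative. For the pair part I would invoke~\eqref{eq:energy_change_x}, bounding its two contributions as follows: the self-interaction $e(x,(\Pi x)\setminus\{x\})$ vanishes identically because the nearest periodic image of $x$ sits at distance $\ge L\sqrt3 \ge \sqrt3 > R$, so $V_2 = 0$ there by~\eqref{eq:V2_long_range} (note that no $L \ge 3$ reduction is needed here, unlike in Lemma~\ref{lem:min_dist_periodic}); and $e(x, X\setminus(\Pi x))$ is bounded below exactly as in~\eqref{eq:rigidity_edge}, namely by $-3 - C\alpha$, since $x$ has three genuine neighbors (each $\ge -1$) and at most $C$ mid-range partners in the annulus $B(y(x),\tfrac32)\setminus B(y(x),1+\alpha)$ (each $\ge -\alpha$ by~\eqref{eq:V2_mid_range}), while the periodic images all lie outside the cutoff.

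For the three-body part I would write the change in triple energy on removing $\Pi x$ as $-\tfrac12$ times the sum of $a(w,w',w'')$ over all distinct triples with $w \in X\cap LU$, $w',w'' \in X$ and $\{w,w',w''\}\cap \Pi x \ne \emptyset$, every summand of which is non-negative. Among these removed triples are the two with central atom $w = x$ (which lies in both $X\cap LU$ and $\Pi x$), namely $(x,p,p')$ and $(x,p',p)$; since $|y(p)-y(x)| < \tfrac32 = R_1$ and $|y(p')-y(x)| \le 1+\alpha < R_1$, both cutoff factors equal $1$, so by~\eqref{eq:V3_symmetric} their combined contribution is $V_3(\theta_{p,x,p'}) \ge V_3(\tfrac{2\pi}{5}) \ge 4$ using~\eqref{eq:V3_angle72}. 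Discarding the remaining non-negative terms, the removed three-body energy is at least $4$. Assembling the pieces, non-negativity of the total energy change gives $0 \le 0 + (3 + C\alpha) - 4 = -1 + C\alpha$, which fails once $\alpha_0 \le \tfrac{1}{2C}$, the desired contradiction.

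The main obstacle I anticipate is bookkeeping rather than conceptual: correctly isolating the change in the three-body energy under removal of the full orbit $\Pi x$ (the paper records~\eqref{eq:energy_change_x} only for the pair energy, so the analogous triple identity must be written out, much as the removed triple energy was handled in the proof of Lemma~\ref{lem:min_dist_periodic}), and verifying that the critical small-angle triple $(x,p,p')$ genuinely survives as a deleted term with both cutoff factors equal to one. Once that accounting is in place, every quantitative estimate is inherited verbatim from the proof of Lemma~\ref{lem:rigidity}.
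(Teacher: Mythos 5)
Your proposal is correct and follows essentially the same route as the paper's proof: contradiction via the small-angle triple $(x,p,p')$ inherited from Lemma~\ref{lem:rigidity}, then non-negativity of the energy change upon removing the orbit $\Pi x$, with the pair part handled by \eqref{eq:energy_change_x} (self-interaction zero, mid-range bounded by $-3 - C\alpha$) and the three-body part bounded below by $V_3(2\pi/5) \geq 4$. Your observation that the $L \geq 3$ reduction is unnecessary here is also sound --- since only a single particle's orbit is removed, its images lie at distance at least $L\sqrt{3} \geq \sqrt{3} > \tfrac{3}{2}$ for every $L$ --- whereas the paper keeps that reduction merely to reuse \eqref{eq:periodic_interaction} verbatim from Lemma~\ref{lem:min_dist_periodic}.
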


\begin{proof}
% Summary
As in the proof of Lemma~\ref{lem:min_dist_periodic}, assume without loss of generality that $L \geq 3$. Let $\alpha_0 > 0$ be such that the inequality~\eqref{eq:min_dist_periodic} in Lemma~\ref{lem:min_dist_periodic} holds. Let $V$ satisfy the given assumptions, and let $y : X \rightarrow \mathbb{R}^2$ be a ground state. Let $x \in G_\epsilon$. To show that all particles are either closer than $1 + \alpha$ or further than $\frac{3}{2}$ away from $x$, we will use a proof by contradiction. Suppose that $p$ is such that $1 + \alpha < |y(p) - y(x)| < \frac{3}{2}$. As in the proof of Lemma~\ref{lem:rigidity}, we can show that $p$ is part of a triple $(x, p, p')$ with a bond angle less than $\frac{2\pi}{5}$. We now show that this contradicts the fact that the configuration is a ground state.

% Energetic argument
% Change in energy
The change in energy upon removing the set $\Pi x$ must be non-negative since we are in a ground state. Using \eqref{eq:energy_change_x} for the pair energy, the total change in energy is
\begin{equation}
\label{eq:energyChange_periodic}
	-\frac{1}{2} e \left(x, \Pi x \setminus \{x\} \right)
	- e \left(x, X \setminus (\Pi x) \right)
	- \frac{1}{2} \sum_{\substack{\bar{x} \in X \cap LU, x', x'' \in X,\\ \{\bar{x}, x', x''\}  \cap (\Pi x) \neq \emptyset\\ \bar{x}, x', x'' \text{ distinct}}} a(\bar{x}, x', x'') \geq 0.
\end{equation}

% Estimate on interaction with periodic images
As in Equation~\eqref{eq:periodic_interaction} from the proof of Lemma~\ref{lem:min_dist_periodic}, for the interaction between $x$ and its periodic images, we have
\begin{equation}
\label{eq:rigidity_image_periodic}
	e \left(x, \Pi x \setminus \{x\} \right) = 0.
\end{equation}

% Estimate on edge energy
As in Equation~\eqref{eq:rigidity_edge} from the proof of Lemma~\ref{lem:rigidity}, we have
\begin{equation}
\label{eq:rigidity_edge_periodic}
	e \left(x, X \setminus (\Pi x) \right) \geq -3 - C\alpha.
\end{equation}

% Estimates on angle energy
Also, since the angle energy is non-negative, and there is at least one bond angle $\theta_{p, x, p'} \leq \frac{2\pi}{5}$ with $|y(p) - y(x)| < \frac{3}{2}$ and $|y(p') - y(x)| < \frac{3}{2}$, by assumption~\eqref{eq:V3_angle72}, we have
\begin{equation}
\label{eq:rigidity_angle_periodic}
	\frac{1}{2} \sum_{\substack{\bar{x} \in X \cap LU, x', x'' \in X,\\ \{\bar{x}, x', x''\}  \cap (\Pi x) \neq \emptyset\\ \bar{x}, x', x'' \text{ distinct}}} a(\bar{x}, x', x'')
	\geq \frac{1}{2} (a(x, p, p') + a(x, p', p)) = V_3(\theta_{p, x, p'}) \geq V_3 \left( \frac{2\pi}{5} \right) \geq 4.
\end{equation}

% Estimate on change in energy
Therefore, substituting \eqref{eq:rigidity_image_periodic}, \eqref{eq:rigidity_edge_periodic}, and \eqref{eq:rigidity_angle_periodic} into \eqref{eq:energyChange_periodic}, we have
\begin{equation*}
	0 \leq 0 + \left( 3 + C\alpha \right) - 4 = -1 + C\alpha.
\end{equation*}

% Contradiction
Choosing $\alpha_0$ to be the lesser of the previous value and $\frac{1}{2C}$, we have for $0 < \alpha < \alpha_0$ that $-\frac{1}{2} > 0$, which is a contradiction. Thus, for a ground state, there can be no such point $p$ with $1 + \alpha \leq |y(p) - y(x)| \leq \frac{3}{2}$. This proves the claim.
\end{proof}

\subsection{Honeycomb lattice formation using the main estimate and comparison with identity map}
We are now able to prove that the ground state of the energy subject to periodic boundary conditions is a honeycomb lattice.

\begin{proof}[Proof of Theorem~\ref{thm:periodic}]
Using Lemmas~\ref{lem:min_dist_periodic} and~\ref{lem:rigidity_periodic}, we get an estimate similar to \eqref{eq:main_ineq}, proceeding by a similar argument. The main difference is that we must replace $G_{\epsilon}$ and $\mathcal{M}$ by their equivalence classes $\widetilde{G}_{\epsilon}$ and $\widetilde{\mathcal{M}}$, and we replace $N = \# X_N$ by $\# \widetilde{X}$.

Using the same steps as in Section~\ref{sec:energy_decomp}, we estimate the energy as
\begin{align}
\label{eq:energy_decomp_periodic}
	V_L^{per}(X, \{y\} )
	&\geq \sum_{x \in G_\epsilon \cap LU} V_{\mathcal{N}} (x)
	+ \sum_{x \in G_\epsilon^c \cap LU} V_{\mathcal{N}} (x)
	+ \frac{1}{2} \sum_{x \in X \cap LU}\ \sum_{\substack{x' \notin \mathcal{N}(x)\\ x' \neq x}} e(\{ x, x' \}).
\end{align}

For the mid-range interactions, we want the equivalent of Lemma~\ref{lem:mid-range}:
\begin{equation*}
	\frac{1}{2} \sum_{x \in X \cap LU}\ \sum_{\substack{x' \notin \mathcal{N}(x)\\ x' \neq x}} e(\{ x, x' \})
	\geq - \alpha \# \widetilde{\mathcal{M}}.
\end{equation*}
This will be true as long as
\begin{equation*}
	\# \widetilde{\mathcal{M}} = \frac{1}{2} \sum_{x \in X \cap LU}\ \sum_{\substack{x' \in X\\ 1 + \alpha < |y(x')-y(x)| < \frac{3}{2}}} 1.
\end{equation*}

If $x \in X \cap LU$, $x' \in X$, and $1 + \alpha < |y(x')-y(x)| < \frac{3}{2}$, then $\{ x, x' \} \in \mathcal{M}$, so the sum will only count elements of $\mathcal{M}$, which can be viewed as representatives of $\widetilde{\mathcal{M}}$. Now, consider an equivalence class $[\omega] \in \widetilde{\mathcal{M}}$. We have two different cases.

\textbf{Case 1:} $[\omega]$ has a representative with both elements in $X \cap LU$. This representative, and therefore the equivalence class, will be counted exactly twice in the sum.

\textbf{Case 2:} $[\omega]$ has no representative with both elements in $X \cap LU$. Then, $[\omega]$ has two distinct representatives that have exactly one element in $X \cap LU$. To see this, suppose one of these representatives is $\{x, x'\}$ with $x \in X \cap LU$ and $x' \in X \setminus (X \cap LU)$. Then, since $X$ is $L$-periodic, there exists a unique $\tau \in L \sqrt{3} A_2 \setminus \{0\}$ such that $x' + \tau \in X \cap LU$. We have $x + \tau \in X$, and $\{x + \tau, x' + \tau \} \sim \{x, x'\}$. If $x' + \tau = x$, we need to ensure that $x + \tau \neq x'$, so that these are not the same pair. But $x + \tau = x' + 2\tau \neq x'$, so $\{x + \tau, x' + \tau \} \neq \{x, x'\}$. Since the sum counts both of the representatives that have exactly one element in $X \cap LU$, the equivalence class is counted twice.

We see that in either case, the equivalence class will be counted twice in the sum, so that a factor of $\frac{1}{2}$ is needed to count $\widetilde{\mathcal{M}}$.

The same argument can be used to show
\begin{align*}
	\# \widetilde{\mathcal{M}}_d
	&= \frac{1}{2} \sum_{x \in G_{\epsilon}^c \cap LU} \sum_{\substack{x' \in G_{\epsilon}^c\\ 1+\alpha < |y(x')-y(x)| < \frac{3}{2}}} 1,
\end{align*}
from which we can derive
\begin{equation*}
	\# \widetilde{\mathcal{M}} = \# \widetilde{\mathcal{M}}_d \leq C (\# \widetilde{X} - \# \widetilde{G}_{\epsilon}),
\end{equation*}
as in Lemma~\ref{lem:ub_midrange}.

Using the estimate on non-neighbor pairs in \eqref{eq:energy_decomp_periodic}, we have
\begin{align}
	V_L^{per}(X, \{y\} )
	&\geq \sum_{x \in G_\epsilon \cap LU} V_{\mathcal{N}} (x)
	+ \sum_{x \in G_\epsilon^c \cap LU} V_{\mathcal{N}} (x)
	- \alpha \# \widetilde{\mathcal{M}}.
\end{align}

Using Lemmas~\ref{lem:lb_all} and~\ref{lem:lb_defected} to estimate the neighborhood energies and then simplifying, this becomes
\begin{align*}
	V &\geq	
	\left( -\frac{3}{2} + 3 V_3(2\pi/3) \right) \# (X \cap LU) + \Delta_\epsilon (\# (X \cap LU) - \# (G_{\epsilon} \cap LU))\\
	&\qquad \qquad + \sum_{x \in G_\epsilon \cap LU} \left( W_{e}(x) + W_{a}(x) \right) - \alpha \# \widetilde{\mathcal{M}}\\
	&= \left( -\frac{3}{2} + 3 V_3(2\pi/3) \right) \# \widetilde{X}+ \Delta_\epsilon (\# \widetilde{X} - \# \widetilde{G}_{\epsilon}) + \sum_{x \in \widetilde{G}_{\epsilon}} \left( W_{e}(x) + W_{a}(x) \right) - \alpha \# \widetilde{\mathcal{M}},
\end{align*}
where we have used $\#(X \cap LU) = \# \widetilde{X}$ and $\# (G_{\epsilon} \cap LU) = \# \widetilde{G}_{\epsilon}$.

Using the estimate on the number of mid-range pairs, there exists $\epsilon > 0$ such that this becomes
\begin{align*}
	V &\geq
	\left( -\frac{3}{2} + 3 V_3(2\pi/3) \right) \# \widetilde{X} + \Delta_\epsilon (\# \widetilde{X} - \# \widetilde{G}_{\epsilon}) + \sum_{x \in \widetilde{G}_{\epsilon}} \left( W_{e}(x) + W_{a}(x) \right) - C \alpha (\# \widetilde{X} - \# \widetilde{G}_{\epsilon}).
\end{align*}

Choosing $\alpha_0 = \frac{1}{2C}\Delta_\epsilon$ so that $C \alpha < \frac{1}{2}\Delta_\epsilon$, as before, we have
\begin{equation}
\label{eq:main_ineq_periodic}
\begin{split}
	V(X_{\min}, y_{\min}) & \geq
	\left( -\frac{3}{2} + 3 V_3(2\pi/3) \right) \# \widetilde{X} + \frac{\Delta_\epsilon}{2} (\# \widetilde{X} - \# \widetilde{G}_{\epsilon})\\
	& \qquad + \sum_{x \in \widetilde{G}_\epsilon} \left( \sum_{j=1}^3 \frac{1}{2} \left| r_j - 1 \right|^2 + \sum_{j=1}^3 \frac{m}{2} \left| \theta_j - \frac{2\pi}{3} \right|^2 \right).
\end{split}
\end{equation}

We compare to the energy of the competitor $X = H$, $y(x) = x$, which has energy
\begin{equation*}
	V(X, y) = \left( -\frac{3}{2} + 3 V_3(2\pi/3) \right) \left(2 L^2 \right).
\end{equation*}
(Recall that the set $H \cap LU$ contains $2 L^2$ particles.) Thus, we obtain an inequality for the minimum energy:
\begin{equation*}
	\left( -\frac{3}{2} + 3 V_3(2\pi/3) \right) \left( 2 L^2 \right) \geq V(X_{\min}, y_{\min}).
\end{equation*}

Combining this with \eqref{eq:main_ineq_periodic}, we have
\begin{align*}
	0 &\geq
	\left( -\frac{3}{2} + 3 V_3(2\pi/3) \right) (\# \widetilde{X} - 2 L^2) + \frac{\Delta_\epsilon}{2} (\# \widetilde{X} - \# \widetilde{G}_{\epsilon})\\
	& \qquad \qquad + \sum_{x \in \widetilde{G}_\epsilon} \left( \sum_{j=1}^3 \frac{1}{2} \left| r_j - 1 \right|^2 + \sum_{j=1}^3 \frac{m}{2} \left| \theta_j - \frac{2\pi}{3} \right|^2 \right)
\end{align*}

Since $-\frac{3}{2} + 3 V_3(2\pi/3) < 0$ by assumption~\eqref{eq:V3_angle120} and $\# \widetilde{X} \leq 2 L^2$, the first term will be positive unless $\widetilde{X} = 2 L^2$. Since the other terms are non-negative, this implies that there are $2 L^2$ atoms. If the first term is zero, the other terms must be zero in order to satisfy the inequality. Therefore, there are zero defected atoms (i.e.\ all atoms have 3 neighbors), and all bond lengths are 1 and all bond angles are $\frac{2\pi}{3}$. A configuration with these properties must be a honeycomb lattice.
\end{proof}

\appendix
\section{Appendix: Brenner potential}
\label{sec:brenner}
Since this article is motivated by the use of the Brenner potential to model interactions in carbon nanostructures such as graphene and carbon nanotubes, we review its definition here. The Brenner potential \cite{brenner90} is
\begin{equation}
	V = \frac{1}{2} \sum_{x \in X} \sum_{\substack{x' \in X\\ x' \neq x}} V_R(|y(x') - y(x)|) - \bar{B}_{x, x'} V_A(|y(x') - y(x)|).
\end{equation}
This is not simply a pair potential, since the bond order $\bar{B}_{x, x'}$ depends on triples.

The repulsive and attractive potentials are
\begin{align}
	V_R(r) &= f(r) \frac{D}{S-1} \exp \left( -\sqrt{2 S} \beta (r - R_e) \right), \text{ and} \\
	V_A(r) &= f(r) \frac{D S}{S-1} \exp \left( -\sqrt{2/S} \beta (r - R_e) \right).
\end{align}
$R_e$ is the equilibrium bond length when $\bar{B} = 1$ and $D$, $S$, and $\beta$ are parameters that control the shape of the potential.

The range of the potential is limited by a cutoff function $f$:
\begin{equation}
	f(r) =
	\begin{cases}
	1 & \text{if $r < R_1$}\\
	\frac{1}{2} \left( 1 + \cos( \pi (r - R_1)/(R_2 - R_1) ) \right) & \text{if $R_1 < r < R_2$}\\
	0 & \text{if $R_2 < r$}.
	\end{cases}
\end{equation}
The variable $r$ represents the distance between atoms, and the parameters $R_1$ and $R_2$ define the interval where the function decreases from 1 to 0.

In addition, the Brenner potential involves the bond order\footnote{We consider the simplified form of the bond order used in other work, such as \cite{shibuta03} and \cite{zhang02}.} $\bar{B}$, which involves triples of carbon atoms. The bond order involves the parameters $\delta$, $a_0$, $c_0$, and $d_0$. The bond order $\bar{B}_{x, x'} = \frac{1}{2} ( B_{x, x'} + B_{x', x} ) $, where
\begin{equation}
	B_{x, x'} = \left( 1 + \sum_{\substack{x'' \in X\\ x \neq x'' \neq x'}} G(\theta_{x', x, x''}) f(|y(x'') - y(x)|) \right)^{- \delta}.
\end{equation}

The angular function $G$ is defined as
\begin{equation}
	G(\theta) = a_0 \left( 1 + \frac{c_0^2}{d_0^2} - \frac{c_0^2}{d_0^2 + (1 + \cos \theta)^2} \right).
\end{equation}
$G(\theta)$ takes its minimum value when $\theta = \pi$, which in turn gives a greater value for $B_{ij}$ and a lesser value for $V$.

The first set of parameter values given in \cite{brenner90} is $D = 6.325$~eV, $S = 1.29$, $\beta = 1.5$~\AA\textsuperscript{-1}, $R_{e} = 1.315$~\AA, $R_{1} = 1.7$~\AA, $R_{2} = 2.0$~\AA, $\delta = 0.80469$, $a_0 = 0.011304$, $c_0 = 19.0$, and $d_0 = 2.5$. The second set of parameter values is $D = 6.0$~eV, $S = 1.22$, $\beta = 2.1$~\AA\textsuperscript{-1}, $R_{e} = 1.39$~\AA, $R_{1} = 1.7$~\AA, $R_{2} = 2.0$~\AA, $\delta = 0.5$, $a_0 = 0.00020813$, $c_0 = 330$, and $d_0 = 3.5$.

For a honeycomb lattice, we find that the lattice parameter minimizing the Brenner potential is
\begin{equation*}
	r = R_e - \frac{1}{\beta} \frac{\sqrt{S/2}}{S-1} \ln B_0,
\end{equation*}
where $B_0 = \left( 1 + 2 G(\frac{2\pi}{3}) \right)^{-\delta}$ is the bond order of any bond in a honeycomb lattice when the lattice parameter is less than the cutoff $R_1$.

% Acknowledgements
\textit{Acknowledgements.} B.~Farmer was supported under NSF grant DMS-1317730 and US-DOE grant DE-SC0012733. S.~Esedo\={g}lu was supported under NSF grant DMS-1317730. P.~Smereka was supported under NSF grant DMS-1417053. B.~Farmer and S.~Esedo\={g}lu express their gratitude to their collaborator, mentor, and friend Peter Smereka, who passed away during the final stages of preparation of this manuscript.

% BibTeX users please use
\bibliographystyle{plain}
\bibliography{references}

\end{document}